\newtheorem{theorem}{Theorem}
\newtheorem{definition}[theorem]{Definition}
\newtheorem{lemma}[theorem]{Lemma}
\newcommand{\paren}[1]{\left({#1}\right)}
\newcommand{\braces}[1]{\left\{{#1}\right\}}
\newcommand{\bracks}[1]{\left[{#1}\right]}
\newcommand{\setnot}[2]{\braces{#1 \mbox{ }: \mbox{ } #2}}
\renewcommand{\phi}{\varphi}
\newcommand{\eps}{\varepsilon}
\DeclareMathOperator \pr {\mathbb{P}r}
\DeclareMathOperator \expect {\mathbb{E}}
\DeclareMathOperator \ranking {Ranking}
\DeclareMathOperator \rankingsim {RankingSimulate}
\title{
  On Revenue Maximization in \\ 
  Second-Price Ad Auctions
}
\author{
  Yossi~Azar\thanks{
    \texttt{azar@tau.ac.il},
    Microsoft Research, Redmond and Tel-Aviv University.
  } \and
  Benjamin~Birnbaum\thanks{
    \texttt{birnbaum@cs.washington.edu}, University of Washington.
    Supported by an NSF Graduate Research Fellowship.
  } \and
  Anna~R.~Karlin\thanks{
    \texttt{karlin@cs.washington.edu}, University of Washington.
    Supported by NSF Grant CCF-0635147 and a grant from Yahoo!\ Research.
  } \and
  C.~Thach Nguyen\thanks{
    \texttt{ncthach@cs.washington.edu}, University of Washington.
    Supported by NSF Grant CCF-0635147 and a grant from Yahoo!\ Research.
  }
}
\begin{document}

\maketitle

\begin{abstract}
  Most recent papers addressing the algorithmic problem of allocating
  advertisement space for keywords in sponsored search auctions assume
  that pricing is done via a first-price auction, which does not
  realistically model the Generalized Second Price (GSP) auction used
  in practice.  Towards the goal of more realistically modeling these
  auctions, we introduce the {\em Second-Price Ad Auctions} problem,
  in which bidders' payments are determined by the GSP mechanism.  We
  show that the complexity of the Second-Price Ad Auctions problem is
  quite different than that of the more studied First-Price Ad
  Auctions problem.  First, unlike the first-price variant, for which
  small constant-factor approximations are known, it is NP-hard to
  approximate the Second-Price Ad Auctions problem to any non-trivial
  factor.  Second, this discrepancy extends even to the $0$-$1$
  special case that we call the {\em Second-Price Matching} problem
  (2PM).  In particular, offline 2PM is APX-hard, and for online 2PM
  there is no deterministic algorithm achieving a non-trivial
  competitive ratio and no randomized algorithm achieving a
  competitive ratio better than $2$.  This stands in contrast to the
  results for the analogous special case in the first-price model, the
  standard bipartite matching problem, which is solvable in polynomial
  time and which has deterministic and randomized online algorithms
  achieving better competitive ratios.  On the positive side, we
  provide a 2-approximation for offline 2PM and a 5.083-competitive
  randomized algorithm for online 2PM.  The latter result makes use of
  a new generalization of a classic result on the performance of the
  ``Ranking'' algorithm for online bipartite matching.
\end{abstract}

\thispagestyle{empty}
\newpage
\setcounter{page}{1}
\newcommand \ourproblem {Second-Price Ad Auctions}
\newcommand \SecondPM {Second-Price Matching}
\section{Introduction}

The rising economic importance of online sponsored search advertising
has led to a great deal of research focused on developing its
theoretical underpinnings.  (See, e.g.,~\cite{Lahaie07} for a survey).
Since search engines such as Google, Yahoo!~and Bing depend on
sponsored search for a significant fraction of their revenue, a key
problem is how to optimally allocate ads to keywords (user searches)
so as to maximize search engine
revenue~\cite{Abrams07,Andelman04,Azar08,Buchbinder07,Chakrabarty08,Devanur09,Goel08a,Goel08b,Mahdian07,Mehta07,Srinivasan08}.
Most of the research on the dynamic version of this problem assumes
that once the participants in each keyword auction are determined, the
pricing is done via a first-price auction; in other words, bidders pay
what they bid. This does not realistically model the standard
mechanism used by search engines, called the Generalized Second Price
mechanism (GSP) \cite{Edelman07,Varian07}.

In an attempt to model reality more closely, we study the {\em
  \ourproblem} problem, which is the analogue of the above allocation
problem when bidders' payments are determined by the GSP mechanism.
As in other
work~\cite{Azar08,Buchbinder07,Chakrabarty08,Mehta07,Srinivasan08}, we
make the simplifying assumption that there is only one slot for each
keyword.
In this case, the GSP mechanism for a given keyword auction
reduces to a second-price auction -- given the participants in the
auction, it allocates the advertisement slot to the highest bidder,
charging that bidder the bid of the second-highest bidder.\footnote{
  This simplication, among others (see~\cite{Lahaie07}),
  leaves room to improve the accuracy of our model.  However, the
  hardness results clearly hold for the multi-slot case as well.
}

In the \ourproblem\ problem, there is a set of keywords $U$ and a set
of bidders $V$, where each bidder $v \in V$ has a known daily budget
$B_v$ and a non-negative bid $b_{u,v}$ for every keyword $u \in U$.
The keywords are ordered by their arrival time, and as each keyword
$u$ arrives, the algorithm (i.e., the search engine) must choose a
bidder to allocate it to.  The search engine is not required
to choose the highest-bidding bidder; in order to optimize the
allocation of bidders to keywords, search engines typically use a
``throttling'' algorithm that chooses which bidders to select to
participate in an auction for a given
keyword~\cite{Goel08a}.\footnote{ In this paper, we assume the search
  engine is optimizing over revenue although it is certainly
  conceivable that a search engine would consider other objectives.  }

In the previously-studied first-price version of the problem,
allocating a keyword to a bidder meant choosing a single bidder $v$
and allocating $u$ to $v$ at a price of $b_{u,v}$.  In the
\ourproblem\ problem, two bidders are selected instead of one.  Of
these two bidders, the bidder with the higher bid (where bids are
always reduced to the minimum of the actual bid and bidders' remaining
budgets) is allocated that keyword's advertisement slot at the price
of the other bid.  (In the GSP mechanism for $k$ slots, $k + 1$
bidders are selected, and each of the top $k$ bidders pays the bid of
the next-highest bidder.)

This process results in an allocation and pricing of the advertisement
slots associated with each of the keywords. The goal is to select the
bidders participating in each auction to maximize the total profit
extracted by the algorithm. For an example instance of this problem,
see Figure~\ref{fig:2paa_example}.

\begin{figure}[!b]
  \begin{center}
    %\begin{tabular}{cccc}
    %  \subfigure[]{\scalebox{0.6}{\label{fig:2paa_example:a}
    %      \includegraphics{2paa_example_a}
    %    } 
    %  } &
    %  \subfigure[]{\scalebox{0.6}{\label{fig:2paa_example:b}
    %      \includegraphics{2paa_example_b}
    %    }
    %  } &
    %  \subfigure[]{\scalebox{0.6}{\label{fig:2paa_example:c}
    %      \includegraphics{2paa_example_c}
    %    } 
    %  } &
    %  \subfigure[]{\scalebox{0.6}{\label{fig:2paa_example:d}
    %      \includegraphics{2paa_example_d}
    %    } 
    %  } 
    %\end{tabular}
    \scalebox{1.0}{\includegraphics{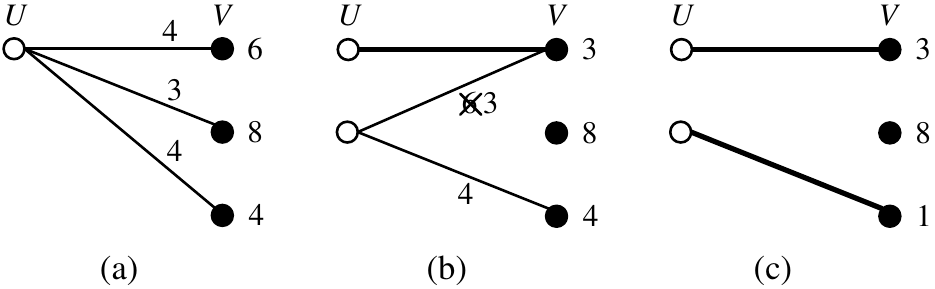}}
  \end{center}
  \caption{ \small
    An example of the \ourproblem\ problem: the nodes in $U$
    are keywords and the nodes in $V$ are bidders.  The number
    immediately to the right of each bidder represents its remaining
    budget, and the number next to each edge connecting a bidder to a
    keyword represents the bid of that bidder for that keyword.  (a)
    shows the situation when the first keyword arrives.  For this
    keyword, the search engine selects the first bidder, whose bid is
    4, and the second bidder, whose bid is 3.  The keyword is
    allocated to the first bidder at a price of 3, thereby reducing
    that bidder's budget by 3.  (b) shows the situation when the
    second keyword arrives.  The bid of the first bidder for that
    keyword is adjusted to the minimum of its original bid, $6$, and
    its remaining budget, $3$.  Then the first and the third bidders
    are selected, and the keyword is allocated to the third bidder at
    a price of $3$.} \label{fig:2paa_example} 
\end{figure}

\subsection{Our Results}\label{sec:results}

We begin by considering the {\em offline} version of the
\ourproblem\ problem, in which the algorithm knows all of the original
bids of the bidders (Section~\ref{sec:flexible}).  Our main result
here is that it is NP-hard to approximate the optimal solution to this
problem to within a factor better than $\Omega(m)$, where $m$ is the
number of keywords, even when the bids are small compared to budgets.
This strong inapproximability result is matched by the trivial
algorithm that selects the single keyword with the highest second-best
bidder and allocates only that keyword to its top two bidders.  It
stands in sharp contrast to the standard First-Price Ad Auctions
problem, for which there is a 4/3-approximation to the offline
problem~\cite{Chakrabarty08,Srinivasan08} and an $e/(e-1)$-competitive
algorithm to the online problem when bids are small compared to
budgets~\cite{Buchbinder07,Mehta07}.

We then turn our attention to a theoretically appealing special case
that we call {\em \SecondPM}.  In this version of the problem, all
bids are either 0 or 1 and all budgets are 1.  This can be thought of
as a variant on maximum bipartite matching in which the input is a
bipartite graph $G = (U \cup V, E)$, and the vertices in $U$ must be
matched, in order, to the vertices in $V$ such that the profit of
matching $u \in U$ to $v \in V$ is $1$ if and only if there is at
least one additional vertex $v' \in V$ that is a neighbor of $u$ and
is unmatched at that time.  One can justify the second-price version
of the problem by observing that when we sell an item, we can only
charge the full value of the item when there is more than one
interested buyer.\footnote{ A slightly more amusing motivation is to
  imagine that the two sets of nodes represent boys and girls and the
  edges represent mutual interest, but a girl is only interested in a
  boy if another girl is also actively interested in that boy.  }

Recall that the first-price analogue to the \SecondPM\ problem, the
maximum bipartite matching problem, can be solved optimally in
polynomial time.  The online version has a trivial 2-competitive
deterministic greedy algorithm and an $e/(e-1)$-competitive randomized
algorithm due to Karp, Vazirani and Vazirani~\cite{Karp90}, both of
which are best possible.

In contrast, we show that the \SecondPM\ problem is APX-hard
(Section~\ref{sec:hardness_offline2pm}).  We also give a
2-approximation algorithm for the offline problem
(Section~\ref{sec:approx_offline2pm}). We then turn to the online
version of the problem. Here, we show that no deterministic online
algorithm can get a competitive ratio better than $m$, where $m$ is
the number of keywords in the instance, and that no randomized online
algorithm can get a competitive ratio better than 2
(Section~\ref{sec:lower_online2pm}).  On the other hand, we present a
randomized online algorithm that achieves a competitive ratio of $2
\sqrt{e}/(\sqrt{e}-1) \approx 5.08$
(Section~\ref{sec:rand_upper_online2pm}).  To obtain this competitive
ratio, we prove a generalization of the result due to Karp, Vazirani,
and Vazirani~\cite{Karp90} and Goel and Mehta~\cite{Goel08b} that the
{\em Ranking} algorithm for online bipartite matching achieves a
competitive ratio of $e/(e-1)$.

\begin{table}
\small
	\begin{center}
	\begin{tabular}{|p{1.5cm}|p{2.1cm}|p{2cm}|p{4.3cm}|p{3cm}|c|}
		\hline
			\multirow{2}{*}{}
			&	\multicolumn{2}{c|}{Offline} & \multicolumn{2}{c|}{Online} \\
		%\hline
		  & Upper bound & Lower bound & Upper bound & Lower bound \\
		\hline 
			1PAA 			& $4/3$~\cite{Chakrabarty08,Srinivasan08} & 
								$16/15$~\cite{Chakrabarty08} & 
								$e/(e-1)^*$~\cite{Buchbinder07,Mehta07} or $2$~\cite{Lehmann06} 
								& $e/(e-1)$ \cite{Mehta07,Karp90}\\
		\hline
			2PAA			& $O(m)$ & $\Omega(m)$ & - & - \\
		\hline
			Matching	&\multicolumn{2}{c|}{poly-time alg.} &
								$e/(e-1)$~\cite{Karp90,Goel08b} & $e/(e-1)$~\cite{Karp90}\\
		\hline
			2PM				& $2$ & $364/363$ &
									$2\sqrt{e}/(\sqrt{e} - 1)\approx 5.083$ & $2$\\
		\hline
	\end{tabular}
	\vspace{.1in}
	\caption{ \small
          A summary of the results in this paper, compared to known
          results for the first-price case. The upper bound of 
	$e/(e-1)$ for Online 1PAA only holds when
	when the bids are small compared to the budgets.}
	\end{center}
\end{table}
\normalsize 

\subsection{Related Work}

As discussed above, the related First-Price Ad Auctions
problem\footnote{This problem has also been called the {\em Adwords}
  problem~\cite{Devanur09,Mehta07} and the {\em Maximum Budgeted
    Allocation} problem~\cite{Azar08,Chakrabarty08,Srinivasan08}.  It
  is an important special case of SMW
  \cite{Dobzinski06,Feige06b,Khot05,Lehmann06,Mirrokni08,Vondrak08},
  the problem of maximizing utility in a combinatorial auction in
  which the utility functions are submodular, and is also related to
  the Generalized Assignment Problem (GAP)
  \cite{Chekuri00,Feige06b,Fleischer06,Shmoys93}.  }  has received a
fair amount of attention.  Mehta et al.~\cite{Mehta07} present an
algorithm for the online version that achieves an optimal competitive
ratio of $e/(e-1)$ for the case when the bids are much smaller than
the budgets, a result also proved by Buchbinder et
al.~\cite{Buchbinder07}.  Under similar assumptions, Devanur and Hayes
show that when the keywords arrive according to a random permutation,
a $(1-\eps)$-approximation is possible~\cite{Devanur09}.  When there
is no restriction on the values of the bids relative to the budgets,
the best known competitive ratio is 2~\cite{Lehmann06}.  For the
offline version of the problem, a sequence of
papers~\cite{Lehmann06,Andelman04,Feige06b,Azar08,Srinivasan08,Chakrabarty08}
culminating in a paper by Chakrabarty and Goel, and independently, a
paper by Srinivasan, show that the offline problem can be approximated
to within a factor of $4/3$ and that there is no polynomial time
approximation algorithm that achieves a ratio better than 16/15 unless
$P=NP$~\cite{Chakrabarty08}.

The most closely related work to ours is the paper of Goel, Mahdian,
Nazerzadeh and Saberi \cite{Goel08a}, which builds on the work of
Abrams, Medelvitch, and Tomlin~\cite{Abrams07}.  Goel et al.\ look at
the online allocation problem when the search engine is committed to
charging under the GSP scheme, with multiple slots per keyword.  They
study two models, the ``strict'' and ``non-strict'' models, both of
which differ from our model even for the one slot case by allowing
bidders to keep bidding their orginal bid, even when their budget
falls below this amount.  Thus, in these models, although bidders are
not charged more than their remaining budget when allocated a keyword,
a bidder with a negligible amount of remaining budget can keep his
bids high indefinitely, and as long as this bidder is never allocated
another slot, this high bid can determine the prices other bidders pay
on many keywords.  Under the assumption that bids are small compared
to budgets, Goel et al.\ build on the linear programming formulation
of Abrams et al. to present an $e/(e-1)$-competitive algorithm for the
non-strict model and a 3-competitive algorithm for the strict model.

The significant, qualitative difference between these positive results
and the strong hardness we prove for our model suggests that these
aspects of the problem formulation are important.  We feel that our
model, in which bidders are not allowed to bid more than their
remaining budget, is more natural because it seems inherently unfair
that a bidder with negligible or no budget should be able to
indefinitely set high prices for other bidders.

\section{Model and Notation}\label{sec:model}

We define the Second-Price Ad Auctions (2PAA) problem formally as
follows.  The input is a set of ordered keywords $U$ and bidders $V$.
Each bidder $v \in V$ has a budget $B_v$ and a nonnegative bid
$b_{u,v}$ for every keyword $u \in U$.  We assume that all of bidder
$v$'s bids $b_{u,v}$ are less than or equal to $B_v$.

Let $B_v(t)$ be the remaining budget of bidder $v$ immediately after
the $t$-th keyword is processed (so $B_v(0)= B_v$ for all $v$), and
let $b_{u,v}(t) = \min (b_{u,v}, B_v(t))$. (Both quantities are
defined inductively.)  A solution (or {\em second-price matching}) to
2PAA chooses for the $t$-th keyword $u$ a pair of bidders $v_1$ and
$v_2$ such that $b_{u,v_1}(t-1) \ge b_{u,v_2}(t-1)$, allocates the
slot for keyword $u$ to bidder $v_1$ and charges bidder $v_1$ a price
of $p(t)= b_{u,v_2}(t-1)$, the bid of $v_2$.  (We say that $v_1$ acts
as the \emph{first-price bidder} for $u$ and $v_2$ acts as the
\emph{second-price bidder} for $u$.)  The budget of $v_1$ is then
reduced by $p(t)$, so $B_{v_1}(t) = B_{v_1}(t-1) - p(t)$. For all
other bidders $v \ne v_1$, $B_{v}(t) = B_v(t-1)$.  The final value of
the solution is $\sum_t p(t)$, and the goal is to find a solution of
maximum value.

In the offline version of the problem, all of the bids are known to
the algorithm beforehand, whereas in the online version of the
problem, keyword $u$ and the bids $b_{u,v}$ for each $v \in V$ are
revealed only when keyword $u$ arrives, at which point the algorithm
must irrevocably map $u$ to a pair of bidders without knowing the bids
for the keywords that will arrive later.

The special case referred to as Second-Price Matching (2PM) is where
$b_{u,v}$ is either 0 or 1 for all $(u,v)$ pairs and $B_v=1$ for all
$v$.  We will think of this as the variant on maximum bipartite
matching (with input $G = (U \cup V, E)$) described in
Section~\ref{sec:results}.  Note that in 2PM, a keyword can only be
allocated for profit if its degree is at least two.  Therefore, we
assume without loss of generality that for all inputs of 2PM, the
degree of every keyword is at least two.

For an input to 2PAA, let $R_{min} = \min_{u,v} B_v / b_{u,v}$,
and let $m = |U|$ be the number of keywords.

\section{Hardness of Approximation of 2PAA}\label{sec:flexible}

In this section, we present our main hardness result for the
\ourproblem\ problem.  For a constant $c \geq 1$, let
2PAA($c$) be the version of 2PAA in which we are promised that
$R_{min} \geq c$.
\begin{theorem}\label{thm:hardness_2paa}
Let $c \geq 1$ be a constant integer.  For any constant $c' > c$, it
is NP-hard to approximate 2PAA($c$) to a factor of $m/c'$.
\end{theorem}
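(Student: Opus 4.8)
The plan is to reduce from an NP-hard problem whose solutions are naturally "all-or-nothing" so that the second-price mechanism either extracts essentially all the revenue or almost none. A natural candidate is a variant of \textsc{3-Partition} or \textsc{Subset-Sum}, or the "exact cover"-flavored problems, where a YES instance admits a perfect packing and a NO instance has a structural deficiency that the second-price pricing amplifies. The key idea to exploit is that in 2PAA, allocating a keyword $u$ produces revenue equal to the \emph{second}-highest bid, so revenue on a keyword is only large if \emph{two} bidders with large (remaining) budgets both bid high on it; a single "leftover" bidder with depleted budget is useless as a first-price bidder and nearly useless as a second-price bidder. This is exactly the brittleness one wants: in the YES case we can pair up bidders so every keyword sees two healthy bids, but in the NO case at least one keyword is "orphaned" and, more importantly, a single scheduling/budget mismatch cascades because budgets are consumed.

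Concretely, I would build a gadget in which there are $m$ keywords $u_1, \dots, u_m$ arriving in order, and bidders partitioned into a "main" group and a "reserve" group. The construction should force the following dichotomy: if the source instance is a YES instance, there is a second-price matching achieving value roughly $V^\ast$; if it is a NO instance, every second-price matching achieves value at most $V^\ast / m \cdot c'$ (up to the slack allowed by $c'>c$), i.e., essentially no keyword beyond one can be profitably allocated. To respect the promise $R_{\min}\ge c$, I would scale all bids down by a factor of $c$ relative to budgets — since budgets only constrain via $b_{u,v}(t)=\min(b_{u,v},B_v(t))$, a bidder can afford to be the first-price bidder on up to roughly $c$ keywords before running dry, so I would have each bidder be "meant" to win exactly one keyword and leave a comfortable factor-$c$ cushion; this changes the target values by only constant factors and preserves the $m/c'$ gap for any $c'>c$ as $m\to\infty$. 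The gap of $m/c'$ (rather than $m$) is then precisely the room needed to absorb these constant-factor scalings into the additive slack $c'-c$.

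The main obstacle — and the part requiring the most care — is arguing the NO-case upper bound, because the algorithm is adaptive and the online/offline distinction does not help it here: I must show that \emph{no} assignment of pairs, however cleverly it reorders which bidder is first- vs.\ second-price and however it spends budgets, can beat the trivial value on more than one keyword. The way I would handle this is to set up an invariant / potential argument on the total "healthy budget mass" available: each profitable allocation on a keyword consumes second-price revenue from the winner's budget, and the gadget should be designed so that in a NO instance the bids are arranged (e.g., via a careful bipartite incidence structure mirroring the combinatorial obstruction) so that after even one profitable allocation, every remaining keyword has at most one neighbor with nonzero remaining budget, hence zero achievable second price. Getting the gadget's incidence structure to guarantee this no matter which keyword the algorithm "sacrifices" first is the crux; I expect to need the NO instance to be \emph{everywhere} locally deficient, so that profitability anywhere immediately destroys profitability everywhere else. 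Once that structural claim is nailed down, combining it with the YES-case construction and the scaling argument gives the stated $m/c'$ inapproximability for 2PAA($c$).
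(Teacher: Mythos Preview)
Your instinct that second-price brittleness should amplify a YES/NO gap is right, but the proposal is missing the concrete mechanism, and the one you sketch is not the one that works. You propose that in a NO instance ``after even one profitable allocation, every remaining keyword has at most one neighbor with nonzero remaining budget,'' and you plan to achieve this by making the NO instance ``everywhere locally deficient.'' But NO instances of \textsc{Subset-Sum} or \textsc{3-Partition} are not uniformly deficient---they can have many near-solutions---so there is no evident way to translate the combinatorial obstruction into a gadget where \emph{every} first move kills all future profit. You correctly flag this as ``the crux'' and ``the main obstacle,'' and indeed it is; as stated, the plan does not close it.

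The paper's reduction (from \textsc{Partition}) uses a different architecture: a three-stage cascade built around a single \emph{bottleneck} bidder $f$. A small block of $n+2$ keywords encodes the partition decision and, only if the partition is exact, drives $f$'s remaining budget down to a precise target value. A separate large block of $\Theta(n^2)$ high-value keywords then each need $f$ as their second-price bidder. The key trick---absent from your sketch, and in fact opposite to your ``depleted bidder is useless'' intuition---is that if $f$'s budget \emph{overshoots} the target (the NO case), then $f$'s uncapped bid exceeds the other bidder's on each high-value keyword, so $f$ is \emph{forced into the first-price role}, burns through its budget in $c$ rounds, and thereafter contributes only $O(W)$ as a price-setter; this yields $O(Wn^3)$ total versus $\Theta(Wn^5)$ in the YES case. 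Thus the hardness is concentrated into whether one scalar hits one exact value, and the $m/c'$ gap comes from amplification by $\Theta(n^2)$ identical payoff keywords, not from a globally fragile incidence structure tied to the source instance.
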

\noindent
Hence, even when the bids are guaranteed to be smaller than the budget
by a large constant factor, it is NP-hard to approximate 2PAA to a
factor better than $\Omega(m)$.  After proving this result, we show in
Theorem~\ref{thm:matching_2paa} that this hardness is matched by a
trivial algorithm.
\begin{proof}
Fix a constant $c' > c$, and let $n_0$ be the smallest integer such
that for all $n \geq n_0$,
\begin{equation}\label{eqn:asymptotic1}
c' \cdot \frac{c(n^5 + n + 2)}{cn^2 + n + 2} \geq c(n^3 + cn^2 + n + 2)
\end{equation}
and 
\begin{equation}\label{eqn:asymptotic2}
\frac{n/2 + 1}{2} \geq c \enspace .
\end{equation}
Note that since $n_0$ depends only on $c'$, it is a constant.  

We reduce from PARTITION, in which the input is a set of $n \geq n_0$
items, and the weight of the $i$-th item is given by $w_i$.  If $W =
\sum_{i = 1}^n w_i$, then the question is whether there is a partition
of the items into two subsets of size $n/2$ such that the sum of the
$w_i$'s in each subset is $W/2$.  It is known that this problem (even
when the subsets must both have size $n/2$) is NP-hard~\cite{Garey79}.

Given an instance of PARTITION, we create an instance of 2PAA($c$) as
follows.  (This reduction is illustrated in
Figure~\ref{fig:hardness_2PMBA_small}.)
%Suppose that there is an $m/c'$-approximation algorithm to 2PAA($c$);
%we will show that constructing the following instance of 2PAA($c$)
%(illustrated in Figure~\ref{fig:hardness_2PMBA_small}) allows us to
%use the $m/c'$-approximation to solve the PARTITION instance:
\begin{itemize}
\item
First, create $n + 2$ keywords  $c_1, \ldots, c_n, e_1, e_2$.  Second,
create an additional set 
\begin{equation*}
G=\setnot{g_{i,k}}{1 \leq i \leq n^2 \mbox{ and } 1 \leq k \leq c}
\end{equation*}
of $cn^2$ keywords.  The keywords arrive in the order
\begin{equation*}
c_1, \ldots, c_n, e_1, e_2, g_{1,1}, \ldots, g_{1,c}, \ldots \ldots, g_{n^2,1}, \ldots, g_{n^2,c} \enspace .
\end{equation*}

\item
Create $n^2 + 4$ bidders $a, d_1, d_2, f, h_1, \ldots, h_{n^2}$.  Set
the budgets of $a$, $d_1$, and $d_2$ to $cW(1 + n/2)$.  Set the budget
of $f$ to $cW(n^3 + 1)$.  For $1 \leq i \leq n^2$, set the budget of
$h_i$ to $cWn^3$.

\item
For $1 \leq i \leq n$, bidders $a$, $d_1$, and $d_2$ bid $c(w_i + W)$
on keyword $c_i$.

\item
For $j \in \braces{1,2}$, bidder $d_j$ bids $cW$ on keyword $e_j$.
Bidder $f$ bids $cW/2$ on both $e_1$ and $e_2$.

\item
For $1 \leq i \leq n^2$ and $1 \leq k \leq c$, keyword $g_{i,k}$
receives a bid of $W(n^3 + 1)$ from bidder $f$ and a bid of $Wn^3$
from bidder $h_i$.
\end{itemize}
This reduction can clearly be performed in polynomial time.
Furthermore, it can easily be checked that (\ref{eqn:asymptotic2})
implies that no bidder bids more than $1/c$ of its budget on any
keyword.

\begin{figure}
  \begin{center}
    \scalebox{1.0}{
      \includegraphics{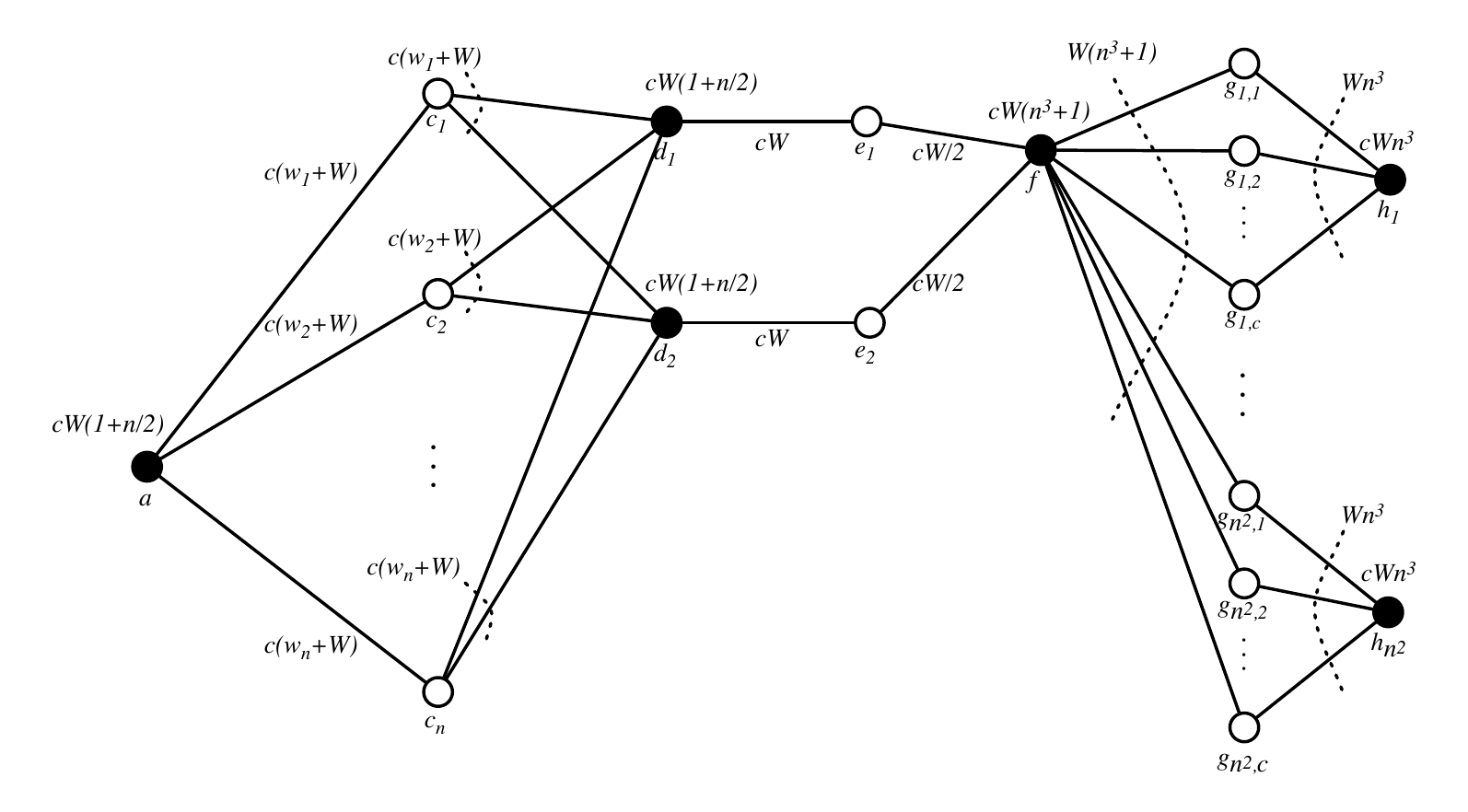}
    }
  \end{center}
  \caption{\small The 2PAA($c$) instance of the reduction.  Each bidder's
    budget is shown above its node, and the bids of bidders for
    keywords is shown near the corresponding edge.  }
  \label{fig:hardness_2PMBA_small}
\end{figure}

We first show that if the PARTITION instance is a ``yes'' instance,
then there exists a feasible solution to the 2PAA($c$) instance of
value at least $cW(n^5 + n + 2)$.  Let $S \subseteq \bracks{n}$ be
such that $|S| = n/2$ and $\sum_{i \in S} w_i = \sum_{i \in
  \overline{S}} w_i = W/2$.  We construct a solution to the 2PAA($c$)
instance as follows.  For every $i \in S$, allocate $c_i$ to $d_1$,
and for every $i \in \overline{S}$, allocate $c_i$ to $d_2$.  For each
of these allocations, choose $a$ as the second-price bidder.  This
will reduce the budget of $d_1$ and $d_2$ to exactly $cW/2$, and hence
the bids from $d_1$ to $e_1$ and from $d_2$ to $e_2$ will both be
reduced to $cW/2$.  Allocate $e_1$ to $f$ choosing $d_1$ as the
second-price bidder, and allocate $e_2$ to $f$ choosing $d_2$ as the
second-price bidder.  This will reduce the budget of $f$ to $cWn^3$.
The profit from the solution constructed so far is $cW(n+2)$.  Now
allocate $g_{1,1}, g_{1,2},\ldots,g_{1,c-1}$ to $f$, choosing $h_1$ as
the second-price bidder.  This will reduce the budget of $f$ to
$Wn^3$.  Hence, it can act as the second-price bidder for each of the
remaining keywords in $G$.  Allocate $g_{1,c}$ to $h_1$, choosing $f$
as the second-price bidder, and then, for $2 \leq i \leq n^2$ and $1
\leq k \leq c$, allocate $g_{i,k}$ to $h_i$, choosing $f$ as the
second-price bidder.  The profit obtained for each keyword in $G$ in
this assignment is $Wn^3$.  Since $|G| = cn^2$, the total profit of
the solution constructed is $cW(n+2) + cWn^5 = cW(n^5 + n + 2)$.

We now show that if there is a second-price matching in the 2PAA($c$)
instance of value at least $cW(n^3 + cn^2 + n + 2)$, then there must
be a partition of $w_1, \ldots, w_n$.  In such a matching, at most
$cW(n + 2)$ units of profit can be obtained from keywords $c_1,
\ldots, c_n, e_1, e_2$, since the initial second-highest bids on those
keywords sum to $cW(n+2)$.  Hence, at least $cW(n^3 + cn^2)$ profit
must come from the keywords in $G$.

Suppose that the budget of $f$ is greater than $cWn^3$ after keywords
$e_1$ and $e_2$ are allocated.  Note that at least $c$ of the keywords
in $G$ must be allocated to reach a profit of $cW(n^3 + cn^2)$ on these
keywords.  Consider what happens after the first $c$ of the keywords
in $G$ are assigned.  For each of these keywords, $f$ must have been
the first-price bidder, so its budget is reduced to an amount greater
than $0$ and less than or equal to $cW$.  Hence, for each keyword in
$G$ allocated henceforth, $f$ is the second-price bidder, and the
profit is at most $cW$.  Since there are at most $c(n^2 - 1)$ more
keywords in $G$, the total profit from the keywords in $G$ is at most
$cWn^3 + c^2W(n^2 - 1)$, which contradicts the fact that at least
$cW(n^3 + cn^2)$ units of profit must come from $G$.  Hence, we
conclude that the budget of $f$ is less than or equal to $cWn^3$ after
keywords $e_1$ and $e_2$ are allocated.

The budget of $f$ can only be smaller than $cWn^3$ if $f$ acts as the
first-price bidder for both $e_1$ and $e_2$.  But this can happen only
if the budgets of both $d_1$ and $d_2$ are reduced to an amount less
than or equal to $cW/2$.  For $j \in \braces{1,2}$, let $S_j \subseteq
\bracks{n}$ be the set of indices $i$ such that $d_j$ acts as the
first-price bidder for $i$.  For both $j$, we have that
\begin{equation}\label{eqn:impliespartition}
\sum_{i \in S_j} c(w_i + W) \geq \frac{cW}{2} + \frac{cWn}{2} \enspace .
\end{equation}
Rearranging (\ref{eqn:impliespartition}) yields $\sum_{i \in S_j} W
\geq W/2 + Wn/2 - \sum_{i \in S_j} w_i$, which implies $W |S_j| \geq
W/2 + Wn/2 - W$, and hence $|S_j| \geq n/2 - 1/2$.  By integrality,
then, $|S_j| \geq n/2$ for both $j$.  Hence $|S_j| = n/2$ for both
$j$, and using (\ref{eqn:impliespartition}) again, we have $\sum_{i
  \in S_j} c w_i + cW|S_j| \geq cW/2 + cWn/2$ which implies that
$\sum_{i \in S_j} w_i \geq W / 2$ for both $j$.  Therefore, the
partition defined by $S_1$ and $S_2$ is a solution to the PARTITION
instance.

To conclude the proof, note that there are $cn^2 + n + 2$ keywords in
the 2PAA($c$) instance.  Hence, if the PARTITION instance is a ``yes''
instance, then by (\ref{eqn:asymptotic1}), we can run an
$m/c'$-approximation algorithm to find a second-price matching of value
at least $cW(n^3 + cn^2 + n + 2)$, and if the PARTITION instance is
a ``no'' instance, then the value of the solution returned by such
an algorithm must be strictly less than $cW(n^3 + cn^2 + n + 2)$.
Hence, an $m/c'$-approximation algorithm for 2PAA($c$) can be
used to solve PARTITION.
\end{proof}

\begin{theorem}\label{thm:matching_2paa}
Let $c \geq 1$ be a constant integer. There is an $m/c$-approximation
to 2PAA($c$).
\end{theorem}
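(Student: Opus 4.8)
The plan is to match the $\Omega(m)$ hardness of Theorem~\ref{thm:hardness_2paa} by a simple explicit algorithm and to bound the optimum by the ``second-best bids''. For a keyword $u$, let $\sigma(u)$ be the second-largest value in the multiset $\{b_{u,v}:v\in V\}$ (so $\sigma(u)=0$ unless at least two bidders bid positively on $u$), and let $\sigma_{(1)}\ge\sigma_{(2)}\ge\cdots\ge\sigma_{(m)}$ be these values sorted. Put $k=\min(c,m)$. The algorithm selects the $k$ keywords with the largest $\sigma$-values and processes \emph{only} those, in their original arrival order, allocating each selected keyword to its two highest bidders, the highest acting as first-price bidder and the second-highest as second-price bidder. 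Since $k\le c=O(1)$ this is polynomial time.

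First I would show the algorithm collects $\sum_{i=1}^{k}\sigma_{(i)}$, i.e.\ that every selected keyword is actually sold at the full price $\sigma(\cdot)$ of its second-best bid. This is where the hypothesis $R_{min}\ge c$ enters, via the fact that each bid is at most $1/c$ of the corresponding budget, together with the fact that only $k\le c$ keywords are touched. Fix a bidder $x$ and the moment a selected keyword $\kappa$ is processed. There are at most $k-1$ selected keywords other than $\kappa$, so at most $k-1$ charges have ever been made against $x$ as a first-price bidder; each such charge is at most $x$'s adjusted bid on that keyword, hence at most $B_x/c$. Therefore, when $\kappa$ is processed, $x$ has remaining budget at least $B_x-(k-1)B_x/c\ge B_x/c\ge b_{\kappa,x}$, so $x$'s adjusted bid on $\kappa$ still equals $b_{\kappa,x}$. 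Applying this to the two highest bidders on $\kappa$, their adjusted bids equal their original bids, the higher one may legally act as first-price bidder, and the price charged is exactly $\sigma(\kappa)$. Summing over the $k$ selected keywords gives algorithm value $\sum_{i=1}^{k}\sigma_{(i)}$.

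Next I would bound the optimum. In any feasible solution, a keyword $u$ allocated as the $t$-th keyword to distinct bidders $v_1,v_2$ with $b_{u,v_1}(t-1)\ge b_{u,v_2}(t-1)$ contributes profit $p(t)=b_{u,v_2}(t-1)\le\min(b_{u,v_1},b_{u,v_2})\le\sigma(u)$, since the second-largest element of $\{b_{u,v}\}$ is at least the smaller of any two distinct elements. As the allocated keywords are distinct this yields $\mathrm{OPT}\le\sum_{u}\sigma(u)=\sum_{i=1}^{m}\sigma_{(i)}$. Finally, since the $\sigma_{(i)}$ are nonincreasing, the average of the top $k$ of them is at least the overall average, i.e.\ $\sum_{i=1}^{m}\sigma_{(i)}\le(m/k)\sum_{i=1}^{k}\sigma_{(i)}$. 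For $m\ge c$ this is exactly $\mathrm{OPT}\le(m/c)\cdot\mathrm{ALG}$; for $m<c$ the algorithm allocates every keyword at its full second-best price, hence equals $\mathrm{OPT}$, so the claim is trivial.

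The only delicate point is the first step, since a bidder may be the first-price bidder on several of the selected keywords and one must control the resulting budget depletion; it is precisely the combination of $R_{min}\ge c$ with touching at most $c$ keywords that keeps every selected keyword affordable at its full second-best price. The remaining steps are routine arithmetic.
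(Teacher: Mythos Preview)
Your proposal is correct and follows essentially the same approach as the paper: pick the $c$ keywords with the largest second-highest bids, allocate each to its two top bidders, use $R_{min}\ge c$ together with the fact that at most $c$ keywords are touched to show no bid is reduced, and bound $\mathrm{OPT}$ by $\sum_u \sigma(u)$. You are more careful than the paper in spelling out the budget-depletion calculation and in handling the trivial case $m<c$, but the underlying argument is identical.
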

\begin{proof}
For each keyword $u \in U$, let $s_u$ be the second-highest bid for
$u$.  Consider the algorithm that selects the $c$ keywords with the
highest values of $s_u$ and then allocates these keywords to get $s_u$
for each of them (i.e., chooses the two highest bidders for $u$).
Since no bidder bids more than $1/c$ of its budget for any keyword, no
bids are reduced from their original values during this allocation.
Hence, the profit of this allocation is at least $(c/m) \sum_{u \in U}          
s_u$.  Since the value of the optimal solution cannot be larger than
$\sum_{u\in U} s_u$, it follows that this is an $m/c$-approximation to
2PAA($c$).
\end{proof}

\section{Offline Second-Price Matching}\label{sec:offline2pm}

In this section, we turn our attention to the offline version of the
special case of Second-Price Matching (2PM).  Before we show our
bounds on the approximability of 2PM, we start with a simple proof
that it is NP-hard.  Then, in
Section~\ref{sec:hardness_offline2pm}, we show that 2PM is APX-hard,
and in Section~\ref{sec:approx_offline2pm}, we give a 2-approximation
for 2PM.
\begin{theorem}\label{thm:nphard}
  The Second-Price Matching Problem is NP-hard.
\end{theorem}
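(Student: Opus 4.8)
The plan is to reduce from a known NP-hard problem, and the most natural candidate is the (perfect) bipartite matching–style constraint satisfaction problems that also underlie hardness for related allocation problems — but since ordinary bipartite matching is in P, the reduction has to exploit the ``second-price'' feature, namely that a keyword yields profit only if its chosen first-price bidder has a distinct available neighbor to act as the second-price bidder. I would reduce from 3-DIMENSIONAL MATCHING or, more simply, from a restricted version of SAT or from VERTEX COVER / INDEPENDENT SET, encoding each clause or vertex gadget so that ``profitable'' allocations of certain keywords are possible exactly when a consistent global choice (a satisfying assignment, or a cover) exists. Concretely, I would build a keyword for each ``decision'' in the source instance whose two incident edges go to a pair of bidders representing the two possible values; allocating that keyword for profit forces one of the pair to be consumed, and downstream keywords that need the other bidder of the pair as their free second-price neighbor then succeed or fail depending on that choice.

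The key steps, in order, are: (1) fix the source NP-hard problem and describe the gadget bidders and keywords, being careful about the arrival order since 2PM is an ordered-matching problem and the reduction's correctness will hinge on which bidders are still unmatched when each keyword arrives; (2) argue the ``yes'' direction — given a solution to the source instance, exhibit an explicit ordering-respecting allocation achieving a target profit $P$, by walking through the keywords in arrival order and naming the first- and second-price bidder for each; (3) argue the ``no'' direction — show any second-price matching of value $\ge P$ induces a solution to the source instance, which typically means showing that to reach profit $P$ every ``decision'' keyword must be allocated profitably, that this forces a consistent pattern of consumed bidders, and that the ``consistency-check'' keywords only contribute their profit when the induced pattern is globally valid; (4) verify the reduction is polynomial-time and that $P$ is chosen so the gap is clean. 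Since this theorem only claims NP-hardness (APX-hardness is deferred to Section~\ref{sec:hardness_offline2pm}), a gap of a single unit of profit suffices, which keeps the gadgets lightweight.

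I expect the main obstacle to be controlling the interaction between the arrival order and the availability of second-price bidders: unlike ordinary matching, here the adversary/algorithm can ``waste'' a bidder as a first-price bidder early on, changing which later keywords are profitable, so the gadgets must be arranged (and padded with enough private ``dummy'' second-price neighbors) so that the only sensible strategy is the intended one, and any deviation provably loses at least one unit of profit. A secondary subtlety is ensuring that in 2PM every keyword has degree at least two (as assumed in Section~\ref{sec:model}), so each gadget keyword needs at least two neighbors even in the ``should not be allocated'' case; I would handle this by attaching cheap shared dummy bidders whose limited supply is exactly what enforces the combinatorial constraint. Once the gadget is right, the two directions should be short case analyses rather than heavy computation.
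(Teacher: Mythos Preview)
Your plan is essentially the paper's own reduction: the paper reduces from 3-SAT, creating for each variable a ``decision'' keyword adjacent to two assignment bidders $v_i^t,v_i^f$, and for each clause a keyword adjacent to a private clause bidder plus the three relevant assignment bidders, with variable keywords arriving first; the target profit is exactly the number of keywords, so every keyword must be allocated profitably, which forces a consistent truth assignment and makes both directions immediate. The construction is in fact cleaner than you anticipate---no extra padding or dummy sharing is needed, because the single private clause bidder serves as the first-price bidder and the surviving assignment bidder as the second-price witness.
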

\noindent
Note that this result is subsumed by Theorem~\ref{thm:hardness_2pm}
below.  We present it anyway because it allows us to illustrate a
simpler reduction to the problem.
\begin{proof}
  We reduce from 3-SAT.  Given an instance of 3-SAT in which the
  variables are $X = \braces{x_1, \ldots, x_n}$ and the clauses are
  $\mathcal{C} = \braces{c_1, \ldots, c_k}$, we construct an instance
  of 2PM as follows:
  \begin{itemize}
    \item 
    For each variable $x_i \in X$, there is a keyword $v_i$.  We call
    these keywords the \emph{variable keywords}.  Each variable
    keyword $v_i$ is connected to two bidders $v_i^t$ and $v_i^f$.  We
    call these bidders the \emph{assignment bidders}.
  
    \item
    For each clause $c_j \in \mathcal{C}$, there is a keyword $u_j$
    and a bidder $b_j$.  We call these keywords and bidders
    \emph{clause keywords} and \emph{clause bidders}, respectively.
    Each clause keyword $u_j$ is connected to $b_j$ and three of the
    assignment bidders, one for each literal $\ell \in c_j$, chosen as
    follows.  If $\ell$ is of the form $x_i$ for some variable $x_i$,
    then $u_j$ is connected to $v_i^f$.  Otherwise, if $\ell$ is of
    the form $\overline{x_i}$ for some variable $x_i$, then $u_j$ is
    connected to $v_i^t$.
\end{itemize}
The keywords arrive in two phases: first the variable keywords and
then the clause keywords.  An example of this reduction is illustrated
in Figure~\ref{fig:nphardness}.

\begin{figure}
  \begin{center}
    \scalebox{0.9}{
      \includegraphics{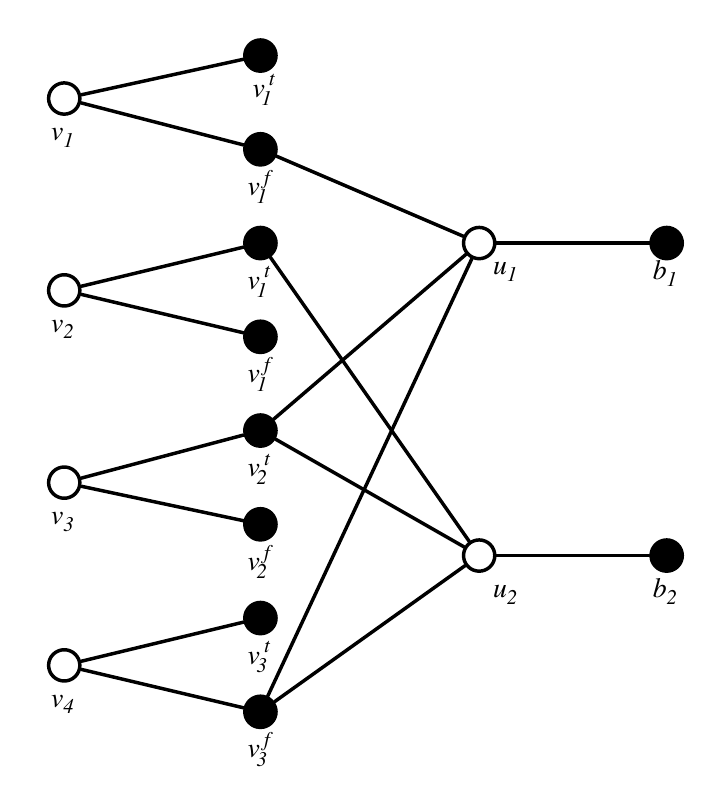}
    }
  \end{center}
  \caption{\small An example of the reduction from 3-SAT to 2PM.  The
    formula represented is $(x_1 \vee \overline{x_3} \vee x_4) \wedge
    (\overline{x_2} \vee \overline{x_3} \vee x_4)$.
  }\label{fig:nphardness}
\end{figure}

We now show that the 2PM instance has a second-price matching of value
$k+n$ if and only if there is a satisfying assignment to the 3-SAT
instance.  Suppose first that there is a satisfying assignment to the
3-SAT instance.  Let $h: X \rightarrow \braces{t,f}$ be the satisfying
assignment.  We construct a second-price matching as follows.  During
the first phase, assign each variable keyword $v_i$ to $v_i^{h(x_i)}$.
The profit from this phase is $n$.  During the second phase, assign
each clause keyword $u_j$ to $b_j$.  Since $c_j$ has at least one
satisfied literal $\ell$, the assignment bidder corresponding to
$\ell$ will not have been used in the first phase, and the profit for
assigning $u_j$ to $b_j$ is $1$.  Thus, the total profit from this
phase is $k$, and the total profit of the second-price matching is $k
+ n$.

On the other hand, if there is a second-price matching of size at
least $k + n$, then since there are $k + n$ keywords in the 2PMM
instance, the profit obtained from each keyword in the second-price
matching must be $1$.  This means that each variable keyword must have
been matched to one of its assignment bidders during the first phase.
Let $h: X \rightarrow \braces{t,f}$ be the assignment generated from
this matching, i.e., if $v_i$ was assigned to $v_i^\ell$ (for $\ell
\in \braces{t,f}$), then let $h(x_i) = \ell$.  Since the profit
obtained from each keyword is $1$, each clause keyword $u_j$ must have
been adjacent to at least two unused bidders when it was assigned,
including one of the assignment bidders, say $v_i^\ell$.  Hence,
$h(x_i) = \overline{\ell}$, and by construction of the 2PMM instance,
clause $c_j$ is satisfied by $h$.  We conclude that $h$ is a
satisfying assignment to the 3-SAT instance.
\end{proof}

\subsection{Hardness of Approximation}\label{sec:hardness_offline2pm}

To prove that 2PM is APX-hard, we reduce from vertex cover, using
the following result.
\begin{theorem}[Chleb\'{i}k and Chleb\'{i}kov\'{a} \cite{Chlebik06}]\label{thm:hardness_vc}
  It is NP-hard to approximate Vertex Cover on 4-regular graphs to
  within $53/52$.
\end{theorem}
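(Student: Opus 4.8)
This statement is quoted from Chleb\'{i}k and Chleb\'{i}kov\'{a}~\cite{Chlebik06} and is used here as a black box; a full proof is a PCP-style argument well beyond the scope of this paper. The plan below is how one would prove it from standard ingredients.

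The plan is to derive the result from known strong inapproximability for constraint satisfaction via a degree-reduction (``expander replacement'' / ``amplifier'') reduction, following the framework of Papadimitriou--Yannakakis and, more precisely, Berman--Karpinski. I would not attempt a self-contained PCP construction; instead I would start from H\aa{}stad's theorem that Max-E3-Lin-2 (systems of mod-$2$ linear equations with exactly three variables per equation) is NP-hard to approximate within any constant below $2$ --- i.e., for every $\eps > 0$ it is NP-hard to distinguish systems that are $(1-\eps)$-satisfiable from those that are at most $(1/2+\eps)$-satisfiable. First I would pass to a bounded-occurrence variant by the standard expander-based equality-gadget trick: replace the cluster of occurrences of each variable by fresh Boolean variables placed on the vertices of a bounded-degree expander, with an equality constraint on each edge. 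Good edge-expansion forces any assignment that is not almost constant on a cluster to violate a number of equality constraints proportional to the number of flipped variables, so up to a constant blow-up in size and a constant loss in the gap one obtains a system $\Phi$ of width-$\le 3$ mod-$2$ equations in which every variable occurs a bounded number of times, and it remains NP-hard to distinguish ``almost all equations satisfiable'' from ``at most a $(1-\delta_0)$ fraction satisfiable'' for an explicit constant $\delta_0 > 0$.

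Next I would apply a gadget reduction from $\Phi$ to Vertex Cover. For each equation and each variable-occurrence one builds a constant-size gadget on which vertex covers encode truth assignments, and wires the occurrences of a given variable through an \emph{amplifier} graph --- a sparse graph with a good ratio of edge-expansion to degree, such as the Berman--Karpinski wheel amplifiers or an explicit $3$-regular expander --- so that a consistent assignment corresponds to an \emph{optimal} cover of that sub-instance, while every inconsistency or unsatisfied equation forces a constant number of extra cover vertices. Tracking completeness (a good assignment to $\Phi$ yields a small cover) and soundness (any cover yields an assignment whose violated fraction is bounded below by the cover's excess over its optimum) gives a bounded-degree graph $G$ with a constant multiplicative gap in minimum vertex cover size. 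To get \emph{exactly} $4$-regular graphs, the gadgets and amplifiers are chosen so that every vertex has degree at most $4$, and vertices of smaller degree are padded up to degree exactly $4$ by attaching small filler gadgets that are $4$-regular except at their single attachment point and whose minimum cover contributes a fixed additive amount independent of the outside choice; this shifts both the YES and NO optima by the same additive constant, so the multiplicative gap degrades only slightly and in a controlled way.

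The final step is the quantitative one: choose the amplifier with the best known expansion-versus-degree tradeoff and simultaneously optimize the parameters of the Step-1 expander, of each gadget's best-/worst-case analysis, and of the filler gadgets so that the resulting gap in minimum vertex cover on $4$-regular graphs is at least $53/52 - o(1)$; since $53/52$ is a fixed constant the $o(1)$ can be absorbed, yielding NP-hardness of $53/52$-approximation. The main obstacle is precisely this last step. Every reduction above loses a constant factor in the gap, so proving the specific ratio $53/52$ --- rather than merely ``some constant greater than $1$'' --- requires a delicate joint numerical optimization of the amplifier and of the case analysis of every gadget; this quantitative bookkeeping is the technical heart of~\cite{Chlebik06}, while everything else is by now standard machinery.
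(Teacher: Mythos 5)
The paper itself gives no proof of this theorem---it is imported verbatim from Chleb\'{i}k and Chleb\'{i}kov\'{a}~\cite{Chlebik06} and used as a black box in the APX-hardness reduction for 2PM---so your decision to cite it rather than prove it matches the paper exactly, and your sketch of the underlying H\aa{}stad-plus-amplifier machinery is a fair account of how the cited work obtains the $53/52$ constant.
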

\noindent
The precise statement of our hardness result is the following
theorem.
\begin{theorem}\label{thm:hardness_2pm}
  It is NP-hard to approximate 2PM to within a factor of $364/363$.
\end{theorem}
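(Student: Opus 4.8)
The plan is to build on the NP-hardness reduction from 3-SAT (Theorem~\ref{thm:nphard}) but instead reduce from Vertex Cover on $4$-regular graphs, whose inapproximability is quoted in Theorem~\ref{thm:hardness_vc}, producing a gap-preserving reduction. Given a $4$-regular graph $H=(X,F)$ with $|X|=n$ vertices and $|F|=2n$ edges, I would create, for each vertex $x_i\in X$, a gadget consisting of one ``vertex keyword'' that arrives in a first phase and is connected to two bidders, a ``cover'' bidder $x_i^{\mathrm{in}}$ and a ``non-cover'' bidder $x_i^{\mathrm{out}}$ (analogous to the assignment bidders $v_i^t,v_i^f$). For each edge $e=(x_i,x_j)\in F$, I would create an ``edge keyword'' arriving in the second phase together with a dedicated ``edge bidder'' $b_e$, and connect the edge keyword to $b_e$ and to the two cover bidders $x_i^{\mathrm{in}},x_j^{\mathrm{in}}$. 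The intuition mirrors the 3-SAT proof: matching vertex keyword $i$ to $x_i^{\mathrm{in}}$ in phase one ``uses up'' that cover bidder, so the edge keyword for an edge incident to $x_i$ can still collect profit $1$ (via $b_e$ and the still-free $x_i^{\mathrm{out}}$) — wait, that is the wrong direction, so I would instead arrange the second-price structure so that an edge keyword earns profit exactly when at least one of its two endpoints was put ``in'' the cover, which is achieved by connecting the edge keyword to the cover bidders and letting it be matched to $b_e$ only when a neighboring cover bidder remains free to act as second-price bidder; with the right phase ordering this forces ``covered'' edges to be the profitable ones.

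Concretely, I would set up the arithmetic so that an optimal 2PM solution has value exactly $2n + (n - \tau(H))$ or equivalently a quantity that is an affine function of the minimum vertex cover size $\tau(H)$: every vertex keyword contributes $1$ (there are $n$ of them, each of degree $2$), and an edge keyword contributes $1$ iff the corresponding edge is covered by the set $\{x_i : \text{vertex keyword } i \text{ matched to } x_i^{\mathrm{in}}\}$. To make the hard direction tight I may need to add padding keywords or duplicate gadgets (e.g., several parallel edge keywords per edge) so that the multiplicative gap between the YES and NO cases of Vertex Cover, which is $53/52$ on $4$-regular graphs, translates into the claimed $364/363$ gap for 2PM; tracking exactly how the additive constants $2n$ and $2n$ (number of edges) dilute the $53/52$ ratio is what pins down the constant $364/363$. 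Since a $4$-regular graph on $n$ vertices has minimum vertex cover of size at least $n/2$ (each vertex covers $4$ edges, $2n$ edges total), the optimum 2PM value lies in a narrow range, and a short calculation of the form $\frac{2n + 2n - \tau}{2n + 2n - \tau'}$ with $\tau,\tau'$ differing by the Vertex-Cover gap should yield $364/363$; I would choose the gadget sizes to make this exact.

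The key steps, in order, are: (1) describe the gadget construction from a $4$-regular graph and the two-phase arrival order; (2) prove the completeness direction — from a vertex cover $C$ of size $\tau$, match vertex keyword $i$ to $x_i^{\mathrm{in}}$ iff $x_i\in C$ and verify every edge keyword then has an available second-price bidder, giving value $4n-\tau$ (or whatever the exact affine expression is); (3) prove the soundness direction — from any 2PM solution of value $V$, argue (as in Theorem~\ref{thm:nphard}, using that there are only so many keywords and each yields profit at most $1$) that without loss of generality every vertex keyword is matched in phase one, read off a candidate cover $C$, and show its uncovered edges correspond exactly to unprofitable edge keywords, so $V \le 4n - \tau(H)$; (4) combine with Theorem~\ref{thm:hardness_vc} and do the ratio bookkeeping to extract $364/363$, possibly after replicating each edge gadget a constant number of times to sharpen the constant.

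The main obstacle I anticipate is step (3) together with the constant-chasing in step (4): I must ensure the reduction is genuinely gap-preserving, i.e., that a cheating 2PM solution cannot ``fake'' profit on edge keywords without using the corresponding cover bidders (ruling out, say, an edge keyword being served profitably via two edge bidders or via an endpoint that was not placed in the cover), and then I must calibrate the number of copies of each gadget so that the additive $\Theta(n)$ terms degrade the $53/52$ hardness factor to precisely $364/363$ rather than to some other nearby constant. Getting both the structural lemma and the exact arithmetic to line up is the delicate part; the rest follows the template already established in the NP-hardness proof.
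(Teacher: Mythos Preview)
Your high-level plan---reduce from Vertex Cover on $4$-regular graphs and invoke the $53/52$ hardness of Theorem~\ref{thm:hardness_vc}---is exactly what the paper does, and your edge gadget (edge keyword adjacent to a dedicated bidder $b_e$ plus the two endpoint bidders) matches the paper's. However, your vertex gadget is missing the crucial ingredient, and as written the reduction is not gap-preserving at all.

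In your construction each vertex keyword has degree two, so it earns profit~$1$ regardless of which of $x_i^{\mathrm{in}},x_i^{\mathrm{out}}$ it is matched to. Consequently there is \emph{no cost} to placing a vertex in the cover: taking $C=X$ (every vertex in the cover) leaves every ``cover'' bidder free for the edge phase, and every edge keyword then profits via $b_e$ plus a free endpoint bidder. The optimum of your 2PM instance is therefore always $n+2n=3n$, independent of $\tau(H)$, and no information about vertex-cover size survives. Your target expression $4n-\tau(H)$ cannot arise from a gadget with only one keyword per vertex; replicating edge gadgets or adding padding does not help, because the problem is that the reduction collapses on the completeness side, not that the ratio is merely diluted.

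What the paper does differently is to replace your single vertex keyword by a two-keyword ``chain'' gadget per vertex~$v$: keywords $h_v,l_v$ and bidders $y_v,z_v$, with $h_v$ adjacent to $\{v,y_v\}$ and $l_v$ adjacent to $\{y_v,z_v\}$, where $v$ itself is the bidder connected to the edge keywords. If $v$ is consumed by $h_v$ (so $v$ is \emph{not} available for edges), the gadget earns profit~$2$; if $v$ is kept free for the edge phase (so $v$ is ``in the cover''), $h_v$ must use $y_v$, leaving $l_v$ with only one free neighbour and hence profit~$0$, so the gadget earns only~$1$. This is precisely the per-vertex penalty for cover membership that your construction lacks, and it gives the exact relation $OPT_{2P}=2|V|+|E|-OPT_{VC}$. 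With $|E|=2|V|$ and $OPT_{VC}\ge |E|/4$, one gets $2|V|+|E|\le 8\,OPT_{VC}$, and an $\alpha$-approximation for 2PM yields an $\frac{8\alpha-7}{\alpha}$-approximation for Vertex Cover; solving $\frac{8\alpha-7}{\alpha}=\frac{53}{52}$ gives $\alpha=\frac{364}{363}$ on the nose, with no replication or padding needed.
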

\begin{proof}
  Given a graph $G$ as input to Vertex Cover, we construct an instance
  $f(G)$ of 2PM as follows. First, for each edge $e \in E(G)$, we
  create a keyword with the same label (called an \emph{edge
    keyword}), and for each vertex $v \in V(G)$, we create a bidder
  with the same label (called a \emph{vertex bidder}). Bidder $v$ bids
  for keyword $e$ if vertex $v$ is one of the two end points of edge
  $e$.  (Recall that in 2PM, if a bidder makes a non-zero bid for a
  keyword, that bid is $1$.)  In addition, for each edge $e$, we
  create a unique bidder $x_e$ who also bids for $e$. Furthermore, for
  each vertex $v$, we create a gadget containing two keywords $h_v$
  and $l_v$ and two bidders $y_v$ and $z_v$. We let $v$ and $y_v$ bid
  for $h_v$; and $y_v$ and $z_v$ bid for $l_v$.  The keywords arrive
  in an order such that for each $v \in V(G)$, keyword $h_v$ comes
  before $l_v$, and the edge keywords arrive after all of the $h_v$'s
  and $l_v$'s have arrived.  An example of this reduction is shown in
  Figure~\ref{fig:hardness_2pm}.

  \begin{figure}
    \begin{center}
      \begin{tabular}{cc}
        \subfigure[]{\scalebox{0.9}{\label{fig:hardness_2pm:a}
            \includegraphics{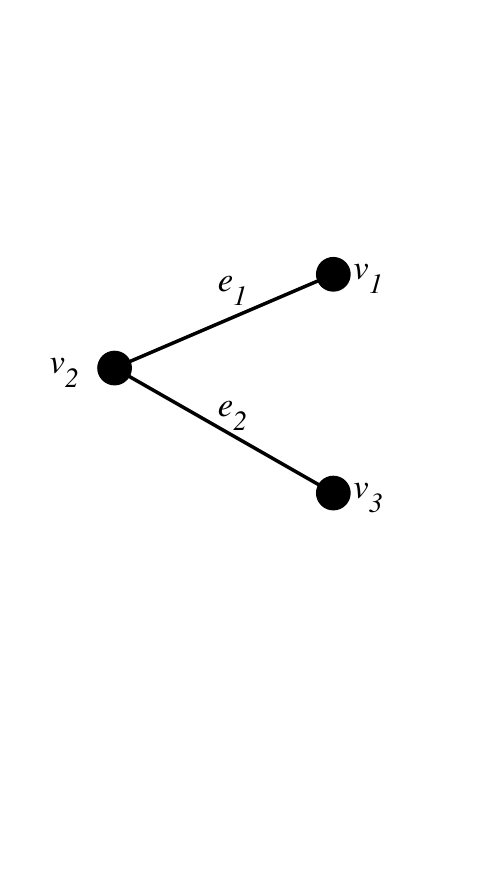}
          } 
        } &
        \subfigure[]{\scalebox{0.9}{\label{fig:hardness_2pm:b}
            \includegraphics{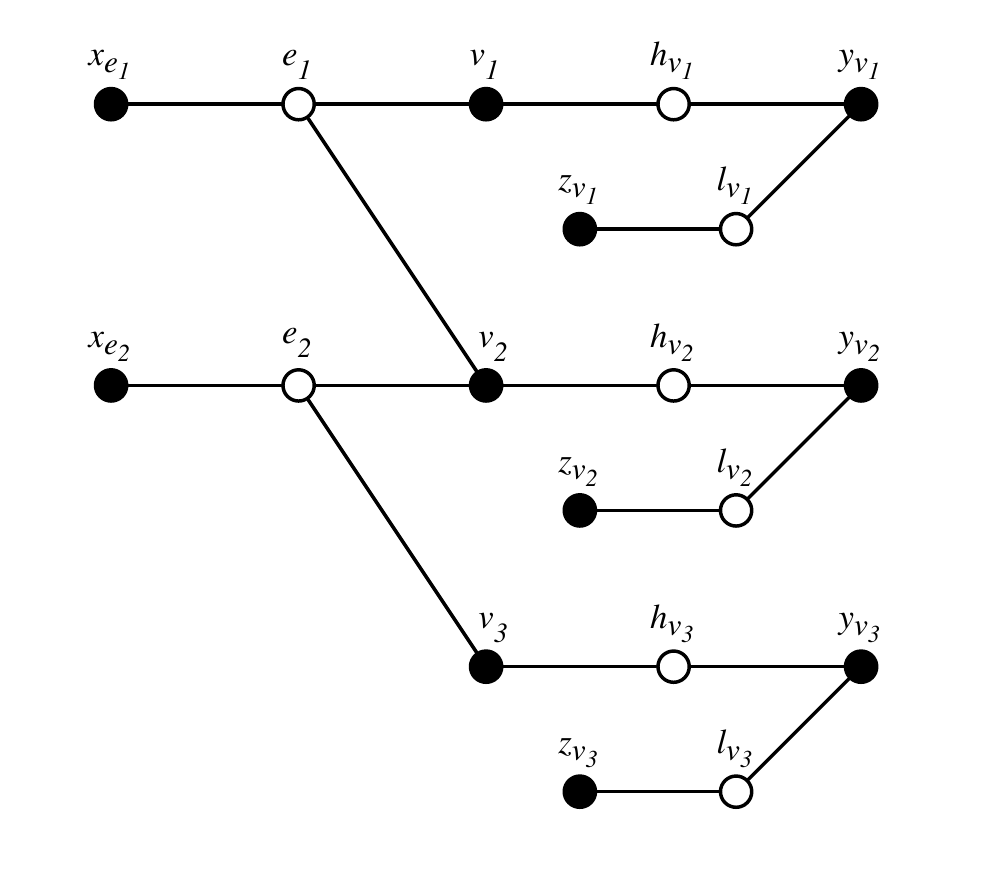}
          }
        }
      \end{tabular}
    \end{center}
    \caption{\small The reduction from an instance $G$ of vertex cover
      (Figure~\ref{fig:hardness_2pm:a}) to an instance $f(G)$ of
      2PM (Figure~\ref{fig:hardness_2pm:b}).
    } \label{fig:hardness_2pm}
  \end{figure}

  The following lemma provides the basis of the proof.
  \begin{lemma}\label{lem:hardness_2pm}
    Let $OPT_{VC}$ and $OPT_{2P}$ be the size of the minimum vertex
    cover of $G$ and the maximum second-price matching on $f(G)$,
    respectively. Then
    $$
      OPT_{2P} = 2|V(G)| + |E(G)| - OPT_{VC}
    $$
  \end{lemma}
\begin{proof}
  We first show that given a vertex cover $S$ of size $OPT_{VC}$ of
  $G$, we can construct a solution to the 2PM instance whose value is
  $2|V(G)| + |E(G)| - OPT_{VC}$. For each vertex $v \notin S$, we
  allocate $h_v$ to $v$ (with $y_v$ acting as the second-price bidder)
  and $l_v$ to $y_v$ (with $z_v$ acting as the second-price bidder),
  getting a profit of $2$ from the gadget for $v$.  For each vertex $v
  \in S$, we allocate $h_v$ to $y_v$ (with $v$ acting as the
  second-price bidder) and ignore $l_v$, getting a profit of $1$ from
  the gadget for $v$.  We then allocate each edge keyword $e$ to
  $x_e$.  During each of these edge keyword allocations, at least one
  of the two vertex bidders that bid for $e$ is still available, since
  $S$ is a vertex cover.  Hence, for each of these allocations, there
  is a bidder that can act as a second-price bidder, and the profit
  from the allocation is $1$.  This allocation yields a second-price
  matching of size $2|V(G)| + |E(G)| - OPT_{VC}$.  Therefore,
  $OPT_{2P} \geq 2|V(G)| + |E(G)| - OPT_{VC}$.
  
  To show that $OPT_{2P} \leq 2|V(G)| + |E(G)| - OPT_{VC}$, we start
  with an optimal solution to $f(G)$ of value $OPT_{2P}$ and
  construct a vertex cover of $G$ of size $2|V(G)| + |E(G)| -
  OPT_{2P}$.  To do this, we first claim that there exists an optimal
  solution of $f(G)$ in which every edge-keyword is allocated for a
  profit of $1$. Consider any optimal second-price matching of the
  instance.  Let $e$ be an edge-keyword $e$ that is not allocated for
  a profit of $1$.  If it is adjacent to a vertex bidder that is
  unassigned when $e$ arrives, then $e$ can be allocated to $x_e$ for
  a profit of $1$, which can only increase the value of the solution.
  Suppose, on the other hand, that both of its vertex bidders are not
  available when $e$ arrives.  Let $v$ be a vertex bidder that bids
  for $e$.  Since it is not available, $h_v$ must have been allocated
  to $v$.  We can transform this second-price matching to another one in
  which $h_v$ is assigned to $y_v$, $l_v$ is ignored and $e$ is
  assigned to $x_e$, with $v$ acting the second-price bidder in both
  cases.  This does not decrease the total profit of the solution.
  Hence, we can perform these transformations for each edge keyword
  $e$ that is not allocated for a profit of $1$ until we obtain a new
  optimal solution in which each edge keyword is allocated for a
  profit of $1$.

  Now consider an optimal second-price matching in which all edge
  keywords are allocated for a profit of $1$. Let $T \subseteq V(G)$
  be the set of vertices represented by vertex bidders that are not
  allocated any keywords in this second-price matching.  Then $|T| =
  2|V(G)| + |E(G)| - |OPT_{2P}|$, and $T$ is a vertex cover, which
  implies $OPT_{VC} \leq 2|V(G)| + |E(G)| - OPT_{2P}$. The lemma
  follows.
\end{proof}

Now, suppose that we have an $\alpha$-approximation for 2PM.  We will
show how to use this approximation algorithm and our reduction to
obtain an $((8\alpha - 7)/\alpha)$-approximation for Vertex Cover on
4-regular graphs.  By Theorem~\ref{thm:hardness_vc}, this means that
$(8\alpha - 7)/\alpha \geq 53/52$, and hence $\alpha \geq 364/363$,
unless $P = NP$.

To construct this $((8\alpha - 7)/\alpha)$-approximation algorithm,
given a 4-regular graph $G$, run the above reduction to obtain a 2PM
instance $f(G)$.  Then use the $\alpha$-approximation to obtain a
second-price matching $M$ whose value is at least $OPT_{2P}/\alpha$.
Now, just as in the proof of Lemma~\ref{lem:hardness_2pm}, we can
assume that in $M$, every edge keyword $e$ is allocated to $x_e$.
Hence, the set of vertices $T$ associated with the vertex bidders that
are not allocated a keyword form a vertex cover, and
\begin{eqnarray}
|T| 
& \leq & 2|V(G)| + |E(G)| - OPT_{2P}/\alpha \nonumber \\
& =    & 2|V(G)| + |E(G)| - (2|V(G)| + |E(G)| - OPT_{VC})/\alpha  \nonumber \\
& =    & (1-1/\alpha)(2|V(G)| + |E(G)|) + OPT_{VC}/\alpha \label{eqn:h2pm}
\end{eqnarray}
Since $G$ is $4$-regular, we have $OPT_{VC} \geq m/4 =
(2|V(G)|+|E(G)|)/8$, and hence by (\ref{eqn:h2pm}), we conclude that
$|T| \leq ((8\alpha - 7)/\alpha) OPT_{VC}$, which finishes the proof
of the theorem.
\end{proof}

\subsection{A 2-Approximation Algorithm}\label{sec:approx_offline2pm}

Consider an instance $G = (U \cup V,E)$ of the 2PM problem.  We
provide an algorithm that first finds a maximum matching $f : U
\rightarrow V$ and then uses $f$ to return a second-price matching
that contains at least half of the keywords matched by
$f$.\footnote{Note that $f$ is a partial function.}  Given a matching
$f$, call an edge $(u,v) \in E$ such that $f(u) \not= v$ an {\em
  up-edge} if $v$ is matched by $f$ and $f^{-1}(v)$ arrives before
$u$, and a {\em down-edge} otherwise.  Recall that we have assumed
without loss of generality that the degree of every keyword in $U$ is
at least two.  Therefore, every keyword $u \in U$ that is matched by
$f$ must have at least one up-edge or down-edge.
Theorem~\ref{thm:2_approx} 
%(proved in Appendix~\ref{appendix:2_approx}) 
shows that the following algorithm,
called ReverseMatch, is a $2$-approximation for 2PM.

\begin{center}
\small
\fbox{
\begin{tabular}{l}
{\bf ReverseMatch Algorithm:}\\
\hline
{\em Initialization:} \\
Find an arbitrary maximum matching $f:U\rightarrow V$ on $G$. \\
\hline
  {\em Constructing a 2nd-price matching:} \\
  Consider the matched keywords in reverse order of their arrival. \\
  For each keyword $u$: \\
  \hspace{.2in} If keyword $u$ is adjacent to a down-edge $(u,v)$: \\
    \hspace{.4in} Assign keyword $u$ to bidder $f(u)$ (with $v$ acting as the second-price bidder). \\
  \hspace{.2in} Else: \\
    \hspace{.4in} Choose an arbitrary bidder $v$ that is adjacent to keyword $u$. \\
    \hspace{.4in} Remove the edge $(f^{-1}(v),v)$ from $f$. \\
    \hspace{.4in} Assign keyword $u$ to bidder $f(u)$ (with $v$ acting as the second-price bidder). 
\end{tabular}
}
\end{center}
\normalsize

\begin{theorem}\label{thm:2_approx}
 The ReverseMatch algorithm is a 2-approximation.
\end{theorem}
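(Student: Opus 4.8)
The plan is to show two things: first, that the output of ReverseMatch is a valid second-price matching, and second, that its value is at least $|f|/2$, where $|f|$ is the size of the maximum matching found in the initialization. Since $OPT_{2P} \le OPT_{\text{matching}} = |f|$ (every second-price matching is in particular a matching of keywords to their first-price bidders), this gives the $2$-approximation.

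For validity, I would argue by an invariant maintained as we sweep keywords in reverse arrival order. At the moment keyword $u$ is processed, I claim $f$ (as modified so far) is still a matching, and the bidder $v$ chosen as second-price bidder for $u$ is not equal to $f(u)$ and is genuinely ``free'' to act as second-price bidder — i.e., it has not been assigned any keyword as a first-price bidder in the partial second-price matching built so far. The key point: in the down-edge case, $(u,v)$ is a down-edge, so by definition $f^{-1}(v)$ either does not exist or arrives after $u$; since we process in reverse order, any keyword assigned so far arrived after $u$, and the only keywords we have assigned are to their $f$-partners, so $v$ — whose $f$-preimage (if any) arrives after $u$ — would only have been ``used'' as a first-price bidder if $f^{-1}(v)$ were already processed and assigned to $v$; I need to check the bookkeeping rules out exactly this, which is where the down/up distinction does its work. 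In the else-case (no down-edge), we first delete $(f^{-1}(v),v)$ from $f$, which frees $v$; since $(u,v)$ is necessarily an up-edge here, $f^{-1}(v)$ arrives before $u$ and hence has not yet been processed, so deleting its edge does not retroactively invalidate any assignment already made. This is the step I expect to be the main obstacle: carefully phrasing the invariant so that both cases go through and so that the edge deletions in the else-branch never destroy a previously committed assignment.

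For the value bound, the assignments to down-edge keywords each contribute $1$ and never remove anything from $f$. The else-branch keywords also each contribute $1$, but each such step deletes one edge $(f^{-1}(v), v)$ from $f$. I would set up a charging argument: charge each else-step to the keyword $f^{-1}(v)$ whose edge it destroys. The crucial observation is that $f^{-1}(v)$ arrives strictly before $u$, hence will be processed strictly later in the reverse sweep, hence is still ``matched by $f$'' at the time we would process it — no, more carefully: once its edge is deleted it is no longer matched, so when we reach it in the reverse sweep we skip it. So each else-step ``kills'' exactly one future keyword, and produces $1$ unit of profit. Meanwhile every keyword originally matched by $f$ is either (i) assigned by ReverseMatch for profit $1$, or (ii) killed by some else-step before we reach it. Since each else-step kills at most one keyword and itself yields profit $1$, the number of keywords of type (ii) is at most the number of else-steps, which is at most the total profit. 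Hence $|f| \le (\text{profit}) + (\text{profit}) = 2\cdot\text{profit}$, i.e., the profit is at least $|f|/2 \ge OPT_{2P}/2$.

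The one remaining subtlety is confirming that a keyword killed by an else-step is never also counted as assigned — but this is immediate, since once $(f^{-1}(v),v)$ is removed, $f^{-1}(v)$ is unmatched by $f$ and the algorithm only processes keywords matched by $f$, so it is skipped when reached. I would also double check that in the else-branch such a $v$ with $f^{-1}(v)$ defined always exists: since $u$ has degree $\ge 2$ and is matched by $f$, it has an up-edge or a down-edge; in the else-branch it has no down-edge, so it has an up-edge $(u,v)$, and by definition of up-edge $v$ is matched by $f$, so $f^{-1}(v)$ is defined. Assembling these pieces yields the theorem.
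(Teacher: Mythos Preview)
Your charging argument for the value bound is correct and is exactly the paper's (very terse) argument: every else-step both contributes profit $1$ and deletes the $f$-edge of exactly one not-yet-processed keyword, so the number of originally $f$-matched keywords that end up skipped equals the number of else-steps, which is at most the total profit; hence profit $\geq |f|/2 \geq OPT_{2P}/2$. The paper states only this much and asserts validity as ``clear.''

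Your validity argument, however, has the direction backwards. You propose the invariant that the second-price bidder $v$ ``has not been assigned any keyword as a first-price bidder in the partial second-price matching built so far.'' But ``so far'' in the reverse sweep means keywords arriving \emph{after} $u$, whereas validity (which is evaluated in the forward arrival order) requires that $v$ not be the first-price bidder of any keyword arriving \emph{before} $u$. Indeed your stated invariant can fail: in the down-edge case where $f^{-1}(v)$ exists and arrives after $u$, that keyword has already been processed in the reverse sweep and may well have been assigned to $v$ --- so $v$ \emph{is} a first-price bidder in the partial matching built so far --- yet validity holds anyway, precisely because $f^{-1}(v)$ arrives after $u$. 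The right statement is: the only keyword in $A$ that can have $v$ as its first-price bidder is $f_0^{-1}(v)$ (since assigned keywords go to their original $f$-partners), and in the down-edge case this keyword either does not exist, has had its edge removed (hence is not in $A$), or arrives after $u$; in the else case you delete its edge before it is processed, so it is never assigned. With the direction corrected, the rest of your outline goes through.
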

\begin{proof}
Since the number of vertices matched by $f$ is an upper bound on the
profit of the maximum second-price matching on $G$, we need only to
prove that the second-price matching contains at least half of the
keywords matched by $f$.  By the behavior of the algorithm, it is
clear that whenever a vertex $u$ is matched to $f(u)$ in the
second-price matching, the profit obtained is $1$.  Furthermore, every
time an an edge is removed from $f$, a new keyword is added to the
second-price matching. Thus, the theorem follows.
\end{proof}

\section{Online Second-Price Matching}\label{sec:online2pm}

In this section, we consider the online 2PM problem, in which the
keywords arrive one-by-one and must be matched by the algorithm as
they arrive.  We start, in Section~\ref{sec:lower_online2pm}, by giving
a simple lower bound showing that no deterministic algorithm can
achieve a competitive ratio better than $m$, the number of keywords.
Then we move to randomized online algorithms and show that no
randomized algorithm can achieve a competitive ratio better than $2$.
In Section~\ref{sec:rand_upper_online2pm}, we provide a randomized
online algorithm that achieves a competitive ratio of
$2\sqrt{e}/(\sqrt{e}-1) \approx 5.083$.

\subsection{Lower Bounds}\label{sec:lower_online2pm}

The following theorem establishes our lower bound on deterministic
algorithms, which matches the trivial algorithm of arbitrarily
allocating the first keyword to arrive, and refusing to allocate any
of the remaining keywords.  
%The adversary for this lower bound offers
%the first keyword two bidders; whichever bidder the algorithm chooses
%for this first keyword will be needed as a second-price bidder for the
%rest of the keywords.  
\begin{theorem}\label{thm:deterministic_adversary}
For any $m$, there is an adversary that creates a graph with $m$
keywords that forces any deterministic algorithm to get a competitive
ratio no better than $m$.
\end{theorem}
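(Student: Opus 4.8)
The plan is to design an adaptive adversary that, for any target $m$, forces a deterministic online algorithm into a situation where it earns only $1$ unit of profit while the offline optimum earns $m$. The basic idea is to feed the algorithm a sequence of keywords, each of degree exactly two at the moment it arrives, and to react to the algorithm's choice by ``using up'' exactly the bidders the algorithm commits to, so that the next keyword has only one available neighbor and hence yields no profit.

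First I would set up the gadget. Start with a large pool of bidders. The first keyword $u_1$ is made adjacent to two fresh bidders, say $v_1$ and $v_1'$; since $u_1$ has degree two, any deterministic algorithm must either allocate it (earning $1$) or refuse. If the algorithm allocates $u_1$, it must pick both $v_1$ and $v_1'$ as the first- and second-price bidders, so after this step both are consumed (the first-price bidder has its unit budget spent; we will simply never offer the second-price bidder again). Now the adversary issues $u_2$ adjacent to, say, $v_1$ (already consumed) and one fresh bidder $v_2$ — so $u_2$ still has degree two as required, but only one of its neighbors is actually available, so the algorithm can get at most $0$ profit from $u_2$ no matter what it does. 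Repeating: each subsequent keyword $u_t$ for $t \ge 2$ is attached to one already-consumed bidder and one fresh bidder, guaranteeing degree two but zero realizable profit. Thus the algorithm's total profit is at most $1$. If instead the algorithm refuses to allocate $u_1$, the adversary can follow essentially the same script — or, even more simply, terminate, since then the algorithm's profit is $0$, which is even worse. In all cases the deterministic algorithm earns at most $1$.

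Next I would exhibit an offline solution of value $m$. Because the adversary's strategy is adaptive, I need to argue that regardless of which branch was taken, the final graph admits a second-price matching using all $m$ keywords for a profit of $1$ each. This is where a little care is needed: I would arrange the fresh bidders so that the constructed graph, viewed in hindsight, contains a perfect second-price matching. Concretely, keep a spare bidder $w$ (or a small set of spares) adjacent to every keyword, so that in the offline solution each $u_t$ can be matched to its fresh neighbor $v_t$ with $w$ (or the appropriate still-unmatched spare) acting as the second-price bidder; alternatively, route the offline matching as $u_t \mapsto v_t$ with $u_{t+1}$'s shared bidder serving as the second-price witness before $u_{t+1}$ consumes it. Either way, an optimal offline algorithm, knowing the whole sequence, can extract $1$ from every keyword, for a total of $m$.

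The main obstacle is the bookkeeping in the previous paragraph: making the adversary's reactive construction simultaneously (i) keep every keyword at degree exactly two when it arrives, (ii) leave the online algorithm only one usable neighbor per keyword after the first, and (iii) still admit a full offline second-price matching of value $m$. Getting all three to hold at once requires choosing the shared/consumed bidder for $u_{t+1}$ and the placement of the spare bidder(s) so that no conflict arises; once the right gadget is fixed, the competitive ratio bound $m/1 = m$ is immediate, and since the adversary is explicit this also shows no deterministic algorithm can beat it.
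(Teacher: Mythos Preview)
Your core strategy---wait for the algorithm to commit on some keyword and then attach every later keyword to the bidder it just consumed---is exactly the paper's approach. But two of your concrete suggestions would break the construction, and the refusal case is not actually handled.

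First, the ``spare bidder $w$ adjacent to every keyword'' option is fatal. In 2PM the second-price bidder's budget is never charged, so the online algorithm could simply use $w$ as the second-price witness for \emph{every} keyword and match each $u_t$ to its fresh neighbor for profit $1$, achieving value $m$ just like offline. You must not introduce any universally available bidder; the whole point is that after the algorithm's first allocation, later keywords have only one neighbor with positive remaining budget.

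Second, ``terminate'' when the algorithm refuses $u_1$ does not produce a graph with $m$ keywords, so it does not prove the theorem as stated. The paper handles refusals cleanly: as long as the algorithm declines, the next keyword gets two \emph{fresh} bidders. Either the algorithm never commits through keyword $m$ (then it earns at most $1$, on the last keyword, while offline trivially profits on all $m$ since each keyword has two private neighbors), or it commits at some step $k$, say allocating to bidder $a_k$. From then on each keyword $u_i$ with $i>k$ is made adjacent to $a_k$ (now budget $0$) and one fresh bidder $c_i$, so online earns $0$ on all of them.

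Your second offline idea is the right one, and you should state it without hedging: offline matches keywords $1,\dots,k-1$ each to one of their two private bidders, matches keyword $k$ to its \emph{other} neighbor $b_k$ (so $a_k$ keeps full budget), and then matches each $u_i$ with $i>k$ to $c_i$ using $a_k$ as the second-price witness. Since $a_k$ only ever plays the second-price role in the offline solution, its budget is never charged, and offline earns $m$. The bookkeeping you flag as the ``main obstacle'' disappears once you drop the spare-bidder idea and cleanly separate the refusal phase (fresh pairs) from the post-commitment phase (reuse the one consumed bidder).
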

\begin{proof}
The adversary shows the algorithm a single keyword (keyword $1$) that
has two adjacent bidders, $a_1$ and $b_2$.  If the algorithm does not
match keyword $1$ at all, a new keyword $2$ arrives that is adjacent
to two new bidders $a_2$ and $b_2$.  The adversary continues in this
way until either $m$ keywords arrive or the algorithm matches a
keyword $k < m$.  In the first case, the algorithm's performance is at
most $1$ (because it might match keyword $m$), whereas the adversary
can match all $m$ keywords.  Hence, the ratio is at least $m$.

In the second case, the adversary continues as follows.  Suppose
without loss of generality that the algorithm matches keyword $k$ to
$a_k$.  Then each keyword $i$, for $k+1 \leq i \leq m$, has one edge
to $a_k$ and one edge to a new bidder $c_i$.  Since the algorithm
cannot match any of these keywords for a profit, its performance is
$1$.  The adversary can clearly match each keyword $i$ for profit, for
$1 \leq i \leq k -1$, and if it matches keyword $k$ to $b_k$, then it
can use $a_k$ as a second-price bidder for the remaining keywords to
match them all to the $c_i$'s for profit.  Hence, the adversary can
construct a second-price matching of size at least $m$.
\end{proof}

We next show that no online (randomized) online algorithm for 2PM can
achieve a competitive ratio better than $2$.  
\begin{theorem}\label{thm:rand_lower_online2pm}
The competitive ratio of any randomized algorithm for 2PM must
be at least $2$.
\end{theorem}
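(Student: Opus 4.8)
The plan is to use Yao's principle: exhibit a distribution over inputs on which every deterministic algorithm has expected profit at most roughly half of the expected optimum. The natural hard instance to randomize over is the ``caterpillar'' family already implicit in Theorem~\ref{thm:deterministic_adversary}, truncated to a constant size so that the ratio is pinned at~$2$. Concretely, I would first build a fixed template: keyword~$1$ arrives with two adjacent bidders; keyword~$2$ arrives adjacent to one new bidder and (in one branch) to a bidder shared with keyword~$1$; and then the adversary stops. The randomization sits in which continuation the adversary plays after seeing whether (and how) keyword~$1$ was matched. Since an online algorithm must commit on keyword~$1$ before learning the future, a single coin flip should suffice to punish it.

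The key steps, in order, are as follows. First, set up a small number of instances $I_1, \ldots, I_k$ that all agree on the first keyword (and possibly the first few keywords), differing only in the later keywords, chosen so that for each deterministic action on the shared prefix there is some $I_j$ on which that action leads to profit at most $1$ while $\mathrm{OPT}(I_j)$ is~$2$ (or, after scaling up by repeating disjoint copies, profit at most $n$ versus optimum $2n$, to kill low-order terms). Second, put a distribution $\mathcal D$ on the $I_j$ under which, no matter what the algorithm does on the prefix, its conditional expected profit is at most $\tfrac12 \cdot \mathbb E_{\mathcal D}[\mathrm{OPT}]$ up to an additive constant. Third, invoke Yao's principle: the existence of such a $\mathcal D$ means every randomized algorithm has competitive ratio at least $2$ in the limit; if the single-gadget version only gives ratio $2 - o(1)$, take $n$ independent parallel copies of the gadget (the keywords of the copies interleaved or concatenated, bidders disjoint), so that by linearity the ratio tends to exactly $2$, and then note that a ratio of $2-\eps$ for every $\eps$ forces a ratio of at least $2$ for the fixed bound claimed.

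The main obstacle is designing the gadget so that the two branches are genuinely indistinguishable on the committed prefix \emph{and} that every possible first-keyword decision — match to the ``left'' bidder, match to the ``right'' bidder, or decline to match — is simultaneously bad against $\mathcal D$; with only two bidders on keyword~$1$ there are effectively three such decisions, so I may need the adversary's branch set (and hence the support of $\mathcal D$) to have size three and to choose the branch probabilities so that all three decisions yield the same conditional expected profit. A secondary technical point is handling the ``decline'' option cleanly: the adversary must be able to continue in a way that still leaves the algorithm with total profit bounded while the optimum stays at $2$, which is exactly the mechanism used in the second case of Theorem~\ref{thm:deterministic_adversary} and should transfer directly. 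Once the per-gadget bound is at hand, the amplification to a clean factor of $2$ is routine.
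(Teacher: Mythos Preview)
Your plan has a genuine gap: the two-keyword gadget you describe, under any \emph{fixed} input distribution (which is what Yao's principle requires), yields a ratio of only $4/3$, not $2$, and taking $n$ disjoint parallel copies does not move that ratio at all. Concretely, with keyword~$1$ adjacent to $\{a,b\}$ and keyword~$2$ adjacent to $\{a,c\}$ or $\{b,c\}$ each with probability $1/2$, the optimum is always $2$, while the deterministic algorithm that matches keyword~$1$ to $a$ banks one unit immediately and then, with probability $1/2$, sees keyword~$2$ adjacent to $\{b,c\}$ and collects a second unit --- expected profit $3/2$. No reweighting or third branch repairs this: once keyword~$1$ is matched, one of $a,b$ survives with full budget, and any continuation that does not reuse precisely that survivor hands the algorithm its second unit; since the distribution must be fixed \emph{before} the algorithm acts, it cannot target both possible survivors simultaneously. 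You are importing intuition from the \emph{adaptive} adversary of Theorem~\ref{thm:deterministic_adversary} --- which can look at the first move and then select the punishing branch, and which does give a deterministic bound of~$2$ on this very gadget --- into the oblivious setting of Yao, where that move is illegal. As for amplification: because the copies have disjoint bidder sets, the algorithm recognizes each arriving keyword's copy from its neighborhood and plays the per-copy optimum; by linearity $n$ copies give expected profit $3n/2$ against optimum $2n$, ratio still $4/3$.

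The paper's construction differs in exactly the way needed: it uses a single chain of length $m$ with a fresh independent coin at \emph{every} step (keyword~$i+1$ is adjacent to a uniformly random one of keyword~$i$'s two bidders together with one new bidder). The optimum is always $m$. An inductive argument --- coupling ``normal'' sub-instances (both current bidders available) with ``restricted'' ones (one marked unavailable) and showing that Greedy is the best deterministic strategy --- yields expected profit exactly $(m+1)/2$. The ratio $2m/(m+1)$ approaches $2$ only as $m\to\infty$; no finite truncation pins it at $2$, and parallel repetition of a fixed-length chain cannot close the remaining multiplicative gap.
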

\begin{proof}
We invoke Yao's Principle~\cite{Yao77} and construct a distribution of
inputs for which the best deterministic algorithm achieves an expected
performance of (asymptotically) $1/2$ the value of the optimal
solution.

Our distribution is constructed as follows.  The first keyword
arrives, and it is adjacent to two bidders.  Then the second keyword
arrives, and it is adjacent to one of the two bidders adjacent to the
first keyword, chosen uniformly at random, as well as a new bidder;
then the third keyword arrives, and it is adjacent to one of the
bidders adjacent to the second keyword, chosen uniformly at random, as
well as a new bidder; and so on, until the $m$-th keyword arrives.  We
call this a \emph{normal} instance.  To analyze the performance of the
online algorithms, we also define a \emph{restricted} instance to be
one that is exactly the same as a normal instance except that one of
the two bidders of the first keyword is marked \emph{unavailable},
i.e., he can not participate in any auction.
  
Clearly, an offline algorithm that knows the random choices beforehand
can allocate each keyword to the bidder that will not be adjacent to
the keyword that arrives next.  In this way, it can ensure that for
each keyword, there is a bidder that can act as a second-price bidder.
Hence for a normal instance, the optimal second-price matching obtains
a profit of $m$.

Consider the algorithm Greedy, which allocates a keyword to an
arbitrary adjacent bidder if and only if there is another available
bidder to act as a second-price bidder.  Our proof consists of two
steps: first, we will show that the expected performance of Greedy on
the normal instance is $(m + 1)/2$, and second we will prove that
Greedy is the best algorithm in expectation for both types of
instances.

Let $X_k^*$ and $Y_k^*$ be the \emph{expected} profit of Greedy on a
normal and a restricted instance of $k$ keywords, respectively (where
$X_0^*$ and $Y_0^*$ are both defined to be $0$). Given the first
keyword of a normal instance, Greedy allocates it to an arbitrary
bidder. Then, with probability $1/2$, it is faced with a normal
instance of $k-1$ keywords, and with probability $1/2$, it is faced
with a restricted instance of $k-1$ keywords. Therefore, for all
integers $k \geq 1$,
\begin{equation}
  \label{eq:1}
  X_{k}^* = 1/2(X_{k-1}^* + Y_{k-1}^*) + 1 \enspace .
\end{equation}
  
On the other hand, given the first keyword of a restricted instance,
Greedy just waits for the second keyword.  Then, with probability
$1/2$, the second keyword chooses the marked bidder, giving Greedy a
restricted instance of $k-1$ keywords, and with probability $1/2$, the
second keyword chooses the unmarked bidder, giving Greedy a normal
instance of $k-1$ keywords. Therefore, for all $k$,
\begin{equation}
  \label{eq:2}
  Y_k^* = 1/2(X^*_{k-1} + Y^*_{k-1}) \enspace .
\end{equation}
  
From (\ref{eq:1}) and (\ref{eq:2}) we have, for all $k$,
\begin{equation}
  \label{eq:3}
  Y_k^* = X_k^* - 1 \enspace .
\end{equation}
Plugging 
(\ref{eq:3}) for $k = m-1$ into (\ref{eq:1}) for $k = m$ yields
\begin{equation}
  \label{eq:4}
  X_m^* = X_{m-1}^* + 1/2 \enspace , 
\end{equation}
and hence, by induction $X_m^* = (m+1)/2$.

Now, we prove that Greedy is the best among all algorithms on these
two types of instances. In fact, we make it easier for the algorithms
by telling them beforehand how many keywords in the instance they will
need to solve. Let $X_m$ and $Y_m$ be the expected number of keywords
in the second-price matching produced by the \emph{best} algorithms
that ``know'' that they are solving a normal instance of size $m$ and
a restricted instance of size $m$, respectively. Let ${\cal A}_m$ and
${\cal B}_m$ denote these optimal algorithms.

We prove that $X_m \leq X_m^*$ and $Y_m \leq Y_m^*$ for all $m$ by
induction. The base case in which $m=1$ is easy, since no algorithm
can obtain a profit of more than one on a normal instance of one
keyword or more than zero on a restricted instance of one keyword.  We
now prove the induction step.

First, consider ${\cal A}_m$. When the first keyword arrives, ${\cal
  A}_m$ has two choices: either ignore it or allocate it to one of the
bidders. If ${\cal A}_m$ ignores the first keyword, its performance is
at most the performance of ${\cal A}_{m-1}$ on the remaining keywords,
which constitute a normal instance of $m-1$ keywords. On the other
hand, if ${\cal A}_m$ allocates the first keyword to one of the
bidders, then with probability $1/2$, it is faced with a normal
instance of $m-1$ keywords, and with probability $1/2$ it is faced
with a restricted instance of $m-1$ keywords.  The performance of
${\cal A}_m$ on these instance is at most the performance of ${\cal
  A}_{m-1}$ and ${\cal B}_{m-1}$, respectively. Thus, by the induction
hypothesis, (\ref{eq:3}), and (\ref{eq:4}), we have
\begin{eqnarray*}
  X_m & \leq & \max\{X_{m-1}, 1/2(X_{m-1} + Y_{m-1}) + 1\} \\
  & \leq & \max\{X_{m-1}^*, 1/2(X_{m-1}^* + Y_{m-1}^*) + 1\} \\
  & =    & \max\{X_{m-1}^*, 1/2(X_{m-1}^* + X_{m-1}^* - 1) + 1\} \\
  & =    & X_{m-1}^* + 1/2 \\
  & =    & X_m^* \enspace .
\end{eqnarray*}
  
Next, consider ${\cal B}_m$. When the first keyword arrives, ${\cal
  B}_m$ cannot allocate it for a profit.  If it allocates it for a
profit of $0$, then it is faced with a restricted instance of $m-1$
keywords.  If it does not allocate the keyword, then with probability
1/2, ${\cal B}_m$ is faced with a normal instance of $m-1$ keywords,
and with probability $1/2$, it is faced with a restricted instance of
$m-1$ keywords. Its performance on these instances is at most those of
${\cal A}_{m-1}$ and ${\cal B}_{m-1}$, respectively. Thus, by the
induction hypothesis and (\ref{eq:2}), we have
\begin{eqnarray*}
  Y_m & \leq & \max\{Y_{m-1},1/2(X_{m-1} + Y_{m-1})\} \\
  & \leq & \max \{Y_{m-1}^*,1/2(X_{m-1}^* + Y_{m-1}^*)\} \\
  & = & Y_m^* \enspace .
\end{eqnarray*}
This completes the proof.
\end{proof}

\subsection{A Randomized Competitive Algorithm}\label{sec:rand_upper_online2pm}

In this section, we provide an algorithm that achieves a competitive
ratio of $2\sqrt{e}/(\sqrt{e} - 1) \approx 5.083$.  The result builds
on a new generalization of the result that the Ranking algorithm for
online bipartite matching achieves a competitive ratio of $e/(e-1)
\approx 1.582$.  This was originally shown by Karp, Vazirani, and
Vazirani \cite{Karp90}, though a mistake was recently found in their
proof by Krohn and Varadarajan and corrected by Goel and Mehta
\cite{Goel08b}.

The online bipartite matching problem is merely the first-price
version of 2PM, i.e., the problem in which there is no requirement for
there to exist a second-price bidder to get a profit of $1$ for a
match.  The Ranking algorithm chooses a random permutation on the
bidders $V$ and uses that to choose matches for the keywords $U$
as they arrive.  This is described more precisely below.

\begin{center}
\small
\fbox{
\begin{tabular}{l}
{\bf Ranking Algorithm:}\\
\hline
\emph{Initialization}: \\
Choose a random permutation (ranking) $\sigma$ of the bidders $V$.\\
\hline
  \emph{Online Matching}: \\
  Upon arrival of keyword $u \in U$: \\
  \hspace{.2in} Let $N(u)$ be the set of neighbors of $u$ that have not been matched yet. \\
  \hspace{.2in} If $N(u) \neq \emptyset$, match $u$ to the bidder $v \in N(u)$ that minimizes $\sigma(v)$.
\end{tabular}
}
\end{center}
\normalsize

\noindent
Karp, Vazirani, and Vazirani, and Goel and Mehta prove the following
result.
\begin{theorem}[Karp, Vazirani, and Vazirani \cite{Karp90}
    and Goel and Mehta \cite{Goel08b}]\label{thm:old_ranking} 
  The Ranking algorithm for online bipartite matching achieves a
  competitive ratio of $e/(e-1) + o(1)$.
\end{theorem}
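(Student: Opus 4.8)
\emph{Proof strategy.} I would use the now-standard potential-function analysis of Ranking, following the corrected argument of Goel and Mehta~\cite{Goel08b} and its subsequent simplifications. First I would reduce to the clean case $|U| = |V| = n$ in which $G$ admits a perfect matching $M$ --- a standard reduction obtained by restricting attention to the vertices saturated by a fixed maximum matching. In this case $\mathrm{OPT} = n$, so it suffices to show $\expect[\,|\ranking|\,] \ge (1 - 1/e)\,n - O(1)$. Fix the random ranking $\sigma$ of $V$; for $t \in \{1, \dots, n\}$ let $v_t$ be the bidder of rank $t$ (so $\sigma(v_t) = t$) and let $x_t = \Pr[\,v_t \text{ is matched by Ranking}\,]$, the probability taken over $\sigma$. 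Since $|\ranking|$ equals the number of matched bidders, linearity of expectation gives $\expect[\,|\ranking|\,] = \sum_{t=1}^{n} x_t =: S_n$, so the goal becomes $S_n \ge (1 - 1/e)\,n - O(1)$.

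The heart of the argument is the inequality $1 - x_t \le \tfrac{1}{n} \sum_{s=1}^{t} x_s$ for every $t \in \{1,\dots,n\}$, which I would prove by an exchange argument. Given $\sigma$, let $\sigma'$ be the ranking obtained by promoting $v_t$ to rank $1$ and shifting the bidders formerly at ranks $1,\dots,t-1$ down by one; the map $\sigma \mapsto \sigma'$ is a bijection on rankings, so $\sigma'$ is again uniform. The structural claim to establish is: \emph{if $v_t$ is unmatched by Ranking on $\sigma$, then $M(v_t)$ is matched by Ranking on $\sigma'$ to a bidder of rank at most $t$} --- intuitively, promoting an unmatched bidder to the top triggers a chain of re-matchings that stays confined to the top $t$ ranks and ends by saturating $v_t$'s $M$-partner. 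Granting this, $1 - x_t$ is at most the probability, over the uniform ranking $\sigma'$, that the $M$-partner of the rank-$1$ bidder of $\sigma'$ is matched to a bidder of rank at most $t$; conditioning on which bidder occupies rank $1$ and using the identity $\sum_{u \in U}\Pr[\,u \text{ is matched to the rank-}s\text{ bidder}\,] = x_s$ then yields the claimed bound. I expect this structural claim to be the main obstacle, and the step demanding the most care, since one must track the two executions of Ranking in parallel; indeed this is exactly the point at which the original proof of Karp, Vazirani, and Vazirani~\cite{Karp90} was found to contain a gap.

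Finally I would solve the recurrence. From $x_t \ge 1 - S_t/n$ (the key inequality) we get $S_t = S_{t-1} + x_t \ge S_{t-1} + 1 - S_t/n$, hence $S_t \ge \tfrac{n}{n+1}(S_{t-1}+1)$; with $S_0 = 0$, the substitution $S_t = n - D_t$ turns this into $D_t \le \tfrac{n}{n+1}D_{t-1}$ with $D_0 = n$, so $D_n \le n\paren{\tfrac{n}{n+1}}^{n} = n/e + O(1)$. Therefore $S_n \ge n - n/e - O(1) = (1-1/e)\,n - O(1)$, and the competitive ratio is $n/S_n \le e/(e-1) + o(1)$, as claimed. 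The only genuinely new ingredient here is the key lemma; the reduction and the recurrence are routine.
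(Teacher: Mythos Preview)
Your overall outline---reduce to a perfect matching, define $x_t$ as the probability the rank-$t$ bidder is matched, prove the key inequality $1-x_t \le \frac{1}{n}\sum_{s \le t} x_s$, and solve the recurrence---is exactly the route the paper takes (the paper states this theorem as a known result but proves its generalization, Theorem~\ref{thm:ranking}, in Appendix~\ref{appendix:ranking}, following Birnbaum--Mathieu; the $k=1$ case is precisely this theorem). Your handling of the reduction and of the recurrence is correct.

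The gap is in your derivation of the key inequality. You promote the rank-$t$ bidder $v_t$ of $\sigma$ to rank~$1$ to obtain $\sigma'$, deduce (granting your structural claim) that
\[
1-x_t \;\le\; \Pr_{\sigma'}\!\bigl[\,M(\sigma'^{-1}(1)) \text{ is matched to a bidder of rank } \le t\,\bigr],
\]
and then invoke the identity $\sum_{u\in U}\Pr[u \text{ matched to the rank-}s\text{ bidder}] = x_s$ to reduce the right-hand side to $\frac{1}{n}\sum_{s\le t}x_s$. That last step fails: the keyword $M(\sigma'^{-1}(1))$ is \emph{determined} by $\sigma'$, hence correlated with the matching produced on $\sigma'$; the identity sums over all $u\in U$, whereas you are evaluating at one specific, $\sigma'$-dependent $u$. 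Concretely, on the graph with $U=\{u_1,u_2\}$, $V=\{v_1,v_2\}$, edges $u_i\!-\!v_i$ only, and $M(v_i)=u_i$, take $t=1$: then $\Pr_{\sigma'}[M(\sigma'^{-1}(1))\in R_1(\sigma')]=1$ while $\frac{1}{n}\sum_{s\le 1}x_s = \tfrac12$, so your intermediate quantity is not even bounded above by the target. This is precisely the independence pitfall you yourself flag as the KVV error---your bijection $\sigma\mapsto\sigma'$ carries no extra randomness with which to decouple the keyword from the ranking.

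The paper's exchange (Lemmas~\ref{lem:technical} and~\ref{lem:rankingmain}) fixes this by introducing an independent random choice. One samples $\sigma$ uniformly and, \emph{independently}, $v\in V$ uniformly, and forms $\sigma'$ by removing $v$ from $\sigma$ and re-inserting it at rank~$t$. Then $\sigma'$ is uniform with $v$ at rank~$t$, so $\Pr[v \text{ unmatched in }\sigma']=1-x_t$; and the structural lemma shows this event forces $u:=M^{-1}(v)$ to lie in $R_t(\sigma)$, the set of keywords matched to rank $\le t$ in $\sigma$. Because $v$ was chosen independently of $\sigma$, the keyword $u$ is uniform \emph{conditioned on} $\sigma$, so $\Pr[u\in R_t(\sigma)\mid\sigma]=|R_t(\sigma)|/n$ and hence $\Pr[u\in R_t(\sigma)]=\frac{1}{n}\sum_{s\le t}x_s$. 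Replacing your bijection-based exchange with this independent-choice exchange (moving a random $v$ to rank $t$ rather than moving $v_t$ to rank $1$) repairs the argument; everything else in your proposal is fine.
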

\noindent
In order to state our generalization of this result, we define
the notion of a $\emph{left $k$-copy}$ of a bipartite graph $G = (U
\cup V, E)$.  Intuitively, a left $k$-copy of $G$ makes $k$ copies of
each keyword $u \in U$ such that the neighborhood of a copy of $u$ is
the same as the neighborhood of $u$.  More precisely, we have the
following definition.
\begin{definition}\label{def:kcopy}
Given a bipartite graph $G = (U_G \cup V, E_G)$, a {\bf left $k$-copy}
of $G$ is a graph $H = (U_H \cup V,E_H)$ for which $|U_H| = k|U_G|$
and for which there exists a map $\zeta: U_H \rightarrow U_G$ such
that
\begin{itemize}
\item
for each $u_G \in U_G$ there are exactly $k$ vertices $u_H \in U_H$
such that $\zeta(u_H) = u_G$, and

\item
for all $u_H \in U_H$ and $v \in V$, $(u_H, v) \in E_H$ if and
only if $(\zeta(u_H), v) \in E_G$.
\end{itemize}
\end{definition}

\noindent
Our generalization of Theorem~\ref{thm:old_ranking} describes the
competitive ratio of Ranking on a graph $H$ that is a left $k$-copy of
$G$.  Its proof, presented in Appendix~\ref{appendix:ranking}, builds
on the proof of Theorem~\ref{thm:old_ranking} presented by Birnbaum
and Mathieu~\cite{Birnbaum08}.
\begin{theorem}\label{thm:ranking}
Let $G = (U_G \cup V, E_G)$ be a bipartite graph that has a maximum
matching of size $OPT_{1P}$, and let $H = (U_H \cup V, E_H)$ be a left
$k$-copy of $G$.  Then the expected size of the matching returned by
Ranking on $H$ is at least
\begin{equation*}
k OPT_{1P} \paren{1 - \frac{1}{e^{1/k}} + o(1)} \enspace .
\end{equation*}
\end{theorem}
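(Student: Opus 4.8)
The plan is to adapt the randomized-primal-dual (or "economics-based charging") analysis of Birnbaum and Mathieu for Ranking on a bipartite graph to the special structure of the left $k$-copy $H$. Recall that in that analysis one fixes a maximum matching $M$ of the underlying graph, chooses the ranking $\sigma$ uniformly at random, and for each bidder $v\in V$ defines $x_v$ to be the probability that $v$ is matched by Ranking conditioned on... (more precisely, one considers, for each rank-value $t\in[0,1]$, the probability that the bidder whose rank is the $t$-quantile gets matched). The key lemma is a recursive inequality: letting $g(t)$ be the probability that the bidder at rank-quantile $t$ is matched, one shows $1-g(t)\le \frac{1}{|U^*|}\int_0^1 g(\text{something})$-type bound, which integrates to $g$ satisfying $g(t)\ge 1-e^{t-1}$, giving the $1-1/e$ bound. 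I would redo exactly this argument but track how the $k$-fold multiplicity of keywords changes the recursion.

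First I would set up notation: let $\zeta:U_H\to U_G$ be the copy map, fix a maximum matching $M_G$ of $G$ of size $OPT_{1P}$, and let $M_H$ be the matching of $H$ obtained by matching, for each edge $(u_G,v)\in M_G$, one dedicated copy of $u_G$ to $v$; so $|M_H|=OPT_{1P}$ and exactly $OPT_{1P}$ bidders are "good" (matched by $M_H$). The crucial combinatorial observation is that each good bidder $v$, matched in $M_H$ to a copy of some $u_G$, is in fact a neighbor in $H$ of \emph{all $k$ copies} of $u_G$. Second, I would run the Birnbaum–Mathieu deferred-randomness argument: reveal $\sigma$ by its values on $[0,1]$, and for a bidder $v$ let $h(t)$ denote the probability (over $\sigma$ and over which good bidder we look at) that a good bidder at rank-quantile $t$ is matched by Ranking on $H$. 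The expected size of Ranking's matching is at least $OPT_{1P}\int_0^1 h(t)\,dt$, so it suffices to lower-bound $\int_0^1 h(t)\,dt$ by $1-e^{-1/k}$.

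The heart of the proof is the modified recursion. In the standard argument, when a good bidder $v$ (matched to keyword $u$) is itself unmatched, the keyword $u$ must have been grabbed by some other bidder $w$ with $\sigma(w)<\sigma(v)$, and averaging over $v$ at quantile $t$ yields $1-h(t)\le \int_0^1 h(\min(s, \cdot))ds$ up to the $1/OPT$ normalization. In the $k$-copy world, a good bidder $v$ is adjacent to $k$ copies $u^{(1)},\dots,u^{(k)}$ of its $M_G$-partner; if $v$ is unmatched, then \emph{every one} of these $k$ copies was taken by a bidder ranked before $v$. This is a much stronger event, and it is what produces the improved factor: instead of one "blocking" bidder we get $k$ of them (with multiplicity), so the integral inequality becomes roughly $1-h(t)\le \frac{1}{k}\int_0^1 h(\min(s,t))\,ds$ — i.e. the left-hand deficiency is controlled by a $k$-times-larger pool of matched good bidders. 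Solving the resulting differential inequality $h'(t)\ge \frac{1}{k}(1-h(t))$ with $h(0)\ge 0$ gives $h(t)\ge 1-e^{-t/k}$... wait — I must be careful about which direction the exponent runs; the correct bookkeeping (as in Birnbaum–Mathieu, where the bound is $1-e^{-(1-t)}$ after a change of variables) yields $\int_0^1 h(t)\,dt \ge k(1-e^{-1/k})$ after the $\frac1k$-scaled recursion is integrated, which is exactly the claimed $k\,OPT_{1P}(1-e^{-1/k}+o(1))$.

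I expect the main obstacle to be making the "$k$ blocking bidders" step rigorous. In the single-copy proof, the blocking bidder's identity and rank are handled by a clean symmetrization over $\sigma$; with $k$ copies one must argue that the $k$ events "copy $u^{(j)}$ is taken by an earlier-ranked bidder" combine to give a factor $k$ in the averaged inequality rather than merely being union-bounded away. The right way, I believe, is to double-count pairs (good bidder $v$ unmatched, copy $u^{(j)}$ of its partner): each such $v$ contributes $k$ pairs, and each pair is charged to the earlier-ranked bidder that grabbed $u^{(j)}$, exactly mirroring the charging in the $k=1$ case but with the left side inflated by $k$. The $o(1)$ term, as in the original, absorbs the discretization error from passing between the finite ranking and the continuous quantile formulation, and is handled identically. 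I would also need to double-check one subtlety: that distinct copies $u^{(j)}$ can be blocked by the \emph{same} bidder $w$ only if $w$ itself is one of the $k$ copies' neighbors matched to an earlier keyword — but since each bidder is matched to at most one keyword, the $k$ blocking bidders for the $k$ copies of $u_G$ are in fact distinct, which is what legitimately gives the factor $k$ and not just an averaged-out constant.
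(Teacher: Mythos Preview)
Your approach---adapting the Birnbaum--Mathieu charging argument with the observation that an unmatched good bidder $v$ forces all $k$ copies of its $M_G$-partner to be taken by $k$ distinct earlier-ranked bidders---is exactly the route the paper takes. The paper phrases the factor-$k$ gain slightly differently: rather than counting all $k$ blockers, it selects \emph{one} of the $k$ copies of $v$'s partner uniformly at random, so that this single copy is uniform over the $kn$ keywords of $U_H$ given $\sigma$; combined with the shifting lemma this yields the recursion $1-x_t\le\frac{1}{kn}\sum_{s\le t}x_s$, which is the same inequality your double-counting would produce once averaged correctly.

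Two concrete points in your write-up would not go through as stated. First, you skip the reduction to the perfect-matching case. You restrict $h$ to ``good'' bidders, but the $k$ bidders that block the copies of $v$'s partner need not be good, so the right-hand side of your recursion cannot be expressed in terms of $h$ alone and the recursion does not close. The paper fixes this (Lemma~\ref{lem:perfectok}) by deleting every vertex of $G$ not touched by $M_G$; monotonicity of Ranking (Lemma~\ref{lem:monotonicity}) shows this only shrinks the output matching, and afterwards every bidder is good, so the $k$-to-one map $f:U_H\to V$ is onto all of $V$ and the shifting argument runs cleanly with $n=|V|=OPT_{1P}$. Second, your recursion and ODE are misstated. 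The correct continuous form is $1-g(\tau)\le\frac{1}{k}\int_0^\tau g(s)\,ds$, not $\frac{1}{k}\int_0^1 h(\min(s,t))\,ds$; and the resulting differential inequality is for the \emph{integral} $H(\tau)=\int_0^\tau g$, namely $H'(\tau)\ge 1-\frac{1}{k}H(\tau)$, not for $h$ itself. Solving your ODE $h'\ge\frac{1}{k}(1-h)$ with $h(0)=0$ gives $\int_0^1 h=1-k(1-e^{-1/k})$, which for $k=1$ equals $1/e$ rather than $1-1/e$, so it cannot be what you want. The correct inequality integrates to $H(1)\ge k(1-e^{-1/k})$, and the expected matching size is then $n\,H(1)\ge k\,OPT_{1P}\bigl(1-e^{-1/k}\bigr)$.
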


\noindent
Using this result, we are able to prove that the following algorithm,
called RankingSimulate, achieves a competitive ratio of
$2\sqrt{e}/(\sqrt{e}-1)$.
\begin{center}
\small
\fbox{
\begin{tabular}{l}
{\bf RankingSimulate Algorithm:}\\
\hline
\emph{Initialization:} \\
Set $M$, the set of \emph{matched} bidders, to $\emptyset$. \\
Set $R$, the set of \emph{reserved} bidders, to $\emptyset$. \\
Choose a random permutation (ranking) $\sigma$ of the bidders $V$.\\
\hline
  \emph{Online Matching:} \\
  Upon arrival of keyword $u \in U$: \\  
  \hspace{.2in} Let $N(u)$ be the set of neighbors of 
  $u$ that are not in $M$ or $R$. \\
  \hspace{.2in} If $N(u) = \emptyset$, do nothing. \\
  \hspace{.2in} If $|N(u)| = 1$, let $v$ be the single bidder in $N(u)$. \\
  \hspace{.4in} With probability $1/2$, match $u$ to $v$ and add $v$
  to $M$, and \\
  \hspace{.4in} With probability $1/2$, add $v$ to $R$. \\
  \hspace{.2in} If $|N(u)| \geq 2$, let $v_1$ and $v_2$ be the two
  distinct bidders in $N(u)$ that minimize $\sigma(v)$. \\
  \hspace{.4in} With probability $1/2$, match $u$ to $v_1$, add
  $v_1$ to $M$, and add $v_2$ to $R$, and \\
  \hspace{.4in} With probability $1/2$, match $u$ to $v_2$, add
  $v_1$ to $R$, and add $v_2$ to $M$.
\end{tabular}
}
\end{center}
\normalsize

Let $G = (U_G \cup V, E_G)$ be the bipartite input graph to 2PM, and
let $H = (U_H \cup V, E_H)$ be a left $2$-copy of $H$.  In the arrival
order for $H$, the two copies of each keyword $u_G \in U$ arrive in
sequential order.  We start with the following lemma.

\begin{lemma}\label{lem:relateranking}
Fix a ranking $\sigma$ on $V$.  For each bidder $v \in V$, let $X_v$
be the indicator variable for the event that $v$ is matched by Ranking
on $H$, when the ranking is $\sigma$.\footnote{Note that once $\sigma$
  is fixed, $X_v$ is deterministic.}  Let $X'_v$ be the indicator
variable for the event that $v$ is matched by RankingSimulate on $G$,
when the ranking is $\sigma$.  Then $\expect(X'_v) = X_v / 2$.
\end{lemma}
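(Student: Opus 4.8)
The plan is to set up a coupling between the execution of Ranking on $H$ and the execution of RankingSimulate on $G$, so that the random coin flips of RankingSimulate correspond exactly to the choice, within each pair of copies in $H$, of which copy ``succeeds'' in claiming a given bidder. Fix $\sigma$; then Ranking on $H$ is deterministic, and the two copies $u_H^{(1)}, u_H^{(2)}$ of each keyword $u_G$ arrive consecutively. I would first make the following structural observation about Ranking on $H$: when the first copy $u_H^{(1)}$ arrives, it matches to the unmatched neighbor $v$ of smallest $\sigma$-rank (if any); then when $u_H^{(2)}$ arrives immediately after, the set of unmatched neighbors is the same except that $v$ is gone, so $u_H^{(2)}$ matches to the second-smallest-rank available neighbor (if any). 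Thus processing the pair of copies of $u_G$ in $H$ either (i) matches nobody, (ii) matches exactly the single available neighbor, or (iii) matches the two smallest-rank available neighbors $v_1, v_2$ — and crucially, in all three cases, the set of bidders left unmatched after the pair is processed is \emph{exactly} the set $N(u_G)$ computed by RankingSimulate (relative to $M \cup R$) minus the (zero, one, or two) bidders just removed. This is precisely the structure RankingSimulate mimics, except that RankingSimulate removes the \emph{same} set of bidders from future availability (putting them in $M \cup R$) but flips a coin to decide which of them goes into $M$ (``matched'') versus $R$ (``reserved'').

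With this observation, I would prove by induction on the keyword arrival order that, for every prefix of keywords, the set of bidders in $M \cup R$ after RankingSimulate processes that prefix of $G$ equals the set of bidders matched by Ranking after processing the corresponding pairs of copies in $H$; and moreover, conditioned on $\sigma$, each bidder in this set is in $M$ (rather than $R$) with probability exactly $1/2$, independently of nothing else we need — more precisely, for the bidder(s) removed at a given step, a fresh independent coin decides the split. The base case is trivial ($M=R=\emptyset$, nothing matched in $H$). For the inductive step, the key point is that the \emph{availability} dynamics are identical on the two sides — a bidder is unavailable to a keyword in RankingSimulate iff it is in $M\cup R$ iff (by induction) it was matched by Ranking on $H$ before the corresponding copies — so $N(u)$ in RankingSimulate coincides with the set of $\sigma$-available neighbors of $u_H^{(1)}$ in $H$, and the case split ($|N(u)| = 0, 1, \geq 2$) lines up exactly with cases (i)--(iii) above, removing the same bidders. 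Hence $v$ is matched by Ranking on $H$ iff $v$ ends up in $M \cup R$ in RankingSimulate, so $X_v = 1$ iff $v \in M \cup R$ at the end; and within that event, the coin flip at the step that removed $v$ put $v$ into $M$ with probability $1/2$. Therefore $\expect(X_v' \mid \sigma) = \expect(X_v' \mid X_v = 1)\cdot \one[X_v=1] = (1/2) X_v$, which is the claim.

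The main obstacle I anticipate is being careful about case (ii): when $|N(u)| = 1$ in RankingSimulate, one copy of $u_G$ in $H$ matches the single available neighbor and the other copy matches nobody — I must check that RankingSimulate's behavior (match-to-$M$ w.p.\ $1/2$, add-to-$R$ w.p.\ $1/2$) still removes exactly that one bidder from future availability in both coin outcomes, which it does since $R$ also blocks future use. A secondary subtlety is confirming that in case (iii) Ranking on $H$ really does take $v_1$ \emph{then} $v_2$ (the two lowest available ranks), which relies on the two copies arriving consecutively and on no other keyword intervening; this is exactly the arrival-order stipulation made just before the lemma. Once these are nailed down, the independence of the per-step $1/2$ coin from the deterministic (given $\sigma$) availability process makes the expectation computation immediate, and taking expectations over $\sigma$ is not needed here since the lemma is stated conditionally on $\sigma$.
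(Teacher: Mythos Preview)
Your proposal is correct and follows essentially the same approach as the paper: the paper's proof simply asserts the invariant that $X_v = 1$ if and only if RankingSimulate places $v$ in $M \cup R$, notes that each bidder enters $M \cup R$ at most once, and observes that at that moment it lands in $M$ with probability $1/2$. Your inductive coupling argument is exactly a detailed verification of that invariant, with the case split on $|N(u)|$ making explicit what the paper leaves to the reader.
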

\begin{proof}
It is easy to establish the invariant that for all $v \in V$, $X_v =
1$ if and only if RankingSimulate puts $v$ in either $M$ or $R$.
Furthermore, each bidder $v \in V$ is put in $M$ or $R$ at most once
by RankingSimulate.  The lemma follows because each time
RankingSimulate adds a bidder $v$ to $M$ or $R$, it matches it with
probability $1/2$.
\end{proof}

With Theorem~\ref{thm:ranking} and Lemma~\ref{lem:relateranking}, we
can now prove the main result of this section.
\begin{theorem}
The competitive ratio of RankingSimulate is $2\sqrt{e}/(\sqrt{e} - 1)
\approx 5.083$.
\end{theorem}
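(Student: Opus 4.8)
The plan is to interpret the output of RankingSimulate as a feasible second-price matching on $G$, bound its expected profit from below by $\tfrac14$ times the size of the matching that Ranking produces on the left $2$-copy $H$, and then combine this with Theorem~\ref{thm:ranking} (applied with $k=2$) and the elementary bound $OPT_{2P} \le OPT_{1P}$. It is convenient to fix the random ranking $\sigma$ first; note that the set $M \cup R$ of used bidders is then already determined (the coins affect only the $M$-versus-$R$ split), and hence so is the value $|N(u)|$ that is seen at each step.

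Fix $\sigma$ and partition the processed keywords according to $|N(u)|$: say $a$ of them have $|N(u)| \ge 2$ and $b$ have $|N(u)| = 1$. For each keyword with $|N(u)| \ge 2$, RankingSimulate matches $u$ and reserves one of $v_1, v_2$; since that reserved bidder bids $1$ on $u$ and has full budget, it is a legal second-price bidder, so this match earns profit $1$ with certainty. Now consider a keyword $u$ with $|N(u)| = 1$, which is matched to its lone available neighbor $v$ with probability $1/2$. Such a match earns profit $1$ provided $u$ has another neighbor currently in $R$: a reserved bidder is never a first-price bidder, hence is never charged, hence keeps its budget $1$ and can serve as the second-price bidder for $u$. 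I claim such a reserved neighbor is present with probability at least $1/2$. Indeed, since every keyword has degree at least $2$, $u$ has a neighbor $w \ne v$ that already lies in $M \cup R$; inspecting both branches of RankingSimulate shows that the single fresh coin flipped at the step that inserted $w$ put $w$ into $R$ with probability exactly $1/2$, independently of $u$'s own matching coin. Hence $u$ is matched profitably with probability at least $\tfrac12 \cdot \tfrac12 = \tfrac14$, and so, writing $\mathrm{ALG}$ for the (random) profit of RankingSimulate, $\expect[\mathrm{ALG} \mid \sigma] \ge a + \tfrac{b}{4}$.

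Next I would relate $2a+b$ to Ranking on $H$. By Lemma~\ref{lem:relateranking}, a bidder is matched by Ranking on $H$ if and only if RankingSimulate places it into $M \cup R$; and $|M \cup R|$ increases by $2$ at each of the $a$ keywords and by $1$ at each of the $b$ keywords. Hence Ranking on $H$, run with ranking $\sigma$, produces a matching of size exactly $2a+b$. Since $a + \tfrac{b}{4} = \tfrac14(4a+b) \ge \tfrac14(2a+b)$, we get $\expect[\mathrm{ALG}\mid\sigma] \ge \tfrac14(2a+b)$ for every $\sigma$. Averaging over $\sigma$ and invoking Theorem~\ref{thm:ranking} for the left $2$-copy $H$ of $G$ (whose underlying graph $G$ has a maximum matching of size $OPT_{1P}$),
\[
\expect[\mathrm{ALG}] \;\ge\; \tfrac14 \cdot 2\,OPT_{1P}\paren{1 - e^{-1/2} + o(1)} \;=\; \tfrac12\,OPT_{1P}\paren{1 - e^{-1/2} + o(1)}.
\]
Finally, any second-price matching of value $\nu$ allocates $\nu$ keywords to $\nu$ distinct first-price bidders, so it contains a matching of size $\nu$ in $G$; therefore $OPT_{1P} \ge OPT_{2P}$, and the displayed inequality yields $\expect[\mathrm{ALG}] \ge \tfrac12(1 - e^{-1/2} + o(1))\,OPT_{2P}$. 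Rearranging, the competitive ratio is at most $\tfrac{2}{1-e^{-1/2}} + o(1) = \tfrac{2\sqrt e}{\sqrt e - 1} + o(1) \approx 5.083$.

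I expect the only real obstacle to be the profit accounting for the $|N(u)| = 1$ case: one must observe that reserved bidders keep their full budgets and so remain valid second-price bidders, and then show that a reserved neighbor is available with probability at least $1/2$ — which holds because each used bidder goes to $M$ or to $R$ by an unbiased coin flip that is independent of the keyword's own matching coin. Everything else is straightforward bookkeeping plus direct appeals to Lemma~\ref{lem:relateranking} and Theorem~\ref{thm:ranking}.
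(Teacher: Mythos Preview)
Your proof is correct and takes essentially the same route as the paper: couple RankingSimulate to Ranking on the left $2$-copy $H$ via Lemma~\ref{lem:relateranking}, invoke Theorem~\ref{thm:ranking} with $k=2$, argue that an unmatched second-price witness is available with probability at least $1/2$, and finish with $OPT_{1P}\ge OPT_{2P}$. The only cosmetic difference is that you account for profit keyword-by-keyword (splitting on $|N(u)|\ge 2$ versus $|N(u)|=1$ after conditioning on $\sigma$), whereas the paper accounts bidder-by-bidder via $\expect[P_v\mid v\text{ matched}]\ge 1/2$; both arguments rest on the same observation that each used bidder lands in $R$ with an independent fair coin, and both yield the identical bound $\expect[\mathrm{ALG}]\ge \tfrac14\,\expect[|\ranking(H,\sigma)|]$.
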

\begin{proof}
For a permutation $\sigma$ on $V$, let $\rankingsim(\sigma)$ be the
matching of $G$ returned by RankingSimulate, and let
$\ranking(\sigma)$ be the matching of $H$ returned by Ranking.
Lemma~\ref{lem:relateranking} implies that, conditioned on $\sigma$,
$\expect(|\rankingsim(\sigma)|) = |\ranking(\sigma)| / 2$.  
By
Theorem~\ref{thm:ranking},
\begin{equation*}
\expect (|\rankingsim(\sigma)|)
= \frac{1}{2} \expect (|\ranking(\sigma)|)
\geq  OPT_{1P} \paren{1 - 1/e^{1/2} + o(1)} \enspace .
\end{equation*}

Fix a bidder $v \in V$.  Let $P_v$ be the profit from $v$ obtained by
RankingSimulate.  Suppose that $v$ is matched by RankingSimulate to
keyword $u \in U_G$.  Recall that we have assumed without loss of
generality that the degree of $u$ is at least $2$.  Let $v' \ne v$ be
another bidder adjacent to $u$.  Then, given that $v$ is matched to
$u$, the probability that $v'$ is matched to any keyword is no greater
than $1/2$.  Therefore, $\expect(P_v|\mbox{$v$ matched}) \geq 1/2$.
Hence, the expected value of the second-price matching returned by
RankingSimulate is
\begin{eqnarray*}
\sum_{v \in V} \expect(P_v) & = &
\sum_{v \in V} \expect(P_v | \mbox{$v$ matched}) \pr(\mbox{$v$ matched}) \\
& \geq & \frac12 \sum_{v \in V} \pr(\mbox{$v$ matched}) \\
& =    & \frac12 \expect(|\rankingsim(\sigma)|) \\
& \geq & \frac12 OPT_{1P} \paren{1 - 1/e^{1/2} + o(1)} \\
& \geq & \frac12 OPT_{2P} \paren{1 - 1/e^{1/2} + o(1)} \enspace ,
\end{eqnarray*}
where $OPT_{2P}$ is the size of the optimal second-price matching on $G$.
\end{proof}

\section{Conclusion}

In this paper, we have shown that the complexity of the Second-Price
Ad Auctions problem is quite different from that of the more studied
First-Price Ad Auctions problem, and that this discrepancy extends to
the special case of 2PM, whose first-price analogue is bipartite
matching.  On the positive side, we have given a 2-approximation
for offline 2PM and a 5.083-competitive algorithm for online 2PM.

Some open questions remain.  Closing the gap between $2$ and $364/363$
in the approximability of offline 2PM is one clear direction for
future research, as is closing the gap between $2$ and $5.083$ in the
competitive ratio for online 2PM.  Another question we leave open is
whether the analysis for RankingSimulate is tight, though we expect
that it is not.  

%A final direction for future research is to consider
%the multiple-slot version of 2PAA and 2PM.

\bibliographystyle{alpha}
\bibliography{thinking_twice}

\newcommand{\etalchar}[1]{$^{#1}$}
\begin{thebibliography}{KLMM05}

\bibitem[ABK{\etalchar{+}}08]{Azar08}
Yossi Azar, Benjamin Birnbaum, Anna~R. Karlin, Claire Mathieu, and C.~Thach
  Nguyen.
\newblock Improved approximation algorithms for budgeted allocations.
\newblock In {\em ICALP '08 (LNCS 5125)}, pages 186--197. Springer, 2008.

\bibitem[AM04]{Andelman04}
Nir Andelman and Yishay Mansour.
\newblock Auctions with budget constraints.
\newblock In {\em {SWAT} '04 (LNCS 3111)}, pages 26--38. Springer, 2004.

\bibitem[AMT07]{Abrams07}
Zoe Abrams, Ofer Mendelevitch, and John Tomlin.
\newblock Optimal delivery of sponsored search advertisements subject to budget
  constraints.
\newblock In {\em EC '07}, 2007.

\bibitem[BJN07]{Buchbinder07}
Niv Buchbinder, Kamal Jain, and Joseph~(Seffi) Naor.
\newblock Online primal-dual algorithms for maximizing ad-auctions revenue.
\newblock In {\em {ESA} '07 (LNCS 4698)}, pages 253--264. Springer, 2007.

\bibitem[BM08]{Birnbaum08}
Benjamin Birnbaum and Claire Mathieu.
\newblock On-line bipartite matching made simple.
\newblock {\em SIGACT News}, 39(1):80--87, 2008.

\bibitem[CC06]{Chlebik06}
Miroslav Chleb\'{i}k and Janka Chleb\'{i}kov\'{a}.
\newblock Complexity of approximating bounded variants of optimization
  problems.
\newblock {\em Theoretical Computer Science}, 354(3):320--338, 2006.

\bibitem[CG08]{Chakrabarty08}
Deeparnab Chakrabarty and Gagan Goel.
\newblock On the approximability of budgeted allocations and improved lower
  bounds for submodular welfare maximization and gap.
\newblock In {\em FOCS '08}, pages 687--696, 2008.

\bibitem[CK00]{Chekuri00}
Chandra Chekuri and Sanjeev Khanna.
\newblock A {PTAS} for the multiple knapsack problem.
\newblock In {\em SODA '00}, pages 213--222, 2000.

\bibitem[DH09]{Devanur09}
Nikhil~R.\ Devanur and Thomas~P.\ Hayes.
\newblock The adwords problem: Online keyword matching with budgeted bidders
  under random permutations.
\newblock In {\em EC '09}, 2009.

\bibitem[DS06]{Dobzinski06}
Shahar Dobzinski and Michael Schapira.
\newblock An improved approximation algorithm for combinatorial auctions with
  submodular bidders.
\newblock In {\em SODA '06}, pages 1064--1073, 2006.

\bibitem[EOS07]{Edelman07}
Benjamin Edelman, Michael Ostrovsky, and Michael Schwarz.
\newblock Internet advertising and generalized second-price auction: Selling
  billions of dollars worth of keywords.
\newblock {\em American Economic Review}, 97:242--259, 2007.

\bibitem[FGMS06]{Fleischer06}
Lisa Fleischer, Michel~X. Goemans, Vahab~S. Mirrokni, and Maxim Sviridenko.
\newblock Tight approximation algorithms for maximum general assignment
  problems.
\newblock In {\em SODA '06}, pages 611--620, 2006.

\bibitem[FV06]{Feige06b}
Uriel Feige and Jan Vondrak.
\newblock Approximation algorithms for allocation problems: Improving the
  factor of 1 - 1/e.
\newblock {\em {FOCS} '06}, pages 667--676, 2006.

\bibitem[GJ79]{Garey79}
Michael~R. Garey and David~S. Johnson.
\newblock {\em Computers and Intractability}.
\newblock W. H. Freeman and Company, 1979.

\bibitem[GM08]{Goel08b}
Gagan Goel and Aranyak Mehta.
\newblock Online budgeted matching in random input models with applications to
  adwords.
\newblock In {\em SODA '08}, pages 982--991, 2008.

\bibitem[GMNS08]{Goel08a}
Ashish Goel, Mohammad Mahdian, Hamid Nazerzadeh, and Amin Saberi.
\newblock Advertisement allocation for generalized second pricing schemes.
\newblock In {\em Workshop on Sponsored Search Auctions}, 2008.

\bibitem[KLMM05]{Khot05}
Subhash Khot, Richard~J. Lipton, Evangelos Markakis, and Aranyak Mehta.
\newblock Inapproximability results for combinatorial auctions with submodular
  utility functions.
\newblock In {\em {WINE} '05 (LNCS 3828}, pages 92--101. Springer, 2005.

\bibitem[KVV90]{Karp90}
R.~M. Karp, U.~V. Vazirani, and V.~V. Vazirani.
\newblock An optimal algorithm for on-line bipartite matching.
\newblock In {\em STOC '90}, pages 352--358, 1990.

\bibitem[LLN06]{Lehmann06}
Benny Lehmann, Daniel Lehmann, and Noam Nisan.
\newblock Combinatorial auctions with decreasing marginal utilities.
\newblock {\em Games and Economic Behavior}, 55(2):270--296, 2006.

\bibitem[LPSV07]{Lahaie07}
Sebastien Lahaie, David~M. Pennock, Amin Saberi, and Rakesh~V. Vohra.
\newblock Sponsored search auctions.
\newblock In Noam Nisan, Tim Roughgarden, Eva Tardos, and Vijay~V. Vazirani,
  editors, {\em Algorithmic Game Theory}, pages 699--716. Cambridge University
  Press, 2007.

\bibitem[MNS07]{Mahdian07}
Mohammad Mahdian, Hamid Nazerzadeh, and Amin Saberi.
\newblock Allocating online advertisement space with unreliable estimates.
\newblock In {\em EC '07}, pages 288--294, 2007.

\bibitem[MSV08]{Mirrokni08}
Vahab Mirrokni, Michael Schapira, and Jan Vondrak.
\newblock Tight information-theoretic lower bounds for welfare maximization in
  combinatorial auctions.
\newblock In {\em EC '08}, 2008.

\bibitem[MSVV07]{Mehta07}
Aranyak Mehta, Amin Saberi, Umesh Vazirani, and Vijay Vazirani.
\newblock Adwords and generalized online matching.
\newblock {\em J. ACM}, 54(5):22, 2007.

\bibitem[Sri08]{Srinivasan08}
Aravind Srinivasan.
\newblock Budgeted allocations in the full-information setting.
\newblock In {\em APPROX '08 (LNCS 5171)}, pages 247--253. Springer, 2008.

\bibitem[ST93]{Shmoys93}
David Shmoys and Eva Tardos.
\newblock An approximation algorithm for the generalized assignment problem.
\newblock {\em Mathematical Programming}, 62:461--474, 1993.

\bibitem[Var07]{Varian07}
Hal~R. Varian.
\newblock Position auctions.
\newblock {\em International Journal of Industrial Organization},
  25:1163--1178, 2007.

\bibitem[Von08]{Vondrak08}
Jan Vondrak.
\newblock Optimal approximation for the submodular welfare problem in the value
  oracle model.
\newblock In {\em STOC '08}, pages 67--74, 2008.

\bibitem[Yao77]{Yao77}
Andrew Yao.
\newblock Probabilistic computations: Toward a unified measure of complexity.
\newblock In {\em FOCS '77}, pages 222--227, 1977.

\end{thebibliography}

\appendix
\section{Discussion of Related Models}\label{appendix:models}
In this section, we discuss the relationship between the ``strict''
and ``non-strict'' models of Goel et al.~\cite{Goel08a} and our model.
In the strict model, a bidder's bid can be above his remaining budget,
as long as the remaining budget is strictly positive.  In the
non-strict model, bidders can keep their bids positive even after
their budget is depleted.  In both models a bidder is not charged more
than his remaining budget for a slot.  Therefore, in the non-strict
model, if a bidder is allocated a slot after his budget is fully
depleted, then he gets the slot for free.

Given an instance $A$, let $OPT_{2P}$ be the optimal solution value in
our model; let $OPT_{strict}$ be the optimal solution value under the
strict model; and let $OPT_{non-strict}$ be the optimal solution value
under the non-strict model.  Surprisingly, even though the strict and
non-strict models seem more permissive, it is possible for $OPT_{2P}$
to be $\Omega(m)$ times as big as $OPT_{strict}$ and $OPT_{non-strict}$, 
even when $R_{min}$ is a large constant $c$.  This is shown in
Figure~\ref{fig:higher_revenue}.

\begin{figure}
  \begin{center}
    \scalebox{0.8}{\includegraphics{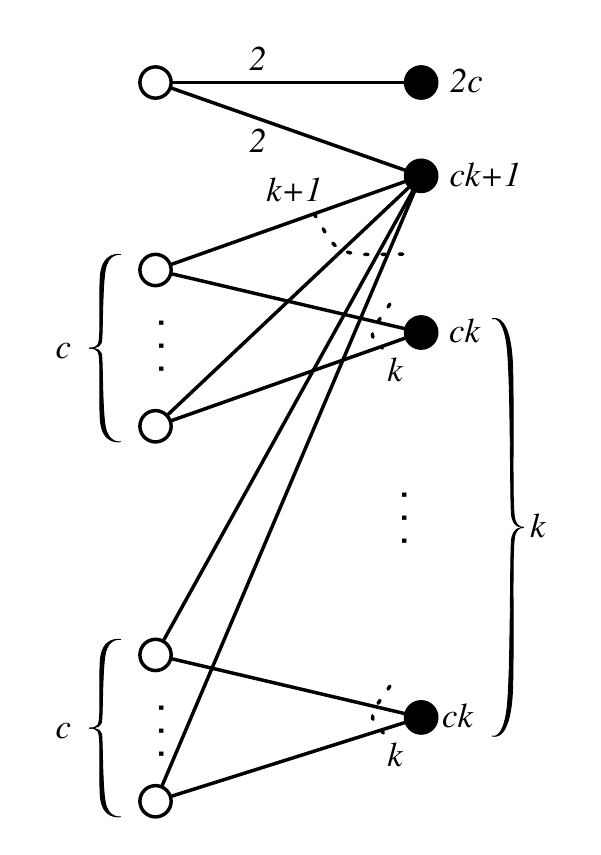}}
    \caption{
      \small In this example, $R_{min}$ is equal to a constant $c$,
      i.e., every bid is at most $1/c$ of the budget of the bidder.  In
      the strict model, all keywords except the first must be allocated
      to the second bidder at a price of $k$ (or the remaining budget if
      it's smaller).  Thus, the total profit on this input for the
      strict model is at most $ck+3$.  On the other hand, in our model,
      if we allocate the first keyword to the second bidder, and then
      the next $c-1$ keywords to the second bidder, that bidder's budget
      is reduced to $k-1$. Thus, all of the remaining keywords can be
      allocated to the lower bidder at a price of $k-1$, for a total
      revenue exceeding $ck(k-1)$. For $k$ large, this ratio is $\Omega
      (k) = \Omega(m)$.  }\label{fig:higher_revenue}
  \end{center}
\end{figure}

On the other hand, we show below that the optimal values of the two
models of Goel et al.\ cannot be better than the optimal value of our
model by more than a constant factor.  
\begin{theorem}\label{thm:relationship}
For any instance $A$, $OPT_{non-strict} \leq (2+1/R_{min}) OPT_{strict} 
\leq 8 (2 + 1/R_{min}) OPT_{2P}$.
\end{theorem}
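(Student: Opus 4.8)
The plan is to prove the two inequalities separately. The first inequality, $OPT_{non-strict} \leq (2 + 1/R_{min}) OPT_{strict}$, compares two of Goel et al.'s models against each other. I would start from an optimal solution in the non-strict model and show that it can be "fixed up" into a strict-model solution while losing at most a factor of $2 + 1/R_{min}$. The key difference is that in the non-strict model a bidder may receive slots for free after his budget is exhausted, whereas in the strict model he must always have strictly positive remaining budget to participate. So I would look at each bidder $v$ and the set of keywords allocated to $v$ in the non-strict optimum, in arrival order; the profit $v$ generates is just a sum of second-prices, and the only "illegal" allocations for the strict model are the ones occurring once $v$'s budget has hit zero. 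Since each such illegal allocation earns price at most the maximum bid, which is at most $B_v / R_{min}$, and the "legal prefix" of allocations already earns close to $B_v$ (up to one more maximal bid $B_v/R_{min}$ to push it over), I expect the free slots to contribute at most a $1/R_{min}$-fraction plus a constant factor more than the legal prefix. Summing over all bidders gives the bound. The accounting of "the last legal allocation may overshoot, and there's essentially one extra unit of slack per bidder" is the fiddly part here, but it is the same flavor of argument used in the hardness-matching Theorem~\ref{thm:matching_2paa}.

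For the second inequality, $OPT_{strict} \leq 8(2 + 1/R_{min}) OPT_{2P}$, I would take an optimal strict-model solution and transform it into a feasible solution in our model, losing at most a constant factor of $8(2+1/R_{min})$. The crucial issue: in the strict model a bidder $v$ may keep bidding his original bid $b_{u,v}$ even when his remaining budget is far below that, so his high bid can set the price for many other keywords for which $v$ acts as second-price bidder. In our model, once $v$'s budget drops below $b_{u,v}$, his effective bid is only his remaining budget, which can be tiny. The idea is to classify the strict-model revenue: revenue earned on keywords where $v$ (the payer or the price-setter) still had "enough" budget relative to its bid carries over almost directly to our model, while the problematic revenue — generated by near-depleted price-setters — must be charged against the budget that price-setter actually spent being a first-price bidder. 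A budget/charging argument of this type, where each bidder's spent budget of $B_v$ pays for at most $O(R_{min})$ worth of second-price roles it played for others, should give a bound of the stated form; the constant $8$ presumably absorbs the loss from the $R_{min} \geq m/4$-style counting plus a factor from splitting into first-price-payer revenue versus second-price-role revenue.

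The main obstacle I anticipate is the second inequality, specifically controlling the revenue in the strict model that is attributable to a single near-exhausted bidder serving as the second-price (price-setting) bidder on arbitrarily many keywords. The whole point of the paper's earlier discussion is that this phenomenon is exactly what separates the models, so the argument must quantify it: I would want to show that if bidder $v$ sets the price on a collection of keywords with total price $P$, then either those prices were all small (at most $B_v/R_{min}$ each, and there are few of them because at least one of the two selected bidders' budgets decreases each round $v$ is the payer — but $v$ is \emph{not} the payer here, so this needs care) or $v$ actually paid out a comparable amount itself. I would likely handle this by charging to the \emph{first-price} bidder on each such keyword instead: that bidder does pay the price $p$, reducing its budget by $p$, so over the run the total second-price revenue is at most the total budget spent, which is at most $\sum_v B_v$, and then relate $\sum_v B_v$ to $OPT_{2P}$ via a further argument that our model can itself extract an $\Omega(1/R_{min})$-fraction of $\sum_v B_v$ whenever the strict optimum is large. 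Assembling these pieces cleanly, and making sure the constants multiply out to exactly $8(2+1/R_{min})$, is where the real work lies.
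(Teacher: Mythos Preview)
Your plan for the first inequality is beside the point: the paper does not prove it at all, but simply cites it from Goel, Mahdian, Nazerzadeh, and Saberi~\cite{Goel08a}. So the only thing to prove is the second inequality, and once you cancel the common factor $(2+1/R_{\min})$ from both sides, that inequality is just $OPT_{strict} \leq 8\,OPT_{2P}$ --- a pure constant, with no $R_{\min}$ in it. Your reading of it as $OPT_{strict} \leq 8(2+1/R_{\min})\,OPT_{2P}$ is already weaker than what is required, and more importantly your intended use of $R_{\min}$ in the argument is a sign you are heading in the wrong direction.

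Your proposed charging scheme has a real gap. You want to bound the strict-model revenue by $\sum_v B_v$ (true, since every unit of revenue is paid out of some first-price bidder's budget) and then argue that $OPT_{2P}$ captures an $\Omega(1)$ or $\Omega(1/R_{\min})$ fraction of $\sum_v B_v$. That second step is simply false: one can add bidders with arbitrarily large budgets and no (or tiny) bids, inflating $\sum_v B_v$ without changing either $OPT_{strict}$ or $OPT_{2P}$. There is no general inequality of the form $\sum_v B_v \leq C\cdot OPT_{2P}$, so any argument that routes through $\sum_v B_v$ is doomed. Your fallback idea of splitting into ``good'' revenue (both participants well-funded) and ``bad'' revenue (price-setter depleted) does not rescue this either, because even the ``good'' allocations need not be feasible in our model: in the strict model the \emph{first-price} bidder may be bidding above his remaining budget, so you cannot just replay those allocations.

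The paper's approach is entirely different and you are missing both of its ideas. First, it passes through a \emph{first-price} instance $A'$ obtained from $A$ by replacing each bid $b_{u,v}$ with $b'_{u,v} = \max\{\,b_{u,v'} : v' \neq v,\ b_{u,v'} \leq b_{u,v}\,\}$, i.e., the best second price available to $v$ on $u$; one then checks easily that $OPT_{strict}(A) \leq OPT_{non\text{-}strict}(A) \leq OPT_{1P}(A')$, since any second-price allocation $(f_1,f_2)$ on $A$ yields the first-price allocation $f_1$ on $A'$ with at least as much revenue per bidder. Second, to go from $OPT_{1P}(A')$ back down to $OPT_{2P}(A)$, the paper uses a \emph{random marking} argument: mark each bidder independently with probability $1/2$, and for each unmarked bidder $v$ keep only those keywords $u$ with $f(u)=v$ whose designated second-price partner $s(u,v)$ is marked (hence untouched and available to witness the price). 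A short case analysis on whether the surviving keywords overflow $v$'s budget shows the expected retained revenue per bidder is at least $Y_v/8$, giving $OPT_{1P}(A') \leq 8\,OPT_{2P}(A)$. Neither the modified-bid reduction nor the random reservation of second-price witnesses appears in your plan, and they are exactly what makes the constant $8$ fall out without any $R_{\min}$ dependence.
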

The first inequality is proved by Goel et al.~\cite{Goel08a}, so we
must only prove that $OPT_{strict} \leq 8 OPT_{2P}$.

The core of our argument is a reduction from 2PAA to the First-Price
Ad Auctions problem (1PAA),\footnote{Recall that this problem has also
  been called the {\em Adwords} problem~\cite{Mehta07} and the {\em
    Maximum Budgeted Allocation}
  problem~\cite{Azar08,Chakrabarty08,Srinivasan08}.} in which only one
bidder is chosen for each keyword and that bidder pays the minimum of
its bid and its remaining budget.  Given an instance $A$ of 2PAA, we
construct an instance $A'$ of 1PAA problem by replacing each bid
$b_{u,v}$ by
\begin{equation*}
  b'_{u,v} \triangleq \max_{v' \ne v~:~ b_{u,v'} \leq b_{u,v}} b_{u,v'}
  \enspace .
\end{equation*}
Denote by $OPT_{1P}(A')$ the optimal value of the first-price model on
$A'$. The following two lemmas prove Theorem~\ref{thm:relationship} by
relating both $OPT_{non-strict}(A)$ and $OPT_{2P}(A)$ to
$OPT_{1P}(A')$.

\begin{lemma}
 $OPT_{non-strict}(A) \leq OPT_{1P}(A')$.
\end{lemma}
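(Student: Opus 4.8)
The plan is to bound the optimal non-strict solution on $A$ by the optimal first-price solution on $A'$ by showing that any feasible non-strict solution can be transformed into a first-price solution on $A'$ of at least as much value. Fix an optimal non-strict solution for $A$. For each keyword $u$ in arrival order, it selects a set of bidders; one of them, call it $v_1(u)$, is allocated the slot and pays the bid of the next-highest participant, call it $v_2(u)$. In the non-strict model the price charged is $\min$ of $v_2(u)$'s (possibly-above-budget) bid and $v_1(u)$'s remaining budget, which is at most $b_{u,v_2(u)}$, the \emph{original} bid of $v_2(u)$; moreover, since $v_2(u)$ is the second-highest bidder, $b_{u,v_2(u)} \le b_{u,v_1(u)}$, so $b_{u,v_2(u)}$ is one of the values in the max defining $b'_{u,v_1(u)}$. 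Hence the price paid on $u$ is at most $b'_{u,v_1(u)}$.

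First I would construct a candidate 1PAA solution on $A'$ that simply assigns keyword $u$ to bidder $v_1(u)$ for each $u$. The first-price revenue this collects from $u$ is $\min(b'_{u,v_1(u)}, \text{remaining budget of } v_1(u))$. The key step is then to argue this first-price solution is feasible and collects at least the non-strict revenue. The subtle point is the budget bookkeeping: in the non-strict solution, $v_1(u)$ pays the non-strict price $p_{ns}(u) \le b'_{u,v_1(u)}$, so after processing the same sequence of allocations, the remaining budget of any bidder $v$ in the non-strict run is \emph{at least} its remaining budget in the 1PAA run only if the 1PAA run never charges \emph{more} than the non-strict run. That is the direction we want, but it requires care: we should charge, in the 1PAA solution, exactly $p_{ns}(u)$ on keyword $u$ rather than the full $b'_{u,v_1(u)}$ — equivalently, observe that 1PAA permits charging $\min(b'_{u,v_1(u)}, B_v(t-1))$ and argue $p_{ns}(u) \le \min(b'_{u,v_1(u)}, B_v^{ns}(t-1)) \le$ the 1PAA amount, using an inductive invariant that the 1PAA remaining budgets dominate the non-strict ones. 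Since we can always \emph{underspend} in 1PAA (or rather, the natural charge is the min, and we show the min is at least $p_{ns}(u)$), the invariant is maintained and the total 1PAA revenue is at least $\sum_u p_{ns}(u) = OPT_{non-strict}(A)$.

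The main obstacle is getting the budget-domination invariant exactly right, because 1PAA charges a \emph{fixed} formula $\min(b'_{u,v}, B_v(t-1))$ rather than an adversarially-chosen price, so I cannot simply declare "charge $p_{ns}(u)$." The clean way around this: prove by induction on $t$ that for every bidder $v$, its remaining budget after the first $t$ allocations in the 1PAA run on $A'$ is at least its remaining budget in the non-strict run on $A$. For the inductive step on keyword $u$ allocated to $v_1 = v_1(u)$: the non-strict price is $p_{ns}(u) = \min(\tilde b_{u,v_2(u)}, B_{v_1}^{ns}(t-1))$ where $\tilde b$ is the (uncapped) non-strict bid, so $p_{ns}(u) \le \min(b_{u,v_2(u)}, B_{v_1}^{ns}(t-1)) \le \min(b'_{u,v_1}, B_{v_1}^{1P}(t-1))$, the last inequality using $b_{u,v_2(u)} \le b'_{u,v_1}$ and the inductive hypothesis $B_{v_1}^{1P}(t-1) \ge B_{v_1}^{ns}(t-1)$. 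Thus the 1PAA charge on $u$ is at least $p_{ns}(u)$, which simultaneously shows the 1PAA run collects at least as much total revenue and preserves the budget domination. Summing over all keywords gives $OPT_{non-strict}(A) \le OPT_{1P}(A')$, as desired.
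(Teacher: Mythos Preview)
Your overall strategy---use the same allocation $u \mapsto v_1(u)$ in the first-price instance $A'$ and compare revenues---is exactly the paper's approach, and your key pointwise inequality $b_{u,v_2(u)} \le b'_{u,v_1(u)}$ is correct. However, the inductive budget-domination invariant you set up is not actually preserved, so the step-by-step comparison breaks down.

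Concretely, you assume $B_{v}^{1P}(t-1) \ge B_{v}^{ns}(t-1)$ and deduce that the 1PAA charge on keyword $u$ is at least $p_{ns}(u)$. But that is precisely the direction that \emph{destroys} the invariant: if the 1PAA run charges \emph{more} at step $t$, then $B_{v}^{1P}(t)$ drops by more than $B_{v}^{ns}(t)$ does, and you can no longer conclude $B_{v}^{1P}(t) \ge B_{v}^{ns}(t)$. For a small example, take a bidder $v$ with budget $10$ allocated two keywords, with $b'_{u_i,v} = 8$ but non-strict second prices $3$ and $3$. After the first keyword, $B_v^{1P} = 2 < 7 = B_v^{ns}$, and at the second keyword the 1PAA charge is $2 < 3 = p_{ns}$. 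So the per-keyword inequality you want simply fails.

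The fix---and this is what the paper does---is to avoid tracking budgets step by step and instead compare totals per bidder. The total first-price revenue extracted from bidder $v$ under the allocation $f_1$ equals exactly $\min\bigl(B_v,\, \sum_{u:\, f_1(u)=v} b'_{u,v}\bigr)$, while the total non-strict revenue from $v$ is at most $\min\bigl(B_v,\, \sum_{u:\, f_1(u)=v} b_{u,f_2(u)}\bigr)$. Since $b_{u,f_2(u)} \le b'_{u,f_1(u)}$ term by term, the sums compare, and taking $\min$ with $B_v$ preserves the inequality. Summing over bidders gives $OPT_{non\text{-}strict}(A) \le OPT_{1P}(A')$ with no induction needed.
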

\begin{proof}
  For an instance $A$, we can view a non-strict second-price
  allocation of $A$ as a pair of (partial) functions $f_1$ and $f_2$
  from the keywords $U$ to the bidders $V$, where $f_1$ maps each
  keyword to the bidder to which it is allocated and $f_2$ maps each
  keyword to the bidder acting as its second-price bidder.  Thus, if
  $f_1(u) = v$ and $f_2(u) = v'$ then $u$ is allocated to $v$ and the
  price $v$ pays is $b_{u,v'}$.  We have, for all such $u$, $v$, and
  $v'$, that $b_{u,v'} \leq b'_{u,v}$.

  We construct the first-price allocation on $A'$ defined by $f_1$ and
  claim that the value of this first-price allocation is at least the
  value of the non-strict allocation defined by $f_1$ and $f_2$. It
  suffices to show that for any bidder $v$, the profit that the
  non-strict allocation gets from $v$ is at most the profit that the
  first-price allocation gets from $v$, or in other words,
  $$ 
  \min\left(B_v, \sum_{u: f_1(u) = v}
  b_{u,f_2(u)}\right) \leq \min\left(B_v, \sum_{u: f_1(u)
    = v} b'_{u,v}\right) \enspace .
  $$ 
  This inequality follows trivially from the fact that
  $b_{u,f_2(u)} \leq b'_{u,f_1(u)}$ for all allocated keywords $u$,
  and hence the lemma follows.
\end{proof}

\begin{lemma}
  $OPT_{1P}(A') \leq 8OPT_{2P}(A)$.
\end{lemma}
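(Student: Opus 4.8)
The plan is to relate the optimal first-price solution on $A'$ back to a second-price solution on $A$, losing only a constant factor. The intuition is that $b'_{u,v}$ is defined precisely so that there exists some bidder $v'$ (the one achieving the max in the definition) whose original bid $b_{u,v'} = b'_{u,v}$ and $b_{u,v'} \le b_{u,v}$; so if keyword $u$ is allocated to $v$ in the first-price solution on $A'$ for a price $b'_{u,v}$, then in instance $A$ we could try to allocate $u$ to $v$ with $v'$ serving as the second-price bidder, getting exactly the same revenue $b_{u,v'} = b'_{u,v}$. The obstacle is that a single bidder $v'$ cannot simultaneously serve as second-price bidder for too many keywords without its budget being exhausted — and more subtly, the same bidder might be used as the first-price winner on one keyword and the intended second-price bidder on another, creating conflicts. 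This is why we can only hope for a constant-factor rather than an exact relationship.

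**First I would** fix an optimal first-price allocation $f_1 \colon U \to V$ on $A'$ and, for each allocated keyword $u$ with $f_1(u) = v$, designate $\pi(u) := \arg\max_{v' \ne v : b_{u,v'} \le b_{u,v}} b_{u,v'}$ as the "witness" bidder, so that the price paid in $A'$ equals $b_{u,\pi(u)} \le \min(b_{u,v}, b_{u,\pi(u)})$ in the original instance. The natural second-price allocation on $A$ sends $u$ to $f_1(u)$ with $\pi(u)$ as the second-price bidder. The revenue of this candidate allocation equals $OPT_{1P}(A')$ exactly, but it may be \emph{infeasible}: a bidder $v$ may be overspent because it wins too many keywords (this is already bounded since $f_1$ is feasible on $A'$, where the relevant price is $b'_{u,v} \ge b_{u,\pi(u)}$, so winning payments are fine), and — the real issue — the designated second-price bid $b_{u,\pi(u)}(t-1)$ may have shrunk below $b'_{u,v}$ because $\pi(u)$'s budget was depleted by \emph{other} keywords where $\pi(u)$ is the first-price winner. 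So the plan is to partition $U$ (or the keyword-bidder assignments) into a constant number of classes so that within each class the candidate allocation, suitably restricted, is feasible, and one class retains at least a $1/8$ fraction of the total revenue.

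**The key steps, in order:** (1) Write $OPT_{1P}(A') = \sum_v \min(B_v, \sum_{u : f_1(u) = v} b'_{u,v})$ and split bidders into those "capped" by their budget and those not; handle the two contributions separately, possibly losing a factor of $2$. (2) For the non-capped part, observe the total payment a bidder makes as a first-price winner in $A$ under this allocation is at most $\sum_{u: f_1(u)=v} b_{u,v} \le \sum b'_{u,v} \le B_v$, so winner-side budgets are respected; the remaining task is to ensure each witness bidder $v'$ has enough budget to cover the second-price bids assigned to it. (3) Bound the total second-price demand on any bidder $v'$: each keyword $u$ with $\pi(u) = v'$ has $b_{u,v'} \le b_{u, f_1(u)} \le b'_{u,f_1(u)}$, and summing over such $u$... here one needs a charging argument showing that $v'$ is not asked to provide more than, say, $2B_{v'}$ or $4B_{v'}$ worth of second-price service, then discard a constant fraction (e.g. randomly keep each such keyword with probability $1/4$, or greedily process keywords and drop those that would overspend a witness) to restore feasibility. (4) Finally, combine: multiply the losses (budget-cap split, winner/witness conflict, witness-overspend pruning, and the $R_{min}$-related slack from $b_{u,v'}(t-1)$ versus $b_{u,v'}(0)$) to reach the factor $8$, and observe $OPT_{2P}(A)$ is at least the value of the feasible pruned allocation.

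**The main obstacle** will be step (3): controlling how much second-price service a single witness bidder is called upon to supply and showing a clean constant-factor pruning suffices. The difficulty is that the witness map $\pi$ is many-to-one in an uncontrolled way, and the order of arrivals matters for whether $\pi(u)$'s budget is still intact. I expect the paper handles this by a clever accounting that charges each unit of witness-demand to winner-revenue on a \emph{different} keyword (or to the witness's own budget), using the inequality $b_{u,\pi(u)} \le b_{u,f_1(u)}$ together with feasibility of $f_1$ on $A'$, so that no bidder is "witness-charged" more than a constant multiple of its budget; then a deterministic greedy pruning (process keywords in arrival order, keep $u$ only if both its winner and its witness still have budget, which fails at most a constant fraction of the time by the charging bound) yields a feasible second-price matching of value $\ge OPT_{1P}(A')/8$.
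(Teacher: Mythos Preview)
Your proposal correctly identifies the natural construction (allocate $u$ to $f_1(u)$ with the witness $\pi(u)$ as second-price bidder) and correctly pinpoints the obstacle: the witness $\pi(u)$ may also be the first-price winner $f_1(u'')$ on other keywords, so by the time $u$ arrives its remaining budget, and hence its effective bid, may have shrunk. But the resolution you sketch in step~(3) is aimed at the wrong target and would not close the gap.

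First, a minor but telling slip in step~(2): you write $\sum_{u:f_1(u)=v} b_{u,v} \le \sum b'_{u,v}$, but by definition $b'_{u,v} \le b_{u,v}$, so the inequality goes the other way. The quantity you actually need is $\sum_{u:f_1(u)=v} b'_{u,v}$ (the total \emph{second-price} charge to $v$), and whether that is at most $B_v$ is precisely what your capped/non-capped split is meant to control.

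The substantive gap is in step~(3). In the 2PAA model the second-price bidder's budget is \emph{not} debited; only the winner pays. So there is no constraint of the form ``$v'$ can supply at most $cB_{v'}$ of second-price service,'' and bounding $\sum_{u:\pi(u)=v'} b_{u,v'}$ against $B_{v'}$ is neither necessary nor useful. A single bidder with full budget can serve as second-price witness for arbitrarily many keywords. The only danger is exactly the role conflict you already named: if $v'$ is also a winner somewhere, its budget drops and its witness bids shrink. Your proposed fixes (randomly drop keywords, or greedily skip keywords whose witness is over-budget) do not separate the two roles, so they give no control over how much revenue is lost; a single bidder could be the witness for almost all keywords while also winning one cheap keyword early that nearly exhausts its budget, and pruning keywords does nothing to prevent that.

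The idea you are missing is to partition the \emph{bidders}, not the keywords. The paper marks each bidder independently with probability $1/2$; only unmarked bidders are allowed to win, and only keywords whose designated witness is marked are kept. Marked bidders therefore never have their budgets touched, so their witness bids stay at the original values, eliminating the role conflict entirely. This costs a factor of~$4$ in expectation ($1/2$ for the winner being unmarked times $1/2$ for the witness being marked), and one more factor of~$2$ handles the possibility that the surviving keywords for an unmarked winner overshoot its budget (take either all but the last or just the last, whichever is larger). That is where the~$8$ comes from.
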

\begin{proof}
  Given an optimal first-price allocation of $A'$, we can assume
  without loss of generality that each bidder's budget can only be
  exhausted by the last keyword allocated to it, or, more formally, if
  $u_1, u_2, \ldots u_k$ are the keywords that are allocated to a
  bidder $v$ and they come in that order, then we can assume that
  $\sum_{i=1}^{k-1} b'_{u_i,v} < B_v$.  The reason for this is that if
  for some $j < k$, $\sum_{i=1}^{j-1} b'_{u_i,v} < B_v$ and
  $\sum_{i=1}^j b'_{u_i,v} \geq B_v$, then we can ignore the allocation
  of $u_{j+1}, \ldots u_k$ to $v$ without losing any profit.
 
  With this assumption, we design a randomized algorithm that
  constructs a second-price allocation on $A$ whose expected value in
  our model is at least $1/8$ of the first-price allocation's value.
  Viewing the first-price allocation of $A'$ as a (partial) function
  $f$ from the keywords $U$ to the bidders $V$ and denoting by
  $s(u,v)$ the bidder $v'$ for which $b_{v'u} = b'_{vu}$, the
  algorithm is as follows.

\begin{center}
\small
\fbox{
\begin{tabular}{l}
{\bf Random Construction:}\\
\hline
  Randomly mark each bidder with probability $1/2$. \\
  For each unmarked bidder $v$: \\
    \hspace{.2in} Let $S_v = \emptyset$.\\
    \hspace{.2in} For each keyword $u$ such that $f(u) = v$:\\
      \hspace{.4in} If $s(u,v)$ is marked: $S_v = S_v \cup \{u\}$.\\
    \hspace{.2in} Assume that $S_v = \{u_1, u_2, \ldots u_k\}$, where $u_1, u_2, \ldots u_k$ come in that order.\\
      \hspace{.4in} If $\sum_{i=1}^k b'_{u_i,v} \leq B_v$:\\
        \hspace{.6in} Let $f_1(u_i) = v$ and $f_2(u_i) = s(u_i,v)$ for all $i \leq k$.\\
      \hspace{.4in} Else: \\
        \hspace{.6in} If $\sum_{i=1}^{k-1} b'_{u_i,v} \geq b'_{u_k,v}$: 
          let $f_1(u_i) = v$ and $f_2(u_i) = s(u_i,v)$ for all $i \leq k-1$. \\
        \hspace{.6in} Else: let $f_1(u_k) = v$ and $f_2(u_k) = s(u_k, v)$.\\
\end{tabular}
}
\end{center}
\normalsize

We claim that for the $f_1$ and $f_2$ defined by this construction,
whenever $f(u_i)$ is set to $v$, the profit from that allocation is
$b'_{u_i,v}$.  This is not trivial because in our model, if a bidder's
remaining budget is smaller than its bid for a keyword, it changes its
bid for that keyword to its remaining budget. However, one can easily
verify that in all cases, if we set $f_1(u_i) = v$ and $f_2(u_i) =
s(u_i, v)$, the remaining budget of $v$ is at least $b'_{u_i,v} =
b_{u_i,s(u_i,v)}$. Thus, the (modified) bid of $f_1(u_i)$ for $u_i$ is
still at least the original bid of $f_2(u_i)$ for $u_i$.

We claim that the expected value of the second-price allocation
defined by $f_1$ and $f_2$ is at least $1/8 OPT_{1P}(A')$. For each
bidder $v$, let $X_v$ be the random variable denoting the profit that
$f_1$ and $f_2$ get from $v$, and let $Y_v$ be the profit that $f$
gets from $v$. We have $OPT_{1P}(A') = \sum_v Y_v$, so it suffices to
show that $E(X_v) \geq 1/8Y_v$ for all $v \in V$.

Consider any $v \in V$ that is unmarked. Let $T_v = \{u: f(u) =
v\}$. If $\sum_{u \in S_v} b'_{u,v} \leq B_v$ then $X_v = \sum_{u \in
  S_v} b'_{u,v}$. If $\sum_{u \in S_v} b'_{u,v} > B_v$ then $X_v \geq
\sum_{u \in S_v} b'_{u,v}/2$. Thus, in both case, we have
$$
  E[X_v|v\ \textrm{is unmarked}] \geq E[\sum_{u \in S_v} b'_{u,v}/2 | v\ \textrm{is unmarked}] = \sum_{u \in T_v} b'_{u,v}/4 = Y_v/4 \enspace ,
$$
which implies
$$
  E[X_v] \geq E[X_v|v\ \textrm{is unmarked}]Pr[v\ \textrm{is unmarked}] = 1/2\cdot Y_v/4 = Y_v/8 \enspace .
$$
\end{proof}

\section{Proof of Theorem~\ref{thm:ranking}}\label{appendix:ranking}
In this appendix, we provide a full proof of
Theorem~\ref{thm:ranking}.  The proof presented here is quite similar
to the simplified proof of Theorem~\ref{thm:old_ranking} presented by
Birnbaum and Mathieu~\cite{Birnbaum08}.  For intuition into the proof
presented here, the interested reader is referred to that
work.\footnote{For those familiar with the proof in~\cite{Birnbaum08},
  the main difference between the proof of Theorem~\ref{thm:ranking}
  presented here and the proof of Theorem~\ref{thm:old_ranking}
  presented in \cite{Birnbaum08} appears in
  Lemma~\ref{lem:rankingmain}.  Instead of letting $u$ be the single
  vertex that is matched to $v$ by the perfect matching, as is done in
  \cite{Birnbaum08}, we choose $u$ uniformly at random from one of the
  $k$ vertices that correspond to the vertex that is matched to $v$ by
  the perfect matching.  The rest of the proof is essentially the
  same, but we present its entirety here for completeness.}

Let $G = (U_G \cup V, E_G)$ be a bipartite graph and let $H = (U_H
\cup V, E_H)$ be a left $k$-copy of $G$. Let $\zeta : U_H \rightarrow
U_G$ be a map that satisfies the conditions of
Definition~\ref{def:kcopy}.  Let $M_G \subseteq E_G$ be a maximum
matching of $G$.

Let $\ranking(H, \pi, \sigma)$ denote the matching constructed on $H$
for arrival order $\pi$, when the ranking is $\sigma$.  Consider
another process in which the vertices in $V$ arrive in the order given
by $\sigma$ and are matched to the available vertex $u \in U_H$ that
minimizes $\pi(u)$.  Call the matching constructed by this process
$\ranking'(H, \pi, \sigma)$.  
It is not hard to see that these matchings are identical, a fact
that is proved in \cite{Karp90}.
\begin{lemma}[Karp, Vazirani, and Vazirani \cite{Karp90}]\label{lem:duality}
For any permutations $\pi$ and $\sigma$, $\ranking(H, \pi, \sigma) =
\ranking'(H, \pi,\sigma)$.\end{lemma}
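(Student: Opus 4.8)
The plan is to prove that the two matchings coincide by induction on the number of edges $|E_H|$; this is essentially the standard ``it does not matter which side you run the greedy process from'' fact for bipartite graphs. The base case, in which $H$ has no edges, is trivial: both $\ranking(H,\pi,\sigma)$ and $\ranking'(H,\pi,\sigma)$ are empty. For the inductive step the idea is to single out one edge that must belong to both matchings, delete its two endpoints, and apply the inductive hypothesis to the smaller graph.

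So assume $H$ has at least one edge. Let $u^*$ be the vertex of $U_H$ having at least one neighbor that minimizes $\pi$, and let $v^*$ be the neighbor of $u^*$ that minimizes $\sigma$. First I would check that $(u^*,v^*) \in \ranking(H,\pi,\sigma)$: every keyword arriving before $u^*$ in the order $\pi$ is isolated and does nothing, so when $u^*$ arrives all of $V$ is still unmatched and $u^*$ is matched to its lowest-$\sigma$ neighbor, namely $v^*$. Next, $(u^*,v^*) \in \ranking'(H,\pi,\sigma)$: here the vertices of $V$ arrive in the order $\sigma$; since $v^*$ is the lowest-$\sigma$ neighbor of $u^*$, no neighbor of $u^*$ arrives before $v^*$, so $u^*$ is still unmatched when $v^*$ arrives; and since $u^*$ minimizes $\pi$ among all non-isolated vertices of $U_H$, it minimizes $\pi$ among all available neighbors of $v^*$, so $v^*$ is matched to $u^*$.

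Finally, let $H'$ be $H$ with $u^*$, $v^*$, and their incident edges removed, and let $\pi',\sigma'$ be the restrictions of $\pi,\sigma$. The remaining step is to argue $\ranking(H,\pi,\sigma) = \{(u^*,v^*)\} \cup \ranking(H',\pi',\sigma')$, and likewise for $\ranking'$, after which the inductive hypothesis gives equality. For $\ranking$ this is immediate: each keyword other than $u^*$ behaves on $H$ exactly as on $H'$, since the only vertex of $V$ absent from $H'$ is $v^*$, which is already matched (hence unavailable) in $H$ after $u^*$ is processed. For $\ranking'$ one must check both that every bidder arriving before $v^*$ fails to be adjacent to $u^*$ (which is precisely the choice of $v^*$ as the lowest-$\sigma$ neighbor of $u^*$), and that every bidder arriving after $v^*$ finds $u^*$ already matched and so effectively absent. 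This bookkeeping is the only place any care is needed and is the main (mild) obstacle; once it is in place, the inductive hypothesis applied to $H'$ yields $\ranking(H,\pi,\sigma) = \ranking'(H,\pi,\sigma)$.
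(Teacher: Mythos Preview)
Your argument is correct. The induction on $|E_H|$ works cleanly: the identification of the pair $(u^*,v^*)$ as belonging to both matchings is sound (the key observations being that $u^*$ is the $\pi$-minimal non-isolated vertex, hence the $\pi$-minimal neighbor of $v^*$, and that $v^*$ is the $\sigma$-minimal neighbor of $u^*$), and the reduction to $H' = H \setminus \{u^*,v^*\}$ is justified by the bookkeeping you outline.

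As for comparison with the paper: the paper does not actually supply a proof of this lemma. It states the result and attributes it to Karp, Vazirani, and Vazirani~\cite{Karp90}, remarking only that ``it is not hard to see that these matchings are identical.'' Your inductive argument is a standard way to make this precise, and is essentially the symmetry-of-greedy argument one finds in the original source.
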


The following monotonicity lemma shows that removing vertices in $H$
can only decrease the size of the matching returned by Ranking.

\begin{lemma}\label{lem:monotonicity}
Let $\pi_H$ be an arrival order for the vertices in $U_H$, and let
$\sigma_H$ be a ranking on the vertices in $V$.  Suppose that $x$ is a
vertex in $U_H \cup V$, and let $H' = (U_{H'}, V_{H'}, E_{H'}) = H
\setminus \braces{x}$.  Let $\pi_{H'}$ and $\sigma_{H'}$ be the
orderings of $U_{H'}$ and $V_{H'}$ induced by $\pi_H$ and $\sigma_H$,
respectively.  Then $|\ranking(H', \pi_{H'}, \sigma_{H'})| \leq
|\ranking(H, \pi_H, \sigma_H)|$.
\end{lemma}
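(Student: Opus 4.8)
The plan is to prove Lemma~\ref{lem:monotonicity} by a coupling/alignment argument between the execution of Ranking on $H$ and on $H' = H \setminus \{x\}$, showing that at every point in time the matching built on $H$ is ``at least as large'' as the one built on $H'$. The cleanest way to set this up is to first reduce to the case where $x \in V$, using the duality Lemma~\ref{lem:duality}: if $x \in U_H$, then running Ranking on $H$ with $(\pi_H,\sigma_H)$ is the same as running $\ranking'$ with the roles of the two sides swapped, i.e.\ processing $V$ in $\sigma_H$-order and matching into $U_H$ in $\pi_H$-order; removing a vertex of $U_H$ in $H$ then becomes removing a vertex of the ``arriving'' side in that dual process. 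So it suffices to handle the deletion of a single vertex, and by symmetry of the two sides under Lemma~\ref{lem:duality} we may assume $x = v^\ast \in V$ is a bidder.

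With $x = v^\ast$ a bidder, I would process the keywords of $U_H$ in the common arrival order $\pi_H$ and track, inductively, the relationship between the partial matching $N$ produced so far on $H$ and the partial matching $N'$ produced so far on $H'$. The key invariant I expect to maintain is: the set of matched bidders in $N'$ is contained in the set of matched bidders in $N \cup \{v^\ast\}$; equivalently, $N$ and $N'$ differ in a controlled way, roughly that $N'$'s matched-bidder set is obtained from $N$'s by possibly deleting $v^\ast$ and possibly shifting along an alternating path. More concretely, one shows that at each step either $N$ and $N'$ match the incoming keyword to the same bidder, or they differ along a single alternating path ending at $v^\ast$, so that $|N| \ge |N'|$ is preserved; when a keyword that $N'$ leaves unmatched gets matched by $N$, the gap can only grow, and when $N$ matches a keyword to a strictly lower-ranked bidder than $N'$ does, the displaced bidder in $N'$ is handled by the alternating-path bookkeeping. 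Since $N$ and $N'$ agree on which keywords exist and in what order, and the only structural difference is the absence of $v^\ast$, such an invariant can be pushed through the induction on the keyword index.

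The main obstacle, and the step that needs the most care, is formulating the induction hypothesis precisely enough that it is actually preserved — a naive hypothesis like ``$N' \subseteq N$'' is false, because deleting $v^\ast$ can cause a cascade of re-matchings, and one genuinely needs the alternating-path description (this is the familiar subtlety behind monotonicity of greedy/Ranking-type matchings). I would state the invariant as: there is an alternating path $P$ (with respect to $N$) starting at $v^\ast$ such that $N' = N \triangle P$ restricted to $H'$, and then verify case-by-case (keyword adjacent to none of the relevant free bidders; keyword that extends $P$; keyword matched identically) that processing one more keyword preserves an invariant of this form, with $|N \triangle P| \le |N| $ on $H'$. Once the invariant survives to the last keyword, $|\ranking(H',\pi_{H'},\sigma_{H'})| = |N'| \le |N| = |\ranking(H,\pi_H,\sigma_H)|$, which is the claim. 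The remaining cases are routine bookkeeping and I would not grind through them here.
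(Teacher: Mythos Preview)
Your outline is correct and would go through, but it differs from the paper's argument in a way that makes your bookkeeping heavier than necessary. Both proofs invoke Lemma~\ref{lem:duality} to collapse the two cases into one; the paper, however, takes $x \in U_H$ as the primary case and processes keywords in arrival order, maintaining the simple set-containment invariant $Q_{t-1}(H') \subseteq Q_t(H)$, where $Q_t(\cdot)$ is the set of bidders matched after $t$ arrivals --- the one-step time shift absorbs the missing keyword, and the induction is a short case split on whether the arriving keyword is matched and to whom. By instead deleting a bidder $v^\ast$ and processing the common keyword stream, you lose any such containment (the cascade can match in $H'$ a bidder $w_t$ not yet matched in $H$), which is precisely why you are pushed into the alternating-path formulation; a cleaner equivalent of your invariant is that the available-bidder sets satisfy $A'_t = A_t \setminus \{w_t\}$ for a single moving element $w_t$ until the cascade terminates, after which $A'_t = A_t$. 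One small slip in your reduction paragraph: in $\ranking'$ it is $V$ that arrives, so removing $x \in U_H$ puts $x$ on the \emph{non-arriving} side of the dual process, not the arriving side --- the reduction is still valid, but the sentence as written is inverted.
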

\begin{proof}
Suppose first that $x \in U_H$.  In this case, $V = V_{H'}$ and
$\sigma_{H} = \sigma_{H'}$.  Let $Q_t(H) \subseteq V$ be the set of
vertices matched to vertices in $U_H$ that arrive at or before time
$t$ (under arrival order $\pi_H$ and ranking $\sigma_H$), and let
$Q_t(H') \subseteq V$ be the set of vertices matched to vertices in
$U_{H'}$ that arrive at or before time $t$ (under arrival order
$\pi_{H'}$ and ranking $\sigma_H$).  We prove by induction on $t$ that
$Q_{t-1}(H') \subseteq Q_{t}(H)$, which by substituting $t = n$ is
sufficient to prove the claim.  The statement holds when $t = 1$,
since $Q_0(H') = \emptyset$.  Now supposing we have $Q_{t-2}(H')
\subseteq Q_{t-1}(H)$, we prove $Q_{t-1}(H') \subseteq Q_t(H)$.
Suppose that $t$ is at or before the time that $x$ arrives in $\pi_H$.
Then clearly $Q_{t-1}(H') = Q_{t-1}(H) \subseteq Q_{t}(H)$.  Now
suppose that $t$ is after the time that $x$ arrives in $\pi_H$.  Let
$u$ be the vertex that arrives at time $t-1$ in $\pi_{H'}$.  If $u$ is
not matched by $\ranking(H',\pi_{H'},\sigma_H)$, then $Q_{t-1}(H') =
Q_{t-2}(H') \subseteq Q_{t-1}(H) \subseteq Q_{t}(H)$.  Now suppose
that $u$ is matched by $\ranking(H', \pi_{H'}, \sigma_{H})$, say to
vertex $v'$. We show that $v' \in Q_t(H)$, which by the induction
hypothesis, is enough to prove that $Q_{t-1}(H') \subseteq Q_{t}(H)$.
Note that $u$ arrives at time $t$ in $\pi_H$.  Let $v$ be the vertex
to which $u$ is matched by $\ranking(H, \pi_H, \sigma_H)$.  If $v =
v'$, we are done, so suppose that $v \not= v'$.  Since $v \not\in
Q_{t-1}(H)$, it follows by the induction hypothesis that $v \not\in
Q_{t-2}(H')$.  Therefore, vertex $v$ is available to be matched to $u$
when it arrives in $\pi_{H'}$.  Since $\ranking(H', \pi_{H'},
\sigma_{H})$ matched $u$ to $v'$ instead, $v'$ must have a lower rank
than $v$ in $\sigma_H$.  Since $\ranking(H, \pi_H, \sigma_H)$ chose
$v$, vertex $v'$ must have already been matched when vertex $u$
arrived at time $t$ in $\pi_H$, or, in other words, $v' \in Q_{t-1}(H)
\subseteq Q_t(H)$.

Now suppose that $x \in V$.  In this case, $U_{H} = U_{H'}$ and $\pi_H
= \pi_{H'}$.  Let $R_t(H) \subseteq U_H$ be the set of vertices
matched to vertices in $V$ that are ranked less than or equal to $t$
(under arrival order $\pi_H$ and ranking $\sigma_H$), and let $R_t(H')
\subseteq U_H$ be the set of vertices matched to vertices in $V$ that
are ranked less than or equal to $t$ (under arrival order $\pi_{H}$
and ranking $\sigma_{H'}$).  Then by Lemma~\ref{lem:duality}, we can
apply the same argument as before to show that $R_{t-1}(H') \subseteq
R_t(H)$ for all $t$, which by substituting $t = n$, is sufficient to
prove the claim.
\end{proof}

We define the following notation.  For all $u_G \in U_G$, let
$\zeta^{-1}(u_G)$ be the set of all $u_H \in U_H$ such that
$\zeta(u_H) = u_G$, and for any subset $U_G' \subseteq U_G$, let
$\zeta^{-1}(U_G')$ be the set of all $u_H \in U_H$ such that
$\zeta(u_H) \in U_G'$.  The following lemma shows that we can assume
without loss of generality that $M_G$ is a perfect matching.

\begin{lemma}\label{lem:perfectok}
Let $U' \subseteq U_G$ and $V' \subseteq V$ be the subset of vertices
that are in $M_G$.  Let $G'$ be the subgraph of $G$ induced by $U'
\cup V'$, and let $H'$ be the subgraph of $H$ induced by
$\zeta^{-1}(U') \cup V'$.  Then the expected size of the matching
produced by Ranking on $H'$ is no greater than the expected size of
the matching produced by Ranking on $H$.
\end{lemma}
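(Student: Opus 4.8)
The plan is to realize $H'$ as $H$ with a set of vertices deleted and then invoke the monotonicity lemma (Lemma~\ref{lem:monotonicity}) one vertex at a time. The vertices of $H$ absent from $H'$ are exactly $\zeta^{-1}(U_G \setminus U')$ — all $k$ copies of each keyword left unmatched by $M_G$ — together with the unmatched bidders $V \setminus V'$. Enumerate this deleted set as $\{x_1,\dots,x_r\}$ and set $H = H^{(0)} \supseteq H^{(1)} \supseteq \cdots \supseteq H^{(r)} = H'$, where $H^{(i)} = H^{(i-1)} \setminus \{x_i\}$. Fix an arbitrary arrival order $\pi$ on $U_H$ and an arbitrary ranking $\sigma$ on $V$, and let $\pi^{(i)}$, $\sigma^{(i)}$ be the orderings induced on $H^{(i)}$ (the induced orderings compose, so $\pi^{(r)},\sigma^{(r)}$ are the orderings on $H'$ induced directly from $\pi,\sigma$). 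Applying Lemma~\ref{lem:monotonicity} once for each $i = 1,\dots,r$ — using the $x \in U_H$ case when $x_i$ is a keyword-copy and the $x \in V$ case when $x_i$ is a bidder — yields, for every such $\pi$ and $\sigma$,
\[
|\ranking(H', \pi^{(r)}, \sigma^{(r)})| \;\le\; |\ranking(H, \pi, \sigma)| \enspace .
\]

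Next I would pass from this pointwise inequality to the statement about expectations. Ranking on $H$ draws $\sigma$ uniformly from the permutations of $V$; its restriction $\sigma^{(r)}$ to $V'$ is then uniform over the permutations of $V'$, which is precisely the distribution used by Ranking on $H'$. Taking the expectation over $\sigma$ of the displayed inequality therefore shows that the expected size of Ranking's matching on $H'$ with arrival order $\pi^{(r)}$ is at most the expected size of Ranking's matching on $H$ with arrival order $\pi$. Since $\pi$ was arbitrary and every arrival order of $U_{H'}$ arises as $\pi^{(r)}$ for some $\pi$ (extend it to $U_H$ by inserting the deleted keyword-copies in any positions), this gives the claimed comparison of the (worst-case-over-arrival-order) expected matching sizes, and hence the lemma.

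Finally I would record the two structural observations that make this reduction useful for Theorem~\ref{thm:ranking}: $H'$ is a left $k$-copy of $G'$ via the restriction of $\zeta$ to $\zeta^{-1}(U')$, because we removed whole fibers and so every vertex of $U'$ still has exactly $k$ preimages, and the incidence condition of Definition~\ref{def:kcopy} is inherited unchanged; and $M_G$ is a perfect matching of $G'$, hence a maximum matching of $G'$, of size $OPT_{1P}$. Consequently it suffices to prove Theorem~\ref{thm:ranking} under the assumption that $M_G$ is perfect.

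The only point requiring genuine care is the iterated use of Lemma~\ref{lem:monotonicity}: that lemma is stated for a single vertex deletion with the induced arrival order and ranking, so I would make explicit that deleting a set of vertices is a finite chain of single-vertex deletions and that restriction of orderings is transitive along the chain. Everything else — the uniformity of the restricted ranking, the inheritance of the left $k$-copy structure, and the surjectivity of $\pi \mapsto \pi^{(r)}$ onto arrival orders of $H'$ — is routine, so I do not anticipate a substantive obstacle here; the real content of this part of the argument lives in Lemma~\ref{lem:monotonicity} itself.
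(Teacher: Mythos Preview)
Your proposal is correct and follows exactly the paper's approach: repeated application of Lemma~\ref{lem:monotonicity} to remove the vertices outside $\zeta^{-1}(U') \cup V'$ one at a time. The paper's proof is a single sentence to this effect, so your version simply spells out the bookkeeping (chaining the deletions, transitivity of induced orderings, and the passage to expectations via uniformity of the restricted ranking) that the paper leaves implicit.
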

\begin{proof}
The proof follows by repeated application of
Lemma~\ref{lem:monotonicity} for all $x$ that are not in
$\zeta^{-1}(U') \cup V'$.
\end{proof}

In light of Lemma~\ref{lem:perfectok}, to prove
Theorem~\ref{thm:ranking}, it is sufficient to show that the expected
size of the matching produced by Ranking on $H'$ is at least
$(1-1/e^{1/k} - o(1))|M_G|$.  To simplify notation, we instead assume
without loss of generality that $G = G'$, and hence $G$ has a perfect
matching.  Let $n = OPT_{1P} = |M_G| = |V|$.  Henceforth, fix an arrival order
$\pi$.  To simplify notation, we write $\ranking(\sigma)$ to mean
$\ranking(H, \pi, \sigma)$.

Let $f : U_H \rightarrow V$ be a map such that for all $v \in V$,
there are exactly $k$ vertices $u \in U_H$ such that $f(u) = v$.  The
existence of such a map $f$ follows from the assumption that $G$ has a
perfect matching.  For any vertex $v \in V$ let $f^{-1}(v)$ be the set
of $u \in U_H$ such that $f(u) = v$.  We proceed with the following
two lemmas.

\begin{lemma}\label{lem:easy}
Let $u \in U_H$, and let $v = f(u)$.  For any ranking $\sigma$, if $v$
is not matched by $\ranking(\sigma)$, then $u$ is matched to a vertex
whose rank is less than the rank of $v$ in $\sigma$.
\end{lemma}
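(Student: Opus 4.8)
The plan is to unwind what the map $f$ is and then run the one-step argument that is standard in Ranking analyses. Recall that $f$ is obtained from the fixed perfect matching $M_G$ of $G$: for each $v\in V$, if $u_G$ is the $M_G$-partner of $v$, then the $k$ vertices of $\zeta^{-1}(u_G)$ are exactly the $u\in U_H$ with $f(u)=v$. The first step is therefore to record that $u$ and $v=f(u)$ are adjacent in $H$: since $(\zeta(u),v)=(u_G,v)\in M_G\subseteq E_G$, Definition~\ref{def:kcopy} gives $(u,v)\in E_H$. (If one prefers not to pin down $f$, this adjacency is exactly the property $f$ must be chosen to have; I would state it explicitly when $f$ is introduced.)

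Next I would argue that $u$ is matched by $\ranking(\sigma)$ at all. By hypothesis $v$ is not matched by $\ranking(\sigma)$ at any point, so in particular $v$ is unmatched at the moment $u$ arrives in $\pi$. Since $v$ is a neighbor of $u$, this means $v\in N(u)$ at that moment, so $N(u)\neq\emptyset$ and the algorithm matches $u$ to some vertex $w\in N(u)$.

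Finally I would pin down the rank of $w$. The algorithm picks $w$ to minimize $\sigma$ over $N(u)$, and $v\in N(u)$, so $\sigma(w)\le\sigma(v)$. Moreover $w\neq v$, because $w$ is matched to $u$ while $v$ is never matched; hence $\sigma(w)<\sigma(v)$, which is exactly the claim.

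I do not expect a real obstacle here: the content is a single-step greedy argument, and the only point that needs care is making the adjacency of $u$ and $f(u)$ explicit — i.e., being clear that $f$ is the lift of $M_G$ through $\zeta$ — since the whole lemma rests on it. Once that is in place the rest is immediate from the definition of Ranking and the fact that $v$ is unmatched throughout.
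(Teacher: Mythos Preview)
Your proof is correct and is essentially the same argument as the paper's, just spelled out in more detail. The paper's proof is a two-line version of yours: it asserts the edge $(u,v)\in E_H$ without justification and then says that since $v$ is available when $u$ arrives, Ranking matches $u$ to a vertex of lower rank; your additional care about the adjacency (and the strictness via $w\neq v$) is warranted, since the paper only defines $f$ by the preimage-count condition and leaves the adjacency implicit in the phrase ``the existence of such a map $f$ follows from the assumption that $G$ has a perfect matching.''
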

\begin{proof}
If $v$ is not matched by $\ranking(\sigma)$, then since there is an
edge between $u$ and $v$, it was available to be matched to $u$ when
it arrived.  Therefore, by the behavior of $\ranking$, $u$ must have
been matched to a vertex of lower rank.
\end{proof}

\begin{lemma}\label{lem:technical}
Let $u \in U_H$, and let $v = f(u)$.  Fix an integer $t$ such that
$1 \leq t \leq n$.  Let $\sigma$ be a permutation, and let $\sigma'$
be the permutation obtained from $\sigma$ by removing vertex $v$ and
putting it back in so its rank is $t$.  If $v$ is not matched by
$\ranking(\sigma')$, then $u$ must be matched by $\ranking(\sigma)$ to
a vertex whose rank in $\sigma$ is less than or equal to $t$.
\end{lemma}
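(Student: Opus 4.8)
The plan is to work entirely with the dual description of \ranking\ (Lemma~\ref{lem:duality}) and with the monotonicity of \ranking\ under vertex deletion (Lemma~\ref{lem:monotonicity}), so that the statement becomes a short bookkeeping argument about prefixes of rankings. By Lemma~\ref{lem:duality}, $v$ is unmatched by $\ranking(\sigma') = \ranking(H,\pi,\sigma')$ exactly when $v$ is unmatched by the dual process $\ranking'(H,\pi,\sigma')$, in which the vertices of $V$ arrive in the order $\sigma'$ and each is matched to the available $u\in U_H$ minimizing $\pi(u)$; so I will argue about $\ranking'$ throughout. I also record the fact (already used in the proof of Lemma~\ref{lem:easy}) that $(u,v)\in E_H$: since $f$ is compatible with the perfect matching of $G$, $(\zeta(u),v)\in E_G$, and Definition~\ref{def:kcopy} lifts this to $(u,v)\in E_H$.

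The first step is the key structural observation about $\sigma'$: since $\sigma'$ is obtained from $\sigma$ by deleting $v$ and reinserting it at rank $t$, the set of $V$-vertices of $\sigma'$-rank at most $t-1$ equals the set of the first $t-1$ vertices of $\sigma^{-v}$, the restriction of $\sigma$ to $V\setminus\{v\}$ --- and this holds no matter where $v$ originally sat in $\sigma$, so no case analysis on $\sigma(v)$ is needed. Consequently, in $\ranking'(H,\pi,\sigma')$ the first $t-1$ arrivals are exactly these $t-1$ vertices, in that order, and because an already-arrived but unmatched vertex of $V$ never occupies a vertex of $U_H$, the set $A\subseteq U_H$ of vertices matched after these $t-1$ steps coincides with the set matched after the first $t-1$ steps of $\ranking'(H',\pi,\sigma^{-v})$, where $H' = H\setminus\{v\}$. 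In the notation of the proof of Lemma~\ref{lem:monotonicity}, $A = R_{t-1}(H')$.

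Now I put it together. At step $t$ of $\ranking'(H,\pi,\sigma')$ the vertex $v$ arrives; since it is unmatched, every neighbor of $v$ in $U_H$ --- in particular $u$ --- already lies in $A = R_{t-1}(H')$. Applying the $x\in V$ case of Lemma~\ref{lem:monotonicity} with $x = v$ (concretely, the containment $R_{t-1}(H')\subseteq R_t(H)$ established in its proof, with respect to $\pi$ and $\sigma$) gives $u\in R_t(H)$; that is, $\ranking'(H,\pi,\sigma)$ matches $u$ to a vertex whose $\sigma$-rank is at most $t$. By Lemma~\ref{lem:duality} once more, $\ranking'(H,\pi,\sigma) = \ranking(\sigma)$, which is exactly the claim.

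The whole argument is short once Lemmas~\ref{lem:duality} and~\ref{lem:monotonicity} are in hand; the only delicate point is the prefix identity between $\sigma'$ and $\sigma^{-v}$ above, which is what lets us avoid a case split on whether $v$ was moved earlier or later. I do not expect to need Lemma~\ref{lem:easy} here: an alternative route would apply Lemma~\ref{lem:easy} to $\sigma'$ and then transport the conclusion from $\ranking(\sigma')$ back to $\ranking(\sigma)$, but that transport step again requires monotonicity, so routing directly through $H\setminus\{v\}$ is cleaner.
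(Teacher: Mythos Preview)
Your proof is correct and genuinely streamlines the paper's argument. Both proofs work in the dual process $\ranking'$ and both ultimately show $u\in R_t(\sigma)$ by first placing $u$ in some ``prefix'' set and then containing that set in $R_t(\sigma)$. The difference is in how the containment is obtained.

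The paper first uses Lemma~\ref{lem:easy} to put $u\in R_{t-1}(\sigma')$, and then proves $R_{t-1}(\sigma')\subseteq R_t(\sigma)$ directly, splitting on whether $v$'s original position $\tilde t$ in $\sigma$ satisfies $\tilde t\ge t$ or $\tilde t<t$; the second case requires an inline induction that essentially repeats the proof of Lemma~\ref{lem:monotonicity}. Your key observation is the prefix identity: the first $t-1$ arrivals of $\sigma'$ coincide, as an ordered sequence, with the first $t-1$ arrivals of $\sigma^{-v}$, independently of $\tilde t$. This lets you identify $R_{t-1}(\sigma')$ with $R_{t-1}(H')$ for $H'=H\setminus\{v\}$ and then invoke the containment $R_{t-1}(H')\subseteq R_t(H)$ already established inside the proof of Lemma~\ref{lem:monotonicity}. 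So you trade the case split and the duplicated induction for a single citation, and you do not need Lemma~\ref{lem:easy} as a separate step (your ``all neighbors of $v$ already lie in $A$'' is its content, read off directly from $\ranking'$). The only cost is that you are invoking a statement that appears in the \emph{proof} of Lemma~\ref{lem:monotonicity} rather than in its statement; if you want the argument to be fully self-contained, you could restate that containment as a corollary.
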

\begin{proof}
For the proof, it is convenient to invoke Lemma~\ref{lem:duality} and
consider $\ranking'(\sigma)$ and $\ranking'(\sigma')$ instead of
$\ranking(\sigma)$ and $\ranking(\sigma')$.  In the process by which
$\ranking'$ constructs its matching, call the moment that the
$t^\textrm{th}$ vertex in $V$ arrives \emph{time} $t$.  For any $1
\leq s \leq n$, let $R_s(\sigma)$ (resp., $R_s(\sigma')$) be the set
of vertices in $U_H$ matched by time $s$ in $\sigma$ (resp.,
$\sigma'$).  By Lemma~\ref{lem:easy}, if $v$ is not matched by
$\ranking(\sigma')$, then $u$ must be matched to a vertex $v'$ in
$\ranking(\sigma')$ such that $\sigma'(v') < \sigma'(v)$.  Hence $u
\in R_{t-1}(\sigma')$.  We prove the lemma by showing that
$R_{t-1}(\sigma') \subseteq R_t(\sigma)$.
Let $\tilde{t}$ be the time that $v$ arrives in $\sigma$.  Then
if $\tilde{t} \geq t$, the two orders $\sigma$ and $\sigma$ are
identical through time $t$, which implies that
$R_{t-1}(\sigma') = R_{t-1}(\sigma) \subseteq R_t(\sigma)$.

Now, in the case that $\tilde{t} < t$, we prove that for $1 \leq s
\leq t$, $R_{s-1}(\sigma') \subseteq R_s(\sigma)$.  The proof, which
is similar to the proof of Lemma~\ref{lem:monotonicity}, proceeds by
induction on $s$.  When $s = 0$, the claim clearly holds, since
$R_0(\sigma') = \emptyset$.  Now, supposing that $R_{s-2}(\sigma')
\subseteq R_{s-1}(\sigma)$, we prove that $R_{s-1}(\sigma') \subseteq
R_s(\sigma)$.  If $s \leq \tilde{t}$, then the two orders $\sigma$ and
$\sigma'$ are identical through time $s$, so $R_{s-1}(\sigma') =
R_{s-1}(\sigma) \subseteq R_s(\sigma)$.  Now suppose that $s >
\tilde{t}$.  Then the vertex that arrives at time $s-1$ in $\sigma'$
is the same as the vertex that arrives at time $s$ in $\sigma$.  Call
this vertex $w$.  If $w$ is not matched by $\ranking'(\sigma')$, then
$R_{s-1}(\sigma') = R_{s-2}(\sigma')$, and we are done by the
induction hypothesis.  Now suppose that $w$ is matched to vertex $x'$
by $\ranking'(\sigma')$ and to vertex $x$ by $\ranking'(\sigma)$.  If
$x = x'$, then again we are done by the induction hypothesis, so
suppose that $x \ne x'$.  Since $x$ was available at time $s-1$ in
$\sigma$, we have $x \not\in R_{s-1}(\sigma)$, and by the induction
hypothesis $x \not\in R_{s-2}(\sigma')$.  Hence, $x$ was available at
time $s-1$ in $\sigma'$.  Since $\ranking'(\sigma')$ matched $w$ to
$x'$, it must be that $\pi(x') < \pi(x)$.  This implies that $x'$ must
be matched when $w$ arrives at time $s$ in $\sigma$, or in other
words, $x' \in R_{s-1}(\sigma) \subseteq R_s(\sigma)$.  By the
induction hypothesis, we are done.
\end{proof}

\begin{lemma}\label{lem:rankingmain}
For $1 \leq t \leq n$, let $x_t$ denote the probability over $\sigma$
that the vertex ranked $t$ in $V$ is matched by $\ranking(\sigma)$.
Then
\begin{equation}\label{eqn:rankingmain}
1 - x_t \leq \frac{1}{kn} \sum_{s = 1}^t x_s \enspace .
\end{equation}
\end{lemma}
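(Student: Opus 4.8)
The plan is to run the Birnbaum--Mathieu exchange argument, adapted to the $k$-copy setting via the randomized choice of $u$ flagged in the footnote. Fix $t$ with $1 \leq t \leq n$. Recall that we have reduced to the case where $G$ has a perfect matching, so $|U_G| = |V| = n$ and hence $|U_H| = kn$, and that $f : U_H \rightarrow V$ sends exactly $k$ vertices to each $v \in V$. The core experiment is: draw $\sigma$ uniformly at random among permutations of $V$, draw $u \in U_H$ uniformly at random and independently of $\sigma$, and set $v = f(u)$. Since $f$ is exactly $k$-to-one and onto, $v$ is uniform on $V$ and independent of $\sigma$.

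First I would introduce $\sigma^{(t)}$, the permutation obtained from $\sigma$ by deleting $v$ and reinserting it at rank $t$ (this is the pair $(\sigma, \sigma')$ of Lemma~\ref{lem:technical}, with $\sigma' = \sigma^{(t)}$). The key distributional observation is that $\sigma^{(t)}$ is again uniformly distributed on permutations of $V$: conditioned on the identity of $v$, the relative order of the other $n-1$ vertices is uniform and $v$ sits in slot $t$; since $v$ itself is uniform and independent of that relative order, the resulting permutation is uniform. Consequently $\pr\bracks{v \text{ is unmatched by } \ranking(\sigma^{(t)})} = 1 - x_t$, directly from the definition of $x_t$.

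Next I would invoke Lemma~\ref{lem:technical} with this $u$, $v = f(u)$, and this $t$: whenever $v$ is unmatched by $\ranking(\sigma^{(t)})$, the vertex $u$ is matched by $\ranking(\sigma)$ to some vertex of rank at most $t$ in $\sigma$. This gives $1 - x_t \leq \pr\bracks{u \text{ is matched by } \ranking(\sigma) \text{ to a vertex of rank} \leq t}$. To evaluate the right-hand side, note that for any fixed $\sigma$ the set of $u' \in U_H$ that $\ranking(\sigma)$ matches to a vertex of rank $\leq t$ is in bijection with the set of rank-$\leq t$ vertices of $V$ that $\ranking(\sigma)$ matches, so its cardinality is $\sum_{s=1}^t \one[\text{rank-}s \text{ vertex matched}]$. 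Averaging over $u$ and then over $\sigma$ yields $\pr\bracks{\,\cdot\,} = \frac{1}{|U_H|}\sum_{s=1}^t x_s = \frac{1}{kn}\sum_{s=1}^t x_s$, which is exactly \eqref{eqn:rankingmain}.

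The only genuinely delicate point, and thus the main obstacle, is the uniformity claim for $\sigma^{(t)}$ in the second step: making precise that ``pick $v$ uniformly, then slide it to slot $t$ keeping everyone else's relative order fixed'' reproduces the uniform distribution, while keeping straight which permutation plays the role of $\sigma$ versus $\sigma'$ in Lemma~\ref{lem:technical}. Everything else is bookkeeping --- the counting identity in the last step, and the observation that the $k$-to-one map $f$ makes $v = f(u)$ uniform, which is precisely what produces the factor $\tfrac{1}{kn}$ in place of $\tfrac{1}{n}$.
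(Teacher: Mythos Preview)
Your proposal is correct and follows essentially the same approach as the paper's own proof: both run the Birnbaum--Mathieu exchange argument, invoking Lemma~\ref{lem:technical} on a uniformly random pair $(u,v)$ with $v=f(u)$ and the resampled permutation $\sigma^{(t)}$, and then count matched vertices of rank at most $t$ to extract the factor $1/(kn)$. The only cosmetic difference is that the paper samples $v$ uniformly and then $u$ uniformly from $f^{-1}(v)$, whereas you sample $u$ uniformly and set $v=f(u)$; since $f$ is exactly $k$-to-one these give the same joint distribution, so the arguments are identical.
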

\begin{proof}
Let $\sigma$ be permutation chosen uniformly at random, and let
$\sigma'$ be a permutation obtained from $\sigma$ by choosing a vertex
$v \in V$ uniformly at random, taking it out of $\sigma$, and putting
it back so that its rank is $t$.  Note that both $\sigma$ and
$\sigma'$ are distributed uniformly at random among all permutations.
Let $u$ be a vertex chosen uniformly at random from $f^{-1}(v)$.  Note
that conditioned on $\sigma$, $u$ is equally likely to be any of the
$kn$ vertices in $U_H$.  Let $R_t$ be the set of vertices in $U_H$
that are matched by $\ranking(\sigma)$ to a vertex of rank $t$ or
lower in $\sigma$.  Lemma~\ref{lem:technical} states that if $v$ is
not matched by $\ranking(\sigma')$, then $u \in R_t$.  The expected
size of $R_t$ is $\sum_{1 \leq s \leq t} x_s$.  Hence, the probability
that $u \in R_t$, conditioned on $\sigma$, is $(1/(kn)) \sum_{1 \leq s
  \leq t} x_s$.  The lemma follows because the probability that $v$ is
not matched by $\ranking(\sigma')$ is $1 - x_t$.
\end{proof}

We are now ready to prove Theorem~\ref{thm:ranking}.
\begin{proof}[Proof of Theorem~\ref{thm:ranking}]
For $0 \leq t \leq n$, let $S_t = \sum_{1 \leq s \leq t} x_s$.  Then
the expected size of the matching returned by Ranking on $H$ is $S_n$.
Rearranging (\ref{eqn:rankingmain}) yields, for $1 \leq t \leq n$,
\begin{equation*}
S_t \geq \paren{\frac{kn}{kn+1}}\paren{1 + S_{t-1}},
\end{equation*}
which by induction implies that $S_t \geq \sum_{1 \leq s \leq t}
(kn/(kn+1))^s$, and hence
\begin{equation*}
S_n \geq \sum_{s = 1}^n \paren{\frac{kn}{kn+1}}^s
= kn \paren{1 - \paren{1 - \frac{1}{kn+1}}^n}
= kn \paren{1 - \frac{1}{e^{1/k}} + o(1)} \enspace .
\end{equation*}
\end{proof}

\end{document}